\newcolumntype{P}[1]{>{\endgraf\vspace*{-\baselineskip}}p{#1}}
\newcommand\restartchapters{\par
  \setcounter{chapter}{0}%
  \setcounter{section}{0}%
  \gdef\@chapapp{\chaptername}%
  \gdef\thechapter{\@arabic\c@chapter}}
\newtheorem{theorem}{\bf {Theorem}}
\newtheorem{remark}{{\bf{Remark}}}
\newtheorem{definition}{\bf {Definition}}
\newtheorem{lemma}{\bf {Lemma}}
\newtheorem{procedure}{\bf {Procedure}}
\newcommand{\tr}{{\mathtt{Tr}}}
\newcommand{\diag}{{\mathtt{diag}}}
\newcommand{\st}{{\mathrm{s.t.}}}
\newcommand{\thmend}{\hspace*{\fill}~\QEDopen\par\endtrivlist\unskip}
\newcommand{\AP}{\mathtt{AP}_{m}}
\newcommand{\RSI}{\mathtt{RSI}}
\newcommand{\APx}[1]{\mathtt{AP}_{#1}}
\newcommand{\ULU}{\mathtt{U}_{\ell}^{\mathtt{u}}}
\newcommand{\ul}{\mathtt{u}}
\newcommand{\SI}{\mathtt{SI}}
\newcommand{\AtoA}{\mathtt{AA}}
\newcommand{\dl}{\mathtt{d}}
\newcommand{\DLU}{\mathtt{U}_{k}^{\mathtt{d}}}
\newcommand{\DLUi}[1]{\mathtt{U}_{#1}^{\mathtt{d}}}
\newcommand{\cM}{\mathcal{M}}
\newcommand{\cK}{\mathcal{K}}
\newcommand{\cL}{\mathcal{L}}
\newcommand{\tZF}{\mathtt{ZF}}
\newcommand{\tIZF}{\mathtt{IZF}}
\newcommand\argmax{\operatornamewithlimits{argmax}}
\let\mybibitem\bibitem
\renewcommand{\bibitem}[1]{%
  \ifstrequal{#1}{ZhangMAT5G2019}
    {\color{black}\mybibitem{#1}}
    {\color{black}\mybibitem{#1}}%
}
\newcommand*{\hili}{\color{black}}
\newcommand*{\majrev}{\color{black}}
\g@addto@macro\normalsize{%
 \setlength\abovedisplayskip{1.1pt}
 \setlength\belowdisplayskip{1.1pt}
 \setlength\abovedisplayshortskip{1.1pt}
 \setlength\belowdisplayshortskip{1.1pt}
}
\newcommand{\subparagraph}{}
\titlespacing{\section}{0pt}{2pt}{0pt}
\begin{document}
\bstctlcite{IEEEexample:BSTcontrol}

\title{{\hili On the Spectral and Energy Efficiencies of Full-Duplex Cell-Free Massive MIMO}}
\author{
	\IEEEauthorblockN{Hieu V. Nguyen, Van-Dinh Nguyen, Octavia~A.~Dobre, Shree Krishna Sharma, \\ Symeon Chatzinotas, Bj$\ddot{\text{o}}$rn Ottersten,  and Oh-Soon Shin \vspace{-20pt}}
	\\
	\thanks{H. V. Nguyen and O.-S. Shin are with the School of Electronic Engineering \& Department of ICMC Convergence Technology, Soongsil University, Seoul 06978, South Korea (e-mail: \{hieuvnguyen, osshin\}@ssu.ac.kr).}
	\thanks{V.-D. Nguyen is with the Interdisciplinary Centre for Security, Reliability and Trust (SnT) – University of Luxembourg, L-1855 Luxembourg. He was with the Department of ICMC Convergence Technology, Soongsil University, Seoul 06978, South Korea  (email: dinh.nguyen@uni.lu). }	
	\thanks{O.~A.~Dobre is with the Faculty of Engineering and Applied Science, Memorial University, St. John's, NL A1X3C5, Canada  (e-mail: odobre@mun.ca).}
	\thanks{S. K. Sharma, S. Chatzinotas, and B. Ottersten are with the Interdisciplinary Centre for Security, Reliability and Trust (SnT) – University of Luxembourg, L-1855 Luxembourg (e-mail: \{shree.sharma, symeon.chatzinotas, bjorn.ottersten\} @uni.lu)}
	}

\maketitle
\begin{abstract}
In-band full-duplex (FD) operation {\hili is practically more suited for short-range communications such as WiFi and small-cell networks, due to its current practical limitations on the self-interference cancellation}. In addition,   cell-free massive multiple-input multiple-output (CF-mMIMO) is a new and scalable version of MIMO networks, which is designed to bring  service antennas closer to  end  user equipments (UEs). To achieve higher  spectral and energy efficiencies (SE-EE) of a wireless network, it is of practical interest  to incorporate FD capability into CF-mMIMO systems to utilize their combined benefits. 
We formulate a novel and comprehensive optimization problem for  the maximization of SE and EE in which  power control, access point-UE (AP-UE) association and AP selection  are jointly optimized under a realistic power consumption model, resulting in a difficult class of mixed-integer nonconvex programming. To tackle the binary nature of the formulated problem, we propose  an efficient approach by exploiting a strong coupling between binary and continuous variables, leading to a  more tractable  problem. In this regard, two low-complexity transmission designs based on zero-forcing (ZF) are proposed. Combining tools from  inner approximation framework and  Dinkelbach method, we develop  simple iterative  algorithms with polynomial computational complexity in each iteration and  strong theoretical performance guaranteed.  
Furthermore, towards a robust design for FD CF-mMIMO, a novel heap-based pilot assignment algorithm is proposed to mitigate effects of pilot contamination.
Numerical results show that our proposed designs with realistic parameters significantly outperform the well-known approaches (i.e.,  small-cell and  collocated mMIMO) in terms of the SE and EE. Notably, the proposed ZF designs require much less  execution time than  the simple maximum ratio transmission/combining.
\end{abstract}
\begin{IEEEkeywords}
Cell-free massive multiple-input multiple-output,  energy efficiency, full-duplex radio,  inner approximation, spectral efficiency, successive interference cancellation. 
\end{IEEEkeywords}

\newpage

\section{Introduction} \label{Introduction} 
Peak data rates in the order of tens of Gbits/s, massive connectivity and seamless area
coverage requirement along with different use cases  are expected in beyond 5G networks \cite{Cisco2017_CVNI,Osseiran:Cambridge:2016,Yadav:IEEEWirelessComm:Aug2018}.   Multiple-antenna technologies, which offer extra degrees-of-freedom (DoF) have been a key element to provide huge spectral efficiency gains of modern wireless communication systems \cite{ZhangMAT5G2019,Chatzinotas:IEEETWC:July2009}. However, multiple-antenna systems based on  half-duplex (HD) radio will apparently reach their capacity limits in near future due to under-utilization of radio resources. In addition, the use of multiple-antenna also causes a serious concern over  the global climate and tremendous electrical costs due to the number of associated radio frequency (RF) elements \cite{BuzziJSAC16}. Consequently, spectral efficiency (SE) and energy efficiency (EE) will certainly be considered as major figure-of-merit in the design of beyond 5G networks.

In-band full-duplex (FD) has been envisaged as a key enabling technology to improve the SE of traditional wireless communication systems \cite{Yadav:IEEEVTM:June2018,Sharma:IEEECST:2018}. By enabling downlink (DL) and uplink (UL) transmissions on the same time-frequency resource, FD radios  are expected to increase the SE of a wireless link over its HD counterparts by a factor close to two \cite{Sabharwal:JSAC:Feb2014,GoyalCMag15}. The main barrier in implementing FD  is the self-interference (SI) that leaks from the transmitter to its own receiver on the same device. Fortunately, recent advances in  active and passive SI suppression techniques have been successful to  bring the SI power at the background noise level \cite{Bharadia13,Bharadia14},  thereby making FD a realistic technology for modern wireless systems. However,  there still exists a small, but not negligible, amount of  SI due to  imperfect SI suppression, referred to as  \textit{residual} SI (RSI). As a result,  FD-enabled base station (BS) systems  have been widely studied in small-cell (SC) cellular networks \cite{Yadav:Access,Dan:TWC:14,Dinh:Access,Hieu:IEEETWC:June2019,Hieu:IEEETCOM:June2019, Dinh:JSAC:18,Dinh:TCOMM:2017,Tam:TCOM:16,Aquilina:TCOMM:2017}, {\hili where the residual SI  can be further handled by  power control algorithms.}

{\majrev Recently, a new concept of multiple-input multiple-output (MIMO) networks and distributed antenna systems (DAS), called cell-free massive MIMO (CF-mMIMO), has been proposed to overcome the inter-cell interference,
as well as to provide handover-free and balanced quality-of-experience (QoE) services for cell-edge users (UEs) \cite{ZhangMAT5G2019,Ngo:TWC:Mar2017,Nayebi:IEEETWC:Jul2017,InterdonatoGLOBECOM16,Bashar:IEEETWC:Apr2019,Nguyen:IEEELCOM:Aug2017,Ngo:IEEETGCN:Mar2018}. In CF-mMIMO, a very large number of access point (AP) antennas are distributed over a wide area to coherently serve numerous UEs in the same resources; this inherits key characteristics of collocated massive MIMO  (Co-mMIMO) networks, such as  favorable propagation and channel hardening \cite{Marzetta:IEEETWC:Nov2010, Marzetta:Cambridge:2016}. CF-mMIMO  has significantly better performances in terms of SE and EE, compared to small-cell   \cite{Ngo:TWC:Mar2017} and Co-mMIMO networks \cite{Ngo:IEEETGCN:Mar2018}, respectively. It can be easily foreseen that performance gains of CF-mMIMO come from  the joint processing  of a large number of distributed APs at a  central processing unit (CPU). The CPU is essentially the same as the edge-cloud computing in  cloud radio access networks (C-RANs). Thus, C-RAN can be viewed as an enabler of CF-mMIMO \cite{ZhangMAT5G2019}.
 }

\subsection{Motivation}
From the aforementioned reasons, it is not too far-fetched to envisage a wireless system
employing the FD technology in  CF-mMIMO, called FD CF-mMIMO. {\hili It is expected to reap all key advantages of FD and CF-mMIMO, towards enhancing the SE and EE performances of future wireless networks}. More importantly, FD CF-mMIMO can be considered as a practical and promising technology for beyond 5G networks since low-power and low-cost FD-enabled APs are well suited for short-range transmissions between APs and UEs. {\majrev Despite the clear benefits of these two technologies, FD CF-mMIMO poses the following obvious challenges on radio resource allocation problems: $(i)$ Residual SI still remains  a challenging task in the design of FD CF-mMIMO, having a negative impact on its potential performance gains; $(ii)$  A large number of APs and legacy UEs result in stronger inter-AP interference (IAI) and co-channel interference (CCI, caused by the UL transmission to DL UEs), compared to traditional FD cellular networks \cite{Dan:TWC:14,Tam:TCOM:16,Dinh:TCOMM:2017,Aquilina:TCOMM:2017,Dinh:Access,Yadav:Access, Dinh:JSAC:18,Hieu:IEEETWC:June2019,Hieu:IEEETCOM:June2019}; $(iii)$ FD CF-mMIMO  increases the network power consumption due to  additional number of APs.  It has been noted that low power APs consume about 30\% of the total power consumption of a mobile network operator  \cite{AuerWC11}. 
These motivate us to investigate a joint design of precoder/receiver, AP-UE association and AP selection along with an efficient transmission
strategy to attain the optimal SE and EE performances of FD CF-mMIMO systems.}

\subsection{Review of Related Literature} 
FD small-cell systems have been investigated in many prior works. For example, the authors in \cite{Dan:TWC:14} studied a single-cell network with the aim of maximizing the SE under the assumption of perfect channel state information (CSI). This work was generalized in \cite{Dinh:Access} where  user grouping and time allocation were jointly designed. To accelerate the use of FD operation, \cite{Hieu:IEEETWC:June2019} proposed a  half-array antenna mode selection to mitigate the effect of residual SI and CCI by serving UEs in two separate phases. This design is capable of enabling hybrid modes of  HD and FD  to utilize a full-array antenna at the BS. The application of FD to emerging subjects has also been investigated,  including  FD non-orthogonal multiple access \cite{Hieu:IEEETCOM:June2019},   FD physical layer security \cite{Dinh:JSAC:18} and FD wireless-powered MIMO  \cite{ChaliseTCOM17,Dinh:TCOMM:2017}. 
The SE maximization for FD  multi-cell networks  was  considered in \cite{Tam:TCOM:16} and with the worst-case robust design in \cite{Aquilina:TCOMM:2017}, where coordinated multi-point transmission was adopted. It is widely believed that this interference-limited technique can no longer provide a high edge throughput and requires a large amount
of backhaul signaling to be shared among BSs. In addition, the common transmission design used in these works is linear beamforming for DL and  minimum mean square error and successive interference cancellation (MMSE-SIC) receiver for UL. Although such a design can provide a very good performance, it is only suitable for networks of small-to-medium sizes.

CF-mMIMO  has recently received considerable attention.  In particular, the work in \cite{Ngo:TWC:Mar2017} first derived closed-form expressions of DL and UL achievable rates which confirm the SE gain  of the CF-mMIMO over a small-cell system. Assuming mutually orthogonal pilot sequences assigned to  UEs,
\cite{Nayebi:IEEETWC:Jul2017} analyzed impacts of the power allocation for DL transmission using maximum ratio transmission (MRT) and zero-forcing (ZF). The results showed that the achievable per-user rates of CF-mMIMO can be substantially improved, compared to those of small-cell systems. To  further improve the network performance of CF-mMIMO, a beamformed DL training was proposed in \cite{InterdonatoGLOBECOM16}. The authors in \cite{Bashar:IEEETWC:Apr2019} examined the problem of maximizing the minimum signal-to-interference-plus-noise ratio (SINR) of  UL UEs subject to  power constraints. More recently,  the EE problem for DL CF-mMIMO was investigated  using ZF in \cite{Nguyen:IEEELCOM:Aug2017} and  MRT in \cite{Ngo:IEEETGCN:Mar2018}, by taking into account the effects of  power control, non-orthogonality of pilot sequences, channel estimation and  hardware power consumption. Two simple AP selection schemes were also proposed in \cite{Ngo:IEEETGCN:Mar2018} to reduce the backhaul power consumption. The conclusion in these works was that ZF and MRT beamforming methods can offer excellent performance when the number of APs is large.

Despite its potential, there is only a few attempts on characterizing the performance of FD CF-mMIMO in the literature. In this regard, the authors in \cite{Vu:ICC:May2019} analyzed the performance of FD CF-mMIMO with the channel estimation taken into account, where all APs operate in FD mode.  By a simple conjugate beamforming/matched filtering transmission design, it was shown that such a design requires a deep SI suppression to unveil the performance gains of the FD CF-mMIMO system over HD CF-mMIMO one. Tackling the imperfect CSI and spatial correlation was studied in \cite{Wang:arxiv:2019}, showing that  FD CF-mMIMO with a genetic algorithm-based user scheduling strategy is able to help alleviating the CCI and obtaining a significant SE improvement.  However, the effects of SI, IAI and CCI are not fully addressed  in the above-cited works, leading to the need of optimal solutions, which is the focus of this paper.

\subsection{Research Gap and Main Contributions}
Though in-depth results of multiple-antenna techniques were presented for HD \cite{ZhangMAT5G2019} and FD operations \cite{Yadav:Access,Dan:TWC:14,Dinh:Access,Hieu:IEEETWC:June2019,Hieu:IEEETCOM:June2019, Dinh:JSAC:18,Dinh:TCOMM:2017,Tam:TCOM:16,Aquilina:TCOMM:2017}, they are  not very practical for FD CF-mMIMO due to the large size of optimization variables.  In addition,  a direct application of CF-mMIMO in \cite{Ngo:TWC:Mar2017,Nayebi:IEEETWC:Jul2017,InterdonatoGLOBECOM16,Bashar:IEEETWC:Apr2019,Nguyen:IEEELCOM:Aug2017,Ngo:IEEETGCN:Mar2018} to FD CF-mMIMO systems would result in a poor performance since the additional interference (i.e., residual SI, IAI and CCI) is strongly involved in both DL and UL transmissions.  In FD CF-mMIMO systems, the number of APs is large, but still  finite, and thus the effects of SI, IAI and CCI are acute and unavoidable. These issues have not been fully addressed in  \cite{Vu:ICC:May2019,Wang:arxiv:2019}.
Meanwhile, the EE performance in terms of bits/Joule is considered as a key performance indicator for green communications \cite{AuerWC11}, which is even more important in FD CF-mMIMO systems due to a large number of deployed APs. To the authors' best knowledge, the  EE maximization, that takes into
account the effects of   imperfect CSI, power allocation, AP-DL UE association, AP selection,   and  load-dependent power consumption, has not been previously studied for FD CF-mMIMO.

In the above context, this paper considers an FD CF-mMIMO system under time-division duplex (TDD) operation, where  FD-enabled multiple-antenna APs  simultaneously serve many UL and DL UEs on the same time-frequency resources. The network SE and EE of the FD CF-mMIMO system are investigated under a relatively realistic power consumption model. Furthermore, we propose an efficient transmission design for FD CF-mMIMO to  resolve  practical restrictions given above. More precisely, power control and AP-DL UE association are jointly optimized to reduce network interference. Inspired by the ZF method \cite{SpencerTSP04}, two simple, but efficient transmission schemes are proposed. It is well-known that in massive MIMO networks,  ZF-based schemes achieve close performance to the MMSE ones but with much less complexity \cite{Marzetta:Cambridge:2016}. In contrast with \cite{Vu:ICC:May2019} and \cite{Wang:arxiv:2019}, AP selection is also taken into account to preserve the hardware power consumption.  We note that the EE objective function strikes the balance between the SE and total power consumption. {\majrev The main contributions of this paper are summarized as follows:
\begin{itemize}
	\item Aiming at the optimization of SE and EE, we introduce new binary variables to establish AP-DL UE association and AP selection. This design not only helps mitigate network interference (residual SI, IAI and CCI) but also saves power consumption (i.e., some APs can be switched off if necessary). In our system design, APs can be  automatically switched between FD and HD operations, which allows to exploit the full potential of FD CF-mMIMO. 
	\item We  formulate a generalized maximization problem for SE and EE by incorporating various aspects such as joint power control, AP-DL UE association and AP selection, which belongs to the difficult class of mixed-integer nonconvex optimization problem. To develop an unified approach to all  considered transmission schemes, we first provide fundamental insights into the structure of the optimal solutions of binary variables, and then transform the original problem into a more tractable form. 
\item	Two low-complexity transmission schemes for DL/UL are proposed, \textit{namely} ZF and improved ZF (IZF) by employing principal component analysis (PCA) for DL and SIC for UL. The latter based on the orthonormal basis (ONB) is capable of  canceling multiuser interference (MUI)  and further alleviating residual SI and IAI.
We then employ the combination of  inner convex approximation (ICA) framework \cite{Marks:78,Beck:JGO:10} and Dinkelbach method \cite{Dinkelbach67}   to develop iterative algorithms of low-computational complexity, which converge rapidly to the optimal solutions as well as require much lower execution time than  the simple maximum ratio transmission/combining (MRT/MRC).
\item Towards practical applications, we further consider a robust design under channel uncertainty, where the UL training is taken into account. To this end, we develop a novel heap-based pilot assignment algorithm to reduce both the pilot contamination and training complexity.
\item Extensive numerical results confirm  that the proposed algorithms greatly improve the SE and EE performance over the current state-of-the-art approaches, i.e., SC-MIMO and Co-mMIMO under both HD and FD operation modes. It also reveals the effectiveness of joint AP selections in terms of the achieved EE performance.
\end{itemize} 
}

\subsection{Paper Organization and Notation}
The remainder of this paper is organized as follows. The FD CF-mMIMO system model is introduced in Section \ref{SystemModelandProblemFormulation}. The formulation of the optimization problem and the  derivation of its tractable form are provided  in Section \ref{OptimizationProblemDesign}. Two ZF-based transmission designs along the proposed algorithm are presented in Sections \ref{sec:ZF} and \ref{sec:IZF}. The pilot assignment algorithm is presented in Section \ref{sec:Pilo Assignment}.  Numerical results are given in Section \ref{NumericalResults}, while Section \ref{Conclusion} concludes the paper.

\emph{Notation}: Bold lower and upper case letters denote vectors and matrices, respectively. $\mathbf{X}^{T}$ and $\mathbf{X}^{H}$ represent normal transpose and Hermitian transpose of $\mathbf{X}$, respectively. $\tr(\cdot)$, $\|\cdot\|$ and $|\cdot|$ are the trace,  Euclidean norm and  absolute value, respectively.   $ \mathbf{a} \preceq \mathbf{b}  $ stands for  the element-wise comparison of vectors.  $\mathtt{diag}(\mathbf{a})$ returns  the diagonal matrix
with the main diagonal constructed from elements of $\mathbf{a}$.   $\mathbb{C}$ and $\mathbb{R}$ denote the space of complex and real   matrices, respectively. 
Finally, $x\sim\mathcal{CN}(0,\sigma^2)$ and $x\sim\mathcal{N}(0,\sigma^2)$ represent circularly symmetric complex and  real-valued Gaussian random variable with zero mean and variance $\sigma^2$, respectively.

\section{System Model} \label{SystemModelandProblemFormulation}
\subsection{Transmission Model}
\begin{figure}[!h]
	\centering
	\vspace{-5pt}
	\includegraphics[width=0.4\columnwidth,trim={0cm 0.0cm 0cm 0.0cm}]{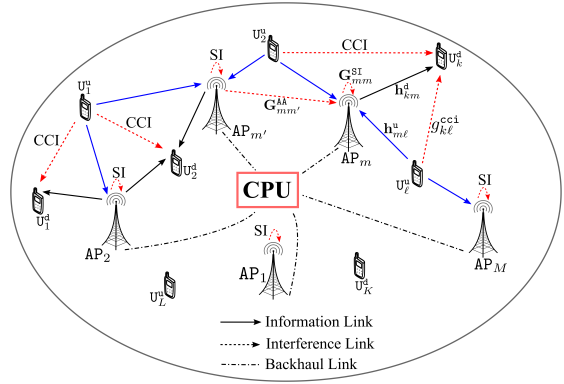}
		\vspace{-5pt}
	\caption{An illustration of the full-duplex cell-free massive MIMO system.}
	\label{fig: system model}
\end{figure}
An FD CF-mMIMO system operated in TDD mode is considered, where the set $\mathcal{M}\triangleq\{1,2,\cdots,M\}$ of $M=|\mathcal{M}|$ FD-enabled APs simultaneously serves the sets  $\mathcal{K}\triangleq\{1,2,\cdots,K\}$ of $K=|\mathcal{K}|$  DL UEs and $\mathcal{L}\triangleq\{1,2,\cdots,L\}$ of $L=|\mathcal{L}|$  UL UEs in the same time-frequency resource, as illustrated in Fig. \ref{fig: system model}. The total number of  APs' antennas is $N=\sum_{m\in\mathcal{M}}N_m $ where $N_m$ is the number of antennas at AP $m$, while each UE has a single-antenna.  We  assume that  APs, DL and UL UEs are randomly placed in a wide  area. All APs are equipped with  FD capability by circulator-based FD radio prototypes \cite{Bharadia13,Bharadia14}, which are connected to the CPU through   perfect backhaul links with sufficiently large capacities (i.e., high-speed optical ones) \cite{Ngo:TWC:Mar2017}. In this paper, we focus on slowly time-varying channels, and thus, conveying the CSI via the backhaul links occurs less frequently than data transmission. 

We assume that data transmission is performed within a coherence interval, which is similar to TDD operation in the context of massive MIMO \cite{Marzetta:Cambridge:2016}. Based on the joint processing at the CPU, the message sent by an UL UE is decoded by aggregating  the received signals from all active APs due to the UL broadcast transmission. In addition, APs are  geographically distributed in a large area, and thus, each DL UE should be served by a subset of active APs with good channel conditions \cite{Ngo:TWC:Mar2017,Ngo:IEEETGCN:Mar2018}. This is done by introducing new binary variables to establish AP-DL UE associations. Such a design offers the following two obvious advantages: $(i)$ improving the  SE for a given  system bandwidth and power budget of APs, while still ensuring the quality of service (QoS) for all UEs; $(ii)$ managing the network interference more effectively. 

For notational convenience, let us denote the $m$-th AP,  $ k $-th DL UE and $\ell $-th UL UE  by $\AP$, $\DLU$ and $\ULU$, respectively. The channel vectors and matrices from  $ \AP  \rightarrow\DLU$, $\ULU\rightarrow \AP $, $\ULU\rightarrow\DLU$ and $ \APx{m'}\rightarrow\APx{m},\forall m'\in\cM $ are denoted by $ \mathbf{h}_{km}^\dl\in \mathbb{C}^{1\times N_m} $, $ \mathbf{h}_{m\ell}^\ul \in \mathbb{C}^{N_m\times1}$, $ g_{k\ell}^{\mathtt{cci}}\in \mathbb{C} $ and $ \mathbf{G}_{mm'}^{\AtoA} \in \mathbb{C}^{N_m\times N_{m'}} $, respectively. Note that $ \mathbf{G}_{mm}^{\AtoA}$ is the SI channel at $\AP$, while $ \mathbf{G}_{mm'}^{\AtoA}, \forall m\neq m'$ is referred to as the inter-AP interference (IAI) channel since   UL signals received at $ \APx{m} $ are corrupted by  DL signals sent from $ \APx{m'} $. The reason for this is that the SI signal can only be suppressed at the local APs \cite{Bharadia13,Bharadia14}.  To differentiate the residual SI and IAI channels, we model $ \mathbf{G}_{mm'}^{\AtoA}$ as follows:
\begin{numcases}{ \mathbf{G}_{mm'}^{\AtoA}=}
\sqrt{\rho_{mm}^{\mathtt{RSI}}}\mathbf{G}_{mm}^{\mathtt{SI}}, & if $m=m'$, \nonumber\\
\mathbf{G}_{mm'}^{\AtoA}, & otherwise, \nonumber
\end{numcases}
where $\mathbf{G}_{mm}^{\mathtt{SI}}$ denotes the fading loop channel at $\AP$ which interferes the UL reception due to the concurrent DL transmission, and $\rho_{mm}^{\mathtt{RSI}}\in[0,1)$ is the residual SI suppression (SiS) level after all real-time cancellations in analog-digital domains \cite{Sabharwal:JSAC:Feb2014,Dinh:TCOMM:2017,Dinh:JSAC:18,Hieu:IEEETWC:June2019}. The fading loop channel $\mathbf{G}_{mm}^{\mathtt{SI}}$ can be  characterized as the Rician probability distribution
with a small Rician factor \cite{Duarte:TWC:12}, while other channels are generally modeled as $ \mathbf{h}=\sqrt{\beta}\mathbf{\ddot{h}} $ with $ \mathbf{h}\in\{\mathbf{G}_{mm'}^{\AtoA}, \mathbf{h}_{km}^{\dl}, \mathbf{h}_{m\ell}^{\ul}, g_{k\ell}^{\mathtt{cci}} \} $,  accounting for the effects
of large-scale fading $\beta$ (i.e., path loss and shadowing) and small-scale fading $\mathbf{\ddot{h}} $ whose elements follow independent and identically distributed (i.i.d.) $\mathcal{CN}(0,1)$ random variables (RVs).

\subsubsection{Downlink Data Transmission}
Let us denote by $ x_{k}^\dl $ and $ x_{\ell}^\ul$ the data symbols with unit power (i.e., $ \mathbb{E}\bigl[|x_{k}^{\dl}|^2\bigr]=1$ and $\mathbb{E}\bigl[|x_{\ell}^{\ul}|^2\bigr]=1$) intended for $\DLU$ and sent from $\ULU$, respectively. The beamforming vector $ \mathbf{w}_{km}\in \mathbb{C}^{N_m \times 1} $ is employed to precode the data symbol $ x_{k}^\dl $ of $\DLU$ in the DL, while $ p_{\ell}$ denotes the transmit power  of $\ULU$ in the UL. After performing a joint radio resource management algorithm at the CPU, the data of $\DLU $ is routed to $\AP$ via the $m$-th backhaul link only if $\|\mathbf{w}_{km}\|> 0$. To do so, let us introduce new binary variables $ \alpha_{km}\in\{0,1\}, \forall k\in\mathcal{K}, m\in\mathcal{M}$ to  represent the association relationship between $\AP$ and $\DLU$, i.e.,
 $ \alpha_{km}=1 $ implying that $\DLU$ is served by $ \AP $ and $ \alpha_{km}=0 $, otherwise.  Using these notations, the  signal received at  $\DLU$ can be expressed as
\begingroup\allowdisplaybreaks\begin{align} \label{eq: received signal at DLU}
	y_{k}^\dl = \sum\nolimits_{m\in\mathcal{M}}\alpha_{km}\mathbf{h}_{km}^\dl\mathbf{w}_{km}x_{k}^\dl + \underbrace{\sum\nolimits_{\ell\in\mathcal{L}} \sqrt{p_{\ell}} g_{k\ell}^{\mathtt{cci}} x_{\ell}^\ul}_{\text{CCI}}
	+ \underbrace{\sum\nolimits_{m\in\mathcal{M}}\sum\nolimits_{k'\in\mathcal{K}\setminus\{k\}}\alpha_{k'm}\mathbf{h}_{km}^\dl\mathbf{w}_{k'm}x_{k'}^\dl}_{\text{MUI}}  + n_{k}, 
\end{align}\endgroup
where $ n_k\sim \mathcal{CN}(0,\sigma_{k}^2) $ is the additive white Gaussian noise (AWGN), and $\sigma_{k}^2$ is the noise variance. By treating
MUI and CCI as noise, the received SINR at  $ \DLUi{k} $ is given as
\begin{align} \label{eq: DL SINR - general}
\gamma_{k}^{\dl}(\mathbf{w}, \mathbf{p},\boldsymbol{\alpha}) = \frac{\sum_{m\in\mathcal{M}}\alpha_{km}|\mathbf{h}_{km}^\dl\mathbf{w}_{km}|^2 }{\chi_k(\mathbf{w}, \mathbf{p},\boldsymbol{\alpha})},
\end{align}
where $\chi_k(\mathbf{w}, \mathbf{p},\boldsymbol{\alpha})\triangleq\sum_{m\in\mathcal{M}}\sum_{k'\in\mathcal{K}\setminus\{k\}}\alpha_{k'm}|\mathbf{h}_{km}^\dl\mathbf{w}_{k'm}|^2+\sum_{\ell\in\mathcal{L}}p_{\ell}|g_{k\ell}^{\mathtt{cci}}|^2+\sigma_{k}^2$, $\mathbf{w}\triangleq[\mathbf{w}_{1}^H,\cdots,\mathbf{w}_K^H]^H\in\mathbb{C}^{NK\times 1}$ with $\mathbf{w}_k\triangleq[\mathbf{w}_{k1}^H,\cdots,\mathbf{w}_{kM}^H]^H\in\mathbb{C}^{N\times 1}$, $\mathbf{p}\triangleq[p_{1},\cdots,p_L]^T\in\mathbb{R}^{L\times 1}$, and $\boldsymbol{\alpha}\triangleq\{\alpha_{km}\}_{\forall k\in\mathcal{K}, m\in\mathcal{M}}$. We note that in \eqref{eq: DL SINR - general}, $ \alpha_{km} $ is equal to $ \alpha_{km}^2 $ for any $\alpha_{km}\in\{0,1\}$.

\vspace{5pt}
\subsubsection{Uplink Data Transmission}
\vspace{5pt}
The received signal at $ \AP $ can be expressed as
\begingroup
\allowdisplaybreaks{\small\begin{IEEEeqnarray}{rCl} \label{eq: received signal at AP}
\mathbf{y}_m^{\ul} & = \sum\nolimits_{\ell\in\cL} \sqrt{p_{\ell}}  \mathbf{h}_{m\ell}^{\ul} x_{\ell}^{\ul}  + \underbrace{\sum\nolimits_{ \substack{m'\in\cM}}\sum\nolimits_{k\in\cK} \alpha_{km'} \mathbf{G}_{mm'}^{\AtoA} \mathbf{w}_{km'} x_{k}^{\dl}}_{\text{RSI + IAI}}  + \mathbf{n}_m,\quad
\end{IEEEeqnarray}}\endgroup
 where $ \mathbf{n}_m\sim \mathcal{CN}(0,\sigma_{\mathtt{AP}}^2\mathbf{I}) $ is the AWGN. The CPU aggregates  the received signals from all APs, and  the $ \ULU $'s signal is then extracted by using a specific receiver. In general, let us denote the receiver vector to decode the $ \ULU $'s message received  at $ \APx{m} $ by $\mathbf{a}_{m\ell}\in\mathbb{C}^{1\times N_m}$, and thus, the received signal of $ \ULU $ at $ \APx{m} $ can be expressed as $r_{m\ell}^{\ul}=\mathbf{a}_{m\ell}\mathbf{y}_m^{\ul}$.
Consequently, the post-detection signal for decoding the $ \ULU $'s signal is $r_{\ell}^{\ul}=\sum_{m\in\mathcal{M}}r_{m\ell}^{\ul}$.  By defining $\mathbf{h}_{\ell}^{\ul}\triangleq\bigl[(\mathbf{h}_{1\ell}^{\ul})^H,\cdots,(\mathbf{h}_{M\ell}^{\ul})^H\bigr]^H\in\mathbb{C}^{N\times 1}$, $ \mathbf{\bar{G}}_{m'}^{\AtoA}\triangleq\bigl[(\mathbf{G}_{1m'}^{\AtoA})^H,\cdots,(\mathbf{G}_{Mm'}^{\AtoA})^H\bigr]^H\in\mathbb{C}^{N\times N_{m'}}$,  $ \mathbf{a}_{\ell}=[\mathbf{a}_{1\ell},\cdots,\mathbf{a}_{M\ell}] \in\mathbb{C}^{1\times N}$ and  $\mathbf{n}\triangleq[\mathbf{n}_1^H,\cdots,\mathbf{n}_M^H]^H\in\mathbb{C}^{N\times 1}$, the SINR in decoding   $ \ULU $'s message is given as
\begin{IEEEeqnarray}{rCl} \label{eq: UL SINR - general}
	\gamma_{\ell}^{\ul}(\mathbf{w}, \mathbf{p},\boldsymbol{\alpha}) = \frac{ p_{\ell}|\mathbf{a}_{\ell}  \mathbf{h}_{\ell}^{\ul}|^2 }{\sum_{\ell'\in\mathcal{L}\setminus\{\ell\}} p_{\ell'} |\mathbf{a}_{\ell}  \mathbf{h}_{\ell'}^{\ul}|^2 + \mathcal{I}^{\AtoA}_{\ell} + \sigma_{\mathtt{AP}}^2\|\mathbf{a}_{\ell}\|^2 },\quad
\end{IEEEeqnarray}
where $ \mathcal{I}^{\AtoA}_{\ell}\triangleq \sum_{m'\in\mathcal{M}}\sum_{k\in\mathcal{K}} \alpha_{km'} |\mathbf{a}_{\ell} \mathbf{\bar{G}}_{m'}^{\AtoA}\mathbf{w}_{km'}|^2 $ is the aggregation of  RSI and IAI.

\subsection{Power Consumption Model}
We now present a power consumption model that accounts for  power consumption for data transmission and baseband processing, as well as circuit operation  \cite{Tombaz:IEEEWC:Oct2011, Bjornson:IEETWC:June2015}. As previously discussed, we introduce new binary variables $ \mu_{m}\in\{0,1\}, \forall  m\in\mathcal{M}$ to represent operation statuses of $\AP$. In particular, $ \APx{m} $ is selected to be in the active mode if $ \mu_{m}=1 $, and switched to sleep mode otherwise. With $ \boldsymbol{\mu}\triangleq\{\mu_{m}\}_{\forall m\in\mathcal{M}} $,
the total power consumption is generally written as
\begin{align}\label{eq:total power consumption}
P_{\mathtt{T}}(\mathbf{w}, \mathbf{p},\boldsymbol{\alpha},\boldsymbol{\mu}) = P_{\mathtt{D}}(\mathbf{w}, \mathbf{p},\boldsymbol{\alpha},\boldsymbol{\mu}) + P_{\mathtt{C}}(\boldsymbol{\mu}),
\end{align}
where $ P_{\mathtt{D}}(\mathbf{w}, \mathbf{p},\boldsymbol{\alpha},\boldsymbol{\mu}) $ is the power consumption for data transmission and baseband processing, and $ P_{\mathtt{C}}(\boldsymbol{\mu}) $ is the power consumption for circuit operation; these are detailed as follows:
\begin{itemize}
	\item The  power consumption  $P_{\mathtt{D}}(\mathbf{w}, \mathbf{p},\boldsymbol{\alpha},\boldsymbol{\mu})$  can be sub-categorized into three main types as:
	\begingroup\allowdisplaybreaks\begin{IEEEeqnarray}{rCl} \label{eq: power consump. for data}
	P_{\mathtt{D}}(\mathbf{w}, \mathbf{p},\boldsymbol{\alpha},\boldsymbol{\mu}) = \underbrace{\sum\nolimits_{m\in\mathcal{M}} \frac{\mu_{m}}{\nu_{\AP}} \sum\nolimits_{k\in\mathcal{K}} \|\mathbf{w}_{km}\|^2 + \sum\nolimits_{\ell\in\mathcal{L}}\frac{p_{\ell}}{\nu_{\ell}^{\ul}}}_{\text{radiated power}} 
	+ \underbrace{B\cdot F_{\mathtt{SE}}(\mathbf{w}, \mathbf{p},\boldsymbol{\alpha})\cdot P^{\mathtt{bh}}}_{\text{load-dependent power}} \qquad\nonumber \\
	+ \underbrace{\sum\nolimits_{m\in\mathcal{M}}\sum\nolimits_{k\in\mathcal{K}}\mu_{m}\alpha_{km}P_{km}^{\dl} + L\sum\nolimits_{m\in\mathcal{M}}\mu_{m}P_{m}^{\ul}}_{\text{baseband power}}.\qquad
	\end{IEEEeqnarray}\endgroup
The \textit{radiated power}	is the power consumption for the transmitted data between APs and UEs, where  $\nu_{\AP}\in[0,1] \text{ and } \nu_{\ell}^{\ul}\in[0,1]$ are the power amplifier (PA) efficiencies at $ \APx{m} $ and $ \ULU $ depending on the design techniques and operating conditions of the
PA \cite{AuerWC11}. {\majrev The \textit{load-dependent power} is the power consumption spent  to transfer the data between  APs
and  CPU in the  backhaul which is proportional to the achievable sum rates \cite{Bjornson:IEETWC:June2015}, where $ B $, $F_{\mathtt{SE}}(\mathbf{w}, \mathbf{p},\boldsymbol{\alpha}) $ and $ P^{\mathtt{bh}} $ are the system bandwidth, total SE and  average backhaul traffic power of all  links, respectively.} The \textit{baseband power} is the required power for data processing, waveform design, sync and precoder/receiver computing for $ \DLUi{k} $ (denoted by $ P_{km}^{\dl} $) and per-UL-user reception (denoted by $ P_{m}^{\ul} $). It is obvious that when $\mu_m=0$, $P_{km}^{\dl} =  P_{m}^{\ul} = 0$.
	\item The power consumption $P_{\mathtt{C}}(\boldsymbol{\mu})$ can be modeled as 
\begin{IEEEeqnarray}{rCl}
	P_{\mathtt{C}}(\boldsymbol{\mu}) =&& \sum_{m\in\mathcal{M}} \mu_{m}P_{\AP}^{\mathtt{a}} + \sum_{m\in\mathcal{M}} (1-\mu_{m})P_{\AP}^{\mathtt{s}} 
	+ \sum_{m\in\mathcal{M}}P_{\mathtt{AP}_m}^{\mathtt{cir}} + \sum_{k\in\mathcal{K}}P_{k}^{\dl,\mathtt{cir}} + \sum_{\ell\in\mathcal{L}}P_{\ell}^{\ul,\mathtt{cir}},\label{curcirpower}
	\end{IEEEeqnarray}
	where $ P_{\AP}^{\mathtt{a}} $ and $ P_{\AP}^{\mathtt{s}} $ are the fixed powers to keep $ \APx{m} $ in the active  and sleep modes, respectively; $ P_{\mathtt{AP}_m}^{\mathtt{cir}} $, $ P_{k}^{\dl,\mathtt{cir}} $ and $ P_{\ell}^{\ul,\mathtt{cir}} $ are the powers required for circuit operation at $ \AP $, DL and UL UEs, respectively. 
\end{itemize}
	
\section{Optimization Problem Design}\label{OptimizationProblemDesign}
\subsection{Original Problem Formulation}
From \eqref{eq: DL SINR - general} and \eqref{eq: UL SINR - general}, the SE in nats/s/Hz is given as
	\begin{align}
	F_{\mathtt{SE}}\bigl(\mathbf{w}, \mathbf{p},\boldsymbol{\alpha}\bigr)&\triangleq \sum\nolimits_{k\in\mathcal{K}}R\bigl(\gamma_{k}^{\dl}(\mathbf{w}, \mathbf{p},\boldsymbol{\alpha})\bigr)
	+\sum\nolimits_{\ell\in\mathcal{L}}R\bigl(\gamma_{\ell}^{\ul}(\mathbf{w}, \mathbf{p},\boldsymbol{\alpha})\bigr), \label{eq: F-SE} 
	\end{align}
	where $R(x)\triangleq \ln(1+x)$. {\majrev The EE in nats/Joule is defined as the ratio between the sum throughput (nats/s) and  the total power consumption (Watt):}
\begin{align}
	F_{\mathtt{EE}}\bigl(\mathbf{w}, \mathbf{p},\boldsymbol{\alpha},\boldsymbol{\mu}\bigr)&\triangleq \frac{B\cdot F_{\mathtt{SE}}(\mathbf{w}, \mathbf{p},\boldsymbol{\alpha})}{P_{\mathtt{T}}(\mathbf{w}, \mathbf{p},\boldsymbol{\alpha},\boldsymbol{\mu})}. \label{eq: F-EE}
\end{align} 
To lighten the notations, the system bandwidth $B$ will be omitted in the derivation of the algorithms in this paper, without affecting the optimal solutions. 
By introducing the constant $ \eta \in \{0,1\} $  for the objective-function selection between the SE and EE,  the utility function can be written as
\begin{align}
F(\mathbf{w}, \mathbf{p},\boldsymbol{\alpha},\boldsymbol{\mu})=\eta F_{\mathtt{SE}}(\mathbf{w}, \mathbf{p},\boldsymbol{\alpha}) + (1-\eta) F_{\mathtt{EE}}(\mathbf{w}, \mathbf{p},\boldsymbol{\alpha},\boldsymbol{\mu}). \nonumber
\end{align}
It is worth mentioning that if $\eta$ = 1 ($\eta$ = 0, respectively), we arrive at the SE maximization problem (the EE maximization, respectively).

 Assuming perfect CSI between APs and  UEs, we study a joint design of  power control, AP-DL UE association
and AP selection, which is formulated as
\begingroup
\allowdisplaybreaks\begin{subequations} \label{eq: prob. general form bi-obj. trade-off}
	\begin{IEEEeqnarray}{cl}
		\underset{\mathbf{w}, \mathbf{p},\boldsymbol{\alpha},\boldsymbol{\mu}}{\max} &\quad  F(\mathbf{w}, \mathbf{p},\boldsymbol{\alpha},\boldsymbol{\mu}) \label{eq: prob. general form bi-obj. trade-off :: a} \\
		\st & \quad \mu_{m} \in \{0,1\}, \; \forall m \in \mathcal{M}, \label{eq: prob. general form bi-obj. trade-off :: f} \\
		& \quad \alpha_{km} \in \{0,1\}, \; \forall k \in \mathcal{K},m \in \mathcal{M}, \label{eq: prob. general form bi-obj. trade-off :: g} \\
		& \quad \|\mathbf{w}_{km}\|^2 \leq \alpha_{km} P_{\mathtt{AP}_m}^{\max} , \; \forall k \in \mathcal{K},m \in \mathcal{M}, \label{eq: prob. general form bi-obj. trade-off :: h} \\
		&\quad \sum\nolimits_{k\in\cK}\alpha_{km}\|\mathbf{w}_{km}\|^2 \leq \mu_{m} P_{\mathtt{AP}_m}^{\max},\;\forall m\in\mathcal{M}, \label{eq: prob. general form bi-obj. trade-off :: b} \qquad\\
		& \quad 0 \leq p_{\ell} \leq P_{\ell}^{\max},\; \forall \ell \in \mathcal{L}, \label{eq: prob. general form bi-obj. trade-off :: c} \\
		& \quad 
		R\bigl(\gamma_{k}^{\dl}(\mathbf{w}, \mathbf{p},\boldsymbol{\alpha})\bigr) \geq \bar{R}_{k}^{\dl}, \; \forall  k \in \mathcal{K}, \label{eq: prob. general form bi-obj. trade-off :: d} \\
		&  \quad
		R\bigl(\gamma_{\ell}^{\ul}(\mathbf{w}, \mathbf{p},\boldsymbol{\alpha})\bigr) \geq \bar{R}_{\ell}^{\ul}, \; \forall \ell \in \mathcal{L}. \label{eq: prob. general form bi-obj. trade-off :: e}
	\end{IEEEeqnarray}							
\end{subequations}\endgroup
 {\hili Constraint \eqref{eq: prob. general form bi-obj. trade-off :: h} is used to express the AP-UE association, while constraints} \eqref{eq: prob. general form bi-obj. trade-off :: b} and \eqref{eq: prob. general form bi-obj. trade-off :: c} imply that the  transmit powers at  $\AP$ and $\DLU$ are limited by their maximum power budgets $ P_{\mathtt{AP}_m}^{\max} $ and $ P_{\ell}^{\max} $, respectively.  Moreover, constraints \eqref{eq: prob. general form bi-obj. trade-off :: d} and \eqref{eq: prob. general form bi-obj. trade-off :: e} are used to ensure the predetermined rate requirements $\bar{R}_{k}^{\dl}$ and $\bar{R}_{k}^{\ul}$ for $\DLU$ and $\ULU$, respectively. We can see that the objective \eqref{eq: prob. general form bi-obj. trade-off :: a} is  nonconcave and the feasible set is also nonconvex. Hence, problem \eqref{eq: prob. general form bi-obj. trade-off} is a mixed-integer nonconvex optimization problem due to  binary variables involved, which is generally NP-hard.

\vspace{-10pt}
\subsection{Tractable Problem Formulation for \eqref{eq: prob. general form bi-obj. trade-off}}
The major difficulty in solving  problem \eqref{eq: prob. general form bi-obj. trade-off} is due to  binary variables involved. It is not practical to try all 
possible AP-DL UE associations and AP selections, especially for networks of large size. In addition, the strong coupling between  continuous variables $(\mathbf{w}, \mathbf{p})$ and  binary variables ($\boldsymbol{\alpha},\boldsymbol{\mu}$) makes problem \eqref{eq: prob. general form bi-obj. trade-off} even more difficult.  Consequently, the problem is intractable and it is impossible to solve it directly. Even for a fixed ($\boldsymbol{\alpha},\boldsymbol{\mu}$), a direct application of the well-known Dinkelbach  algorithm \cite{Dinkelbach67} for \eqref{eq: prob. general form bi-obj. trade-off} still involves a nonconvex  problem, and thus, its convergence may not be always guaranteed \cite{BuzziJSAC16}. In what follows, we present a tractable form of \eqref{eq: prob. general form bi-obj. trade-off} by exploiting the special relationship between  continuous   and  binary variables, based on which the combination of  ICA method and Dinkelbach transformation can be applied to solve it efficiently for various transmission strategies.
\subsubsection{Binary Reduction of $\boldsymbol{\alpha}$}
The binary variables $\boldsymbol{\alpha}$ and the continuous variables $\mathbf{w}$ are strongly coupled, as revealed by the following lemma.
\begin{lemma} \label{thm: relationship alpha and beamforming vector}
	If problem \eqref{eq: prob. general form bi-obj. trade-off} contains the optimal solution $ \alpha_{km}^{*}=0 $ for some ($k,m$),  it also admits $\mathbf{w}_{km}=\mathbf{0}$ as an optimal solution for the corresponding beamforming vector.
\end{lemma}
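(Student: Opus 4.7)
The plan is to argue that $\mathbf{w}_{km}=\mathbf{0}$ is forced (or at worst loses nothing) once $\alpha_{km}^{*}=0$, by inspecting where $\mathbf{w}_{km}$ enters both the feasible set and the utility.

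The first route is nearly immediate from the feasibility constraint \eqref{eq: prob. general form bi-obj. trade-off :: h}. Let $(\mathbf{w}^{*},\mathbf{p}^{*},\boldsymbol{\alpha}^{*},\boldsymbol{\mu}^{*})$ be any optimal solution of \eqref{eq: prob. general form bi-obj. trade-off}, and suppose $\alpha_{km}^{*}=0$ for some pair $(k,m)$. Then \eqref{eq: prob. general form bi-obj. trade-off :: h} specializes to $\|\mathbf{w}_{km}^{*}\|^{2}\le 0\cdot P_{\mathtt{AP}_{m}}^{\max}=0$, which already gives $\mathbf{w}_{km}^{*}=\mathbf{0}$. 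So the claim follows essentially by construction of the association constraint.

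To make the statement robust (and to expose the structural reason the constraint was chosen this way), I would also give a perturbation argument independent of \eqref{eq: prob. general form bi-obj. trade-off :: h}. Starting from any optimal $(\mathbf{w}^{*},\mathbf{p}^{*},\boldsymbol{\alpha}^{*},\boldsymbol{\mu}^{*})$ with $\alpha_{km}^{*}=0$, define $\tilde{\mathbf{w}}$ by $\tilde{\mathbf{w}}_{k'm'}=\mathbf{w}_{k'm'}^{*}$ for $(k',m')\neq (k,m)$ and $\tilde{\mathbf{w}}_{km}=\mathbf{0}$. Observe that every appearance of $\mathbf{w}_{km}$ inside $\gamma_{k}^{\dl}$, $\gamma_{k'}^{\dl}$ (for $k'\neq k$) and inside the RSI$+$IAI term of $\gamma_{\ell}^{\ul}$ carries the coefficient $\alpha_{km}^{*}=0$, by inspection of \eqref{eq: DL SINR - general} and the expression for $\mathcal{I}^{\AtoA}_{\ell}$ in \eqref{eq: UL SINR - general}. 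Hence all SINRs, and therefore $F_{\mathtt{SE}}$ and the rate constraints \eqref{eq: prob. general form bi-obj. trade-off :: d}--\eqref{eq: prob. general form bi-obj. trade-off :: e}, are unchanged by the perturbation. On the other hand, the radiated-power term in \eqref{eq: power consump. for data} contains $(\mu_{m}^{*}/\nu_{\AP})\|\mathbf{w}_{km}\|^{2}$, which is non-increasing when $\mathbf{w}_{km}$ is zeroed, so $P_{\mathtt{T}}$ weakly decreases. Consequently $F_{\mathtt{EE}}$ weakly increases while $F_{\mathtt{SE}}$ is preserved, giving $F(\tilde{\mathbf{w}},\mathbf{p}^{*},\boldsymbol{\alpha}^{*},\boldsymbol{\mu}^{*})\ge F(\mathbf{w}^{*},\mathbf{p}^{*},\boldsymbol{\alpha}^{*},\boldsymbol{\mu}^{*})$; feasibility in \eqref{eq: prob. general form bi-obj. trade-off :: h}--\eqref{eq: prob. general form bi-obj. trade-off :: b} is also preserved because the left-hand sides only shrink. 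Thus $\tilde{\mathbf{w}}$ is optimal too, establishing the lemma.

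There is no real obstacle here; the only thing to be careful about is to verify that \emph{every} occurrence of $\mathbf{w}_{km}$ in the SINRs is multiplied by $\alpha_{km}$, so that zeroing $\mathbf{w}_{km}$ truly leaves the SINRs invariant rather than altering the MUI or UL interference seen by some other user. Once this bookkeeping is in place, the two arguments above each close the proof, and the second one is what will later justify the substitution $\alpha_{km}\mathbf{w}_{km}\equiv\mathbf{w}_{km}$ used to eliminate the troublesome product of a binary and a continuous variable in the tractable reformulation.
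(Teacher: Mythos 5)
Your proposal is correct, and your second (perturbation) argument is essentially the paper's own proof in Appendix~A: zero out $\mathbf{w}_{km}$, observe that every occurrence of $\mathbf{w}_{km}$ in \eqref{eq: DL SINR - general} and in $\mathcal{I}^{\AtoA}_{\ell}$ of \eqref{eq: UL SINR - general} carries the factor $\alpha_{km}^{*}=0$ so that $F_{\mathtt{SE}}$ and the rate constraints are untouched, and note that the radiated-power term in \eqref{eq: power consump. for data} can only shrink, so $F_{\mathtt{EE}}$ weakly increases. The one point worth flagging is your first route. It is a valid proof of the literal statement, but it proves something slightly stronger than intended for the wrong reason: the paper immediately uses the lemma to argue that constraint \eqref{eq: prob. general form bi-obj. trade-off :: h} is \emph{naturally satisfied} at the optimum, i.e., that it could be dropped without loss. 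If the lemma were proved \emph{by invoking} \eqref{eq: prob. general form bi-obj. trade-off :: h}, that subsequent redundancy claim would be circular; this is presumably why the paper's Appendix~A deliberately starts from a hypothetical optimum with $\alpha_{km}^{*}=0$ and $\mathbf{w}_{km}^{*}\neq\mathbf{0}$ and argues by perturbation rather than by feasibility. So keep the perturbation argument as the proof proper and present the constraint-\eqref{eq: prob. general form bi-obj. trade-off :: h} observation only as a remark; your bookkeeping that every $\mathbf{w}_{km}$ term in the SINRs is multiplied by $\alpha_{km}$ (including the MUI term seen by $\DLUi{k'}$, $k'\neq k$) is exactly the check that makes the perturbation argument close.
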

\begin{proof}
Please see Appendix \ref{app: relationship alpha and beamforming vector}.
\end{proof}
\noindent From \textbf{Lemma} \ref{thm: relationship alpha and beamforming vector}, it is straightforward to see that constraint \eqref{eq: prob. general form bi-obj. trade-off :: h} is naturally satisfied at the optimal point. In particular, when $\alpha_{km}=1$ (or $\alpha_{km}=0$), constraint \eqref{eq: prob. general form bi-obj. trade-off :: h} is addressed by  tighter constraint \eqref{eq: prob. general form bi-obj. trade-off :: b} (or by \textbf{Lemma} \ref{thm: relationship alpha and beamforming vector}). In connection to \textbf{Lemma} \ref{thm: relationship alpha and beamforming vector}, 
 we establish the following main result.
\begin{theorem} \label{thr: theorem 1}
	For any $k\in\mathcal{K}$ and $m\in\mathcal{M}$, let the null space (including zero vector) of $ \mathbf{h}_{km}^\dl $ be $\ker(\mathbf{h}_{km}^\dl)$ and $\mathbf{u}\in\mathcal{U}\triangleq\bigl\{(\mathbf{w}, \mathbf{p},\boldsymbol{\alpha},\boldsymbol{\mu})|\mathbf{w}_{km}\in\ker(\mathbf{h}_{km}^\dl)\bigr\} \subseteq \mathcal{F}$, where $\mathcal{F}$ is the feasible set of \eqref{eq: prob. general form bi-obj. trade-off}. The following state is obtained:
	\begin{IEEEeqnarray}{rCl} \label{eq: theorem 1}
	\mathbf{\tilde{u}} = \underset{\mathbf{u}\in\mathcal{U}}{\argmax} \;F(\mathbf{u}) = (\mathbf{w}, \mathbf{p},\boldsymbol{\alpha},\boldsymbol{\mu}|\mathbf{w}_{km}=\mathbf{0}\;\&\;\alpha_{km}=0).\quad\;\;
	\end{IEEEeqnarray} 
\end{theorem}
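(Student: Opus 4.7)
The plan is to show that, starting from any maximizer $\mathbf{u}^{*}\in\argmax_{\mathbf{u}\in\mathcal{U}} F(\mathbf{u})$, we can explicitly construct a new feasible point $\mathbf{\tilde u}$ by replacing $\mathbf{w}_{km}^{*}$ with $\mathbf{0}$ and $\alpha_{km}^{*}$ with $0$ (leaving every other coordinate unchanged) without decreasing the objective. Since $\mathbf{u}^{*}\in\mathcal{U}$ forces $\mathbf{h}_{km}^{\dl}\mathbf{w}_{km}^{*}=0$, the beamformer $\mathbf{w}_{km}^{*}$ contributes nothing to the desired signal of $\DLU$ but still (i) causes MUI $\alpha_{km}|\mathbf{h}_{k'm}^{\dl}\mathbf{w}_{km}|^{2}$ to every $k'\neq k$, (ii) inflates the RSI/IAI term $\mathcal{I}^{\AtoA}_{\ell}$ in \eqref{eq: UL SINR - general}, (iii) burns radiated power $\|\mathbf{w}_{km}\|^{2}/\nu_{\AP}$, and (iv) with $\alpha_{km}=1$ still incurs the baseband term $\mu_{m}\alpha_{km}P_{km}^{\dl}$. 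All four effects are removed by the proposed swap, which is the driving intuition.

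First I would verify feasibility of $\mathbf{\tilde u}$. Constraint \eqref{eq: prob. general form bi-obj. trade-off :: h} becomes $0\le 0$; \eqref{eq: prob. general form bi-obj. trade-off :: b} only shrinks on the left; \eqref{eq: prob. general form bi-obj. trade-off :: c} is untouched. For the DL rate \eqref{eq: prob. general form bi-obj. trade-off :: d} of user $k$ itself, both numerator and denominator of $\gamma_{k}^{\dl}$ are unchanged (the only affected term in the numerator was $\alpha_{km}|\mathbf{h}_{km}^{\dl}\mathbf{w}_{km}|^{2}$, which was already zero because $\mathbf{w}_{km}^{*}\in\ker(\mathbf{h}_{km}^{\dl})$); for every $k'\neq k$, the denominator $\chi_{k'}$ weakly decreases, so $\gamma_{k'}^{\dl}$ weakly increases. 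The UL constraint \eqref{eq: prob. general form bi-obj. trade-off :: e} is preserved analogously because $\mathcal{I}^{\AtoA}_{\ell}$ loses the nonnegative term $\alpha_{km}|\mathbf{a}_{\ell}\mathbf{\bar{G}}_{m}^{\AtoA}\mathbf{w}_{km}|^{2}$.

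Next I would compare objectives. Writing $S_{1}=F_{\mathtt{SE}}(\mathbf{u}^{*})$, $S_{2}=F_{\mathtt{SE}}(\mathbf{\tilde u})$, $P_{1}=P_{\mathtt T}(\mathbf{u}^{*})$, $P_{2}=P_{\mathtt T}(\mathbf{\tilde u})$, the feasibility analysis already gives $S_{2}\ge S_{1}$, so the SE part of $F$ weakly increases. For the EE part, the radiated and baseband contributions decrease by $\delta:=\mu_{m}\bigl(\|\mathbf{w}_{km}^{*}\|^{2}/\nu_{\AP}+P_{km}^{\dl}\bigr)\ge 0$, while the load-dependent term rises by $B(S_{2}-S_{1})P^{\mathtt{bh}}$; hence $P_{2}=P_{1}+B(S_{2}-S_{1})P^{\mathtt{bh}}-\delta$. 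A direct manipulation of $BS_{2}/P_{2}\ge BS_{1}/P_{1}$ reduces to
\begin{equation}
(S_{2}-S_{1})\bigl(P_{1}-S_{1}BP^{\mathtt{bh}}\bigr)+S_{1}\delta \;\ge\; 0,\nonumber
\end{equation}
and the bracketed factor is exactly the non-load-dependent part of $P_{1}$ (radiated $+$ baseband $+$ circuit power from \eqref{eq: power consump. for data}--\eqref{curcirpower}), which is nonnegative; since $S_{2}\ge S_{1}$ and $\delta\ge 0$, the inequality holds. Therefore $F_{\mathtt{EE}}(\mathbf{\tilde u})\ge F_{\mathtt{EE}}(\mathbf{u}^{*})$ and consequently $F(\mathbf{\tilde u})\ge F(\mathbf{u}^{*})$, so $\mathbf{\tilde u}$ is also a maximizer, yielding \eqref{eq: theorem 1}.

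The main obstacle I anticipate is precisely this coupling through the load-dependent power $B\cdot F_{\mathtt{SE}}\cdot P^{\mathtt{bh}}$: naively one might worry that a higher SE inflates $P_{\mathtt T}$ enough to hurt EE. Making the argument watertight therefore requires identifying the structural fact that the non-load-dependent portion of the consumption model is nonnegative, which is immediate from \eqref{eq: power consump. for data} and \eqref{curcirpower} but must be invoked explicitly. The remaining steps (feasibility bookkeeping and the invocation of \textbf{Lemma}~\ref{thm: relationship alpha and beamforming vector} to justify replacing $\alpha_{km}^{*}$ by $0$ once $\mathbf{w}_{km}=\mathbf{0}$ is established) are routine.
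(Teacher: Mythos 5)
Your proof is correct and follows essentially the same exchange argument as the paper's own proof (Appendix B, with Appendix A for the $\alpha_{km}$ part): zeroing $\mathbf{w}_{km}\in\ker(\mathbf{h}_{km}^{\dl})$ and then $\alpha_{km}$ leaves $\gamma_{k}^{\dl}$ unchanged, weakly raises every other DL and UL SINR, and weakly lowers the radiated and baseband power, so both $F_{\mathtt{SE}}$ and $F_{\mathtt{EE}}$ weakly improve. One point in your favor: the paper simply asserts $P_{\mathtt{T}}(\mathbf{u}_0)\le P_{\mathtt{T}}(\mathbf{u}_1)$, which is not immediate because the load-dependent term $B\cdot F_{\mathtt{SE}}\cdot P^{\mathtt{bh}}$ grows with the SE; your explicit reduction of $S_2/P_2\ge S_1/P_1$ to $(S_2-S_1)\bigl(P_1-S_1BP^{\mathtt{bh}}\bigr)+S_1\delta\ge 0$, with the bracketed factor identified as the nonnegative non-load-dependent power, closes this step more rigorously than the paper does.
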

\begin{proof}
	Please see Appendix \ref{app: theorem 1}.
\end{proof}
{\hili The merits of \textbf{Theorem} \ref{thr: theorem 1} are detailed as follows.}  First, even some possible values of $ \mathbf{w}_{km}\in\ker\bigl(\mathbf{h}_{km}^\dl\bigr) $ can make $ |\mathbf{h}_{km}^\dl\mathbf{w}_{km}|^2 $ to be null, the zero vector $ \mathbf{w}_{km}=\mathbf{0} $ is the best value  among them. Second, there  exists only one of two pairs for the beamforming vector and AP-DL UE association in the optimal solution, which is $(\mathbf{w}_{km}^*,\alpha_{km}^*)\in \bigl\{(\mathbf{0},0),(\mathbf{\bar{w}}, 1) \bigr\}$ with $\mathbf{\bar{w}}\notin\ker\bigl(\mathbf{h}_{km}^\dl\bigr)$.  Without loss of optimality, we can replace $\boldsymbol{\alpha}$ by $ \mathbf{1} $ in the component containing the compound of $\boldsymbol{\alpha}$ and $\mathbf{w}$, and use a substituting function of $\mathbf{w}$ for $\boldsymbol{\alpha}$ in others. Particularly, we define the 2-tuple of continuous variables as $\mathcal{C}\triangleq\{\mathbf{w}, \mathbf{p}\}$, and $\boldsymbol{\Gamma}_{\dl}\triangleq\bigl\{\gamma_{k}^{\dl}(\mathcal{C},\mathbf{1})|\forall k\in\mathcal{K}\bigr\}$ and $\boldsymbol{\Gamma}_{\ul}\triangleq\bigl\{\gamma_{\ell}^{\ul}(\mathcal{C},\mathbf{1})|\forall \ell\in\mathcal{L}\bigr\}$ with all entries of $ \boldsymbol{\alpha} $ being replaced by ones. 

In short, problem \eqref{eq: prob. general form bi-obj. trade-off} can be rewritten as
\begingroup
\allowdisplaybreaks\begin{subequations} \label{eq: prob. bi-obj. - no alpha}
	\begin{IEEEeqnarray}{cl}
		\underset{\mathcal{C}\triangleq\{\mathbf{w}, \mathbf{p}\},\boldsymbol{\mu}}{\max} &\quad  \eta \bar{F}_{\mathtt{SE}}\bigl(\boldsymbol{\Gamma}_{\dl},\boldsymbol{\Gamma}_{\ul}\bigr) + (1-\eta) \bar{F}_{\mathtt{EE}}\bigl(\boldsymbol{\Gamma}_{\dl},\boldsymbol{\Gamma}_{\ul},\mathcal{C},\boldsymbol{\mu}\bigr) \label{eq: prob. bi-obj. - no alpha :: a} \qquad\\
		\st & \quad  \sum_{k\in\cK}\|\mathbf{w}_{km}\|^2 \leq \mu_{m} P_{\mathtt{AP}_m}^{\max},\;\forall m\in\mathcal{M}, \label{eq: prob. bi-obj. - no alpha :: c} \\
				& \quad 
		R\bigl(\gamma_{k}^{\dl}(\mathcal{C},\mathbf{1})\bigr) \geq \bar{R}_{k}^{\dl}, \; \forall  k \in \mathcal{K}, \label{eq: prob. bi-obj. - no alpha :: e} \\
		&  \quad
		R\bigl(\gamma_{\ell}^{\ul}(\mathcal{C},\mathbf{1})\bigr) \geq \bar{R}_{\ell}^{\ul}, \; \forall \ell \in \mathcal{L}, \label{eq: prob. bi-obj. - no alpha :: f}\\
			&  \quad\eqref{eq: prob. general form bi-obj. trade-off :: f}, \eqref{eq: prob. general form bi-obj. trade-off :: c},\label{eq: prob. bi-obj. - no alpha :: g}
	\end{IEEEeqnarray}							
\end{subequations}\endgroup
where $\bar{F}_{\mathtt{SE}}\bigl(\boldsymbol{\Gamma}_{\dl},\boldsymbol{\Gamma}_{\ul}\bigr)  \triangleq  R_{\Sigma}(\boldsymbol{\Gamma}_{\dl}) + R_{\Sigma}(\boldsymbol{\Gamma}_{\ul}),$ $ \bar{F}_{\mathtt{EE}}\bigl(\boldsymbol{\Gamma}_{\dl},\boldsymbol{\Gamma}_{\ul},\mathcal{C},\boldsymbol{\mu}\bigr) \triangleq \frac{\bar{F}_{\mathtt{SE}}\bigl(\boldsymbol{\Gamma}_{\dl},\boldsymbol{\Gamma}_{\ul}\bigr)}{\bar{P}_{\mathtt{T}}(\mathcal{C},\boldsymbol{\mu})}$, and $\bar{P}_{\mathtt{T}}(\mathcal{C},\boldsymbol{\mu})\triangleq P_{\mathtt{T}}(\mathcal{C},f_{\mathtt{spr}}(\mathbf{w}),$ $\boldsymbol{\mu})$
with $ R_{\Sigma}(\mathcal{X}) = \sum_{x\in\mathcal{X}} R(x) $. The signal-power ratio function is defined as
\begin{IEEEeqnarray}{cl}
f_{\mathtt{spr}}:\mathbf{w}\rightarrow \mathbf{r}^{\mathtt{sp}}\triangleq\bigl[r_{\mathtt{sp}}\bigl(\mathbf{w}_{km},\mathbf{h}_{km}^{\dl}|\mathbf{w}_{k}^{(\kappa)},\mathbf{h}_{k}^{\dl}\bigr)\bigr]_{\forall k\in\mathcal{K},m\in\mathcal{M}},\qquad
\end{IEEEeqnarray}
with $\mathbf{h}_k^\dl\triangleq[\mathbf{h}_{k1}^\dl,\cdots,\mathbf{h}_{kM}^\dl]\in\mathbb{C}^{1\times N}$, and
\begin{align} \label{eq: consumed power rate function}
r_{\mathtt{sp}}\bigl(\mathbf{x}_{1},\mathbf{c}_{1}|\mathbf{x}_{2},\mathbf{c}_{2}\bigr)\triangleq\frac{|\mathbf{c}_{1}\mathbf{x}_{1}|^2}{|\mathbf{c}_{2}\mathbf{x}_{2}|^2+\epsilon},
\end{align}
where $\epsilon$ is a very small real number added to avoid a numerical problem when $ \AP $ turns to sleep mode, and $ \mathbf{w}_{k}^{(\kappa)}$ is a feasible point of $ \mathbf{w}_{k}$ at the $\kappa$-th iteration of an iterative algorithm presented shortly. 
 $ \alpha_{km} $ is correspondingly replaced by $ r_{\mathtt{sp}}\bigl(\mathbf{w}_{km},\mathbf{h}_{km}^{\dl}|\mathbf{w}_{k}^{(\kappa)},\mathbf{h}_{k}^{\dl}\bigr) $ in  $ P_{\mathtt{T}}(\mathcal{C},f_{\mathtt{spr}}(\mathbf{w}),\boldsymbol{\mu}) $. {\majrev We note that  \eqref{eq: consumed power rate function} is considered as a soft converter from the binary variables into continuous ones, which also indicates the quality of connection between an AP and a UE.} As a result, $ \mathbf{0} \preceq \mathbf{r}^{\mathtt{sp}} \preceq \mathbf{1} $ is considered as an estimate of $ \boldsymbol{\alpha} $ after solving problem \eqref{eq: prob. bi-obj. - no alpha}, i.e.,
\begin{IEEEeqnarray}{cl} \label{eq: compute alpha}
\alpha_{km}^*=\mathcal{B}\bigl(r_{\mathtt{sp}}\bigl(\mathbf{w}_{km}^{*},\mathbf{h}_{km}^{\dl}|\mathbf{w}_{k}^{*},\mathbf{h}_{k}^{\dl}\bigr),\varpi\bigr), \forall k\in\mathcal{K},m\in\mathcal{M},\qquad
\end{IEEEeqnarray}
where
\begin{align}\label{eq:Bfunction}
\mathcal{B}(x, \varpi) \triangleq \begin{cases}
1, & \text{if } x>\varpi, \\
0, & \text{if } x\leq\varpi,
\end{cases}
\end{align}
and the per-AP power signal ratio $\varpi$ is a small number indicating $x\approx 0 $ if $ x\leq\varpi $, and $ \mathbf{w}_{km}^{*} $ is the optimal solution of $ \mathbf{w}_{km}$. 

\begin{remark}By \textbf{Lemma} \ref{thm: relationship alpha and beamforming vector}, it is true that $ r_{\mathtt{sp}}\bigl(\mathbf{w}_{km}^{*},\mathbf{h}_{km}^{\dl}|\mathbf{w}_{k}^{*},\mathbf{h}_{k}^{\dl}\bigr)\leq\varpi $ yields  $\mathbf{w}_{km}^{*} \rightarrow \mathbf{0} $. Without loss of optimality, we can omit $ \boldsymbol{\alpha} $ in the following derivations.\thmend
\end{remark}

\subsubsection{Binary Reduction of $\boldsymbol{\mu}$}
The binary variable $\boldsymbol{\mu}$ is mainly involved in \eqref{eq:total power consumption}. Using \textbf{Lemma} \ref{thm: relationship alpha and beamforming vector}, we have the following result.
\begin{theorem}\label{thr: theorem 2}
By treating  $ \mu_{m}, \forall m\in\mathcal{M}$ as a constant in each iteration, its solution in the next iteration is iteratively updated as:
\begin{IEEEeqnarray}{rCl} \label{eq: compute mu}
&&\mu_{m}^{(\kappa+1)} = \max \left\{ \underset{k\in\mathcal{K}}{\max}\; \mathcal{B}\Bigl(r_{\mathtt{sp}}\bigl(\mathbf{w}_{km}^{(\kappa)},\mathbf{h}_{km}^{\dl}|\mathbf{w}_{k}^{(\kappa)},\mathbf{h}_{k}^{\dl}\bigr),\varpi\Bigr), 
 \underset{\ell\in\mathcal{L}}{\max}\; \mathcal{B}\Bigl(r_{\mathtt{sp}}\bigl(\sqrt{p_{\ell}^{(\kappa)}}\mathbf{h}_{m\ell}^{\ul},\mathbf{a}_{m\ell}|\sqrt{p_{\ell}^{(\kappa)}}\mathbf{h}_{\ell}^{\ul},\mathbf{a}_{\ell}\bigr),\varpi\Bigr) \right\},\quad
\end{IEEEeqnarray}
where the function $\mathcal{B}(:,:)$ was defined in \eqref{eq:Bfunction}.
\end{theorem}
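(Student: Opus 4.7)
My plan is to derive the update in \eqref{eq: compute mu} by characterizing the tightest value of $\mu_m$ that is consistent with the power-budget constraint \eqref{eq: prob. bi-obj. - no alpha :: c}, the power-consumption model \eqref{eq:total power consumption}--\eqref{curcirpower}, and the optimality principle established by \textbf{Lemma} \ref{thm: relationship alpha and beamforming vector}. The central observation is that $\mu_m=0$ is preferable (since it strictly decreases $P_{\mathtt{T}}$ and hence can only help the EE objective) unless AP $m$ is actively involved in either DL transmission or UL reception. Thus the update must pick the smallest $\mu_m^{(\kappa+1)}$ compatible with the current iterates $(\mathbf{w}^{(\kappa)},\mathbf{p}^{(\kappa)})$.

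First I would handle the DL side. If there exists some $k$ with $r_{\mathtt{sp}}\bigl(\mathbf{w}_{km}^{(\kappa)},\mathbf{h}_{km}^{\dl}|\mathbf{w}_{k}^{(\kappa)},\mathbf{h}_{k}^{\dl}\bigr)>\varpi$, then AP $m$ carries non-negligible transmit power for UE $k$, so constraint \eqref{eq: prob. bi-obj. - no alpha :: c} (i.e., $\sum_{k}\|\mathbf{w}_{km}\|^2\leq \mu_m P_{\mathtt{AP}_m}^{\max}$) immediately forces $\mu_m=1$, which is exactly what $\mathcal{B}(\cdot,\varpi)$ returns. Conversely, if the ratio is below $\varpi$ for every $k$, then by \textbf{Lemma} \ref{thm: relationship alpha and beamforming vector} the corresponding beamforming vectors $\mathbf{w}_{km}$ may be set to $\mathbf{0}$ without loss of optimality, and the DL side alone no longer requires $\mu_m=1$.

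Second, I would carry out the analogous argument on the UL side. The per-AP contribution of UL UE $\ell$ to the joint reception is $|\mathbf{a}_{m\ell}\sqrt{p_{\ell}^{(\kappa)}}\mathbf{h}_{m\ell}^{\ul}|^2$, whose ratio to the aggregate $|\mathbf{a}_{\ell}\sqrt{p_{\ell}^{(\kappa)}}\mathbf{h}_{\ell}^{\ul}|^2$ measures how much AP $m$ participates in decoding UL UE $\ell$. When this ratio exceeds $\varpi$ for some $\ell$, the baseband processing term $L\mu_m P_m^{\ul}$ in \eqref{eq: power consump. for data} must be activated, which again forces $\mu_m=1$; otherwise, an argument analogous to \textbf{Lemma} \ref{thm: relationship alpha and beamforming vector} (applied to the receive filter $\mathbf{a}_{m\ell}$ instead of the beamformer $\mathbf{w}_{km}$) shows that $\mathbf{a}_{m\ell}$ can be zeroed out for all $\ell$ without loss of optimality, so AP $m$ can be put to sleep.

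Combining the two directions, $\mu_m=1$ is required as soon as either the DL or the UL indicator triggers, and $\mu_m=0$ is admissible (and strictly preferable for the power term) only when both indicators vanish. Taking the maximum over the two binarized ratios yields exactly \eqref{eq: compute mu}. The step I expect to be the main obstacle is the UL direction, because unlike the DL case there is no explicit constraint in \eqref{eq: prob. bi-obj. - no alpha} that couples $\mu_m$ directly to $\mathbf{a}_{m\ell}$; the forcing of $\mu_m=1$ has to be argued from the structure of the baseband power $L\sum_{m}\mu_m P_m^{\ul}$ in \eqref{eq: power consump. for data} together with an analogue of \textbf{Lemma} \ref{thm: relationship alpha and beamforming vector} for receive combining, ensuring that simultaneously dropping AP $m$'s contribution to every UL UE is indeed optimal when all per-UE ratios fall below $\varpi$.
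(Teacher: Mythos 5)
Your proposal is correct and follows essentially the same route as the paper's own (very brief) proof, which simply rewrites $P_{\mathtt{D}}$ to expose that $\mu_m$ multiplies both the DL per-AP power/baseband terms and the UL reception terms, and then asserts that $\mu_m$ must track the binarized DL and UL signal-power ratios. Your elaboration of the DL direction via constraint \eqref{eq: prob. bi-obj. - no alpha :: c} and of the UL direction via the baseband power term $L\sum_m \mu_m P_m^{\ul}$ (together with the Lemma~\ref{thm: relationship alpha and beamforming vector}-style optimality argument) fills in exactly the informal steps the paper leaves implicit, including the point you flag about the UL coupling not appearing as an explicit constraint.
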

\begin{proof}
	Please see Appendix \ref{app: theorem 2}.
\end{proof}
From \textbf{Theorem \ref{thr: theorem 2}}, the total power consumption in \eqref{eq:total power consumption} can be rewritten as
\begin{align}
\bar{P}_{\mathtt{T}}(\mathcal{C},\boldsymbol{\mu}^{(\kappa)})=P_{\mathtt{D}}(\mathcal{C},f_{\mathtt{spr}}(\mathbf{w}),\boldsymbol{\mu}^{(\kappa)}) + P_{\mathtt{C}}(\boldsymbol{\mu}^{(\kappa)}),
\end{align}
which involves the  continuous variables in $\mathcal{C}$ only.

In summary,  the original problem \eqref{eq: prob. general form bi-obj. trade-off} can be cast as the following simplified problem:
\begin{subequations} \label{eq: prob. bi-obj. - continuous vars.}
	\begin{IEEEeqnarray}{cl}
		\underset{\mathcal{C}\triangleq\{\mathbf{w}, \mathbf{p}\}}{\max} &\ \; \eta \bar{F}_{\mathtt{SE}}\bigl(\boldsymbol{\Gamma}_{\dl},\boldsymbol{\Gamma}_{\ul}\bigr) + (1-\eta) \bar{F}_{\mathtt{EE}}\bigl(\boldsymbol{\Gamma}_{\dl},\boldsymbol{\Gamma}_{\ul},\mathcal{C},\boldsymbol{\mu}^{(\kappa)}\bigr) \label{eq: prob. bi-obj. - continuous vars. :: a} \qquad\\
		\st & \ \sum\nolimits_{k\in\cK}\|\mathbf{w}_{km}\|^2 \leq \mu_{m}^{(\kappa)} P_{\mathtt{AP}_m}^{\max},\;\forall m\in\mathcal{M}, \label{eq: prob. bi-obj. - continuous vars. :: b} \\
		& \quad \eqref{eq: prob. general form bi-obj. trade-off :: c}, \eqref{eq: prob. bi-obj. - no alpha :: e}, \eqref{eq: prob. bi-obj. - no alpha :: f}.
	\end{IEEEeqnarray}							
\end{subequations}

{\majrev
\begin{remark}
It is clear from the discussion above that solving \eqref{eq: prob. bi-obj. - continuous vars.} boils down to finding a saddle point for $\mathcal{C}\triangleq\{\mathbf{w}, \mathbf{p}\}$, while the binary variables $ \boldsymbol{\alpha}$ and $ \boldsymbol{\mu} $ are post-updated by \eqref{eq: compute alpha} and \eqref{eq: compute mu}, respectively. We should emphasize that the binary variables are relaxed to soft-update functions in \eqref{eq: prob. bi-obj. - continuous vars.} to reduce the complexity, while maintaining their roles as in the original problem \eqref{eq: prob. general form bi-obj. trade-off}. These results hold true for arbitrary linear precoder/receiver schemes, which are discussed in detail next.\thmend
\end{remark}
}

\section{Proposed Solution Based on Zero-Forcing}\label{sec:ZF}
In this section, we first present an efficient transmission design; this retains the simplicity of the well-known ZF method while enjoys the similar performance of the optimal MMSE method as in the context of massive MIMO \cite{Marzetta:Cambridge:2016}. Then, an iterative  algorithm based on the ICA method and Dinkelbach
transformation is developed to solve the problem design, followed by the initialization discussion.
\subsection{ZF-Based Transmission Design}
 To make ZF feasible,  the total  number of APs' antennas is required to be larger than the number of UEs, i.e., $ N>\max\{K,L\} $, which can be easily satisfied in  massive MIMO systems.
For ease of presentation, we first rearrange  the  sets of beamforming vectors, channel vectors and power allocation between transceivers as follows: $\mathbf{W}\triangleq[\mathbf{w}_{1},\cdots,\mathbf{w}_{K}]\in\mathbb{C}^{N\times K}$, $\mathbf{H}^\dl\triangleq\bigl[(\mathbf{h}_{1}^\dl)^H, \cdots,(\mathbf{h}_{M}^\dl)^H\bigr]^H \in \mathbb{C}^{K\times N}$,  $ \mathbf{H}^{\ul}\triangleq\bigl[\mathbf{h}_{1}^{\ul},\cdots, \mathbf{h}_{L}^{\ul} \bigr]\in\mathbb{C}^{N\times L} $, $\mathbf{G}^{\mathtt{cci}}\triangleq\bigl[(\mathbf{g}_{1}^{\mathtt{cci}})^H,\cdots, (\mathbf{g}_{K}^{\mathtt{cci}})^H\bigr]^H\in \mathbb{C}^{K\times L}$ with $\mathbf{g}_{k}^{\mathtt{cci}}\triangleq\bigl[g_{k1}^{\mathtt{cci}},\cdots,g_{kL}^{\mathtt{cci}}\bigr]\in \mathbb{C}^{1\times L}$,  $ \mathbf{\tilde{G}}^{\AtoA}\triangleq\bigl[\mathbf{\bar{G}}_1^{\AtoA},\cdots,$ $\mathbf{\bar{G}}_M^{\AtoA}\bigr]\in\mathbb{C}^{N\times N} $,  and $\mathbf{D}^{\ul}\triangleq\diag\bigl(\bigl[\sqrt{p_{1}}\cdots \sqrt{p_{L}}\bigr]\bigr)$.

\subsubsection{ZF-Based  DL Transmission}
For $ \mathbf{H}^\tZF= (\mathbf{H}^\dl)^H\bigl(\mathbf{H}^\dl(\mathbf{H}^\dl)^H\bigr)^{-1}$, the ZF precoder matrix is simply computed as
$\mathbf{W}^{\mathtt{ZF}} = \mathbf{H}^\tZF (\mathbf{D}^{\dl})^{\frac{1}{2}},
$
where  $ \mathbf{D}^{\dl}=\diag\bigl(\bigl[\omega_1\cdots\omega_K\bigr]\bigr) $ and $ \omega_{k} $ represents the  weight for $\DLU$. As a result, constraint  \eqref{eq: prob. bi-obj. - continuous vars. :: b} becomes
\begin{align} \label{eq: power constraint - ZF}
\tr\bigl((\mathbf{H}^{\tZF})^H\mathbf{B}_{m}\mathbf{H}^{\tZF}\mathbf{D}^{\dl}\bigr) \leq \mu_{m}^{(\kappa)} P_{\mathtt{AP}_m}^{\max}, \;\forall m\in\mathcal{M},
\end{align}
which is  a linear constraint, where $ \mathbf{B}_{m} = \diag(\mathbf{b}_{m})\in \{0,1\}^{N\times N} $ with 
\begin{align}
\mathbf{b}_{m} = \bigl( \underbrace{0\cdots 0}_{\sum\limits_{m'=1}^{m-1}N_{m'}}\;\underbrace{1\cdots 1}_{N_m}\; 0 \cdots 0\bigr).
\end{align}
The simplicity of ZF is attributed  to the fact that the size of $NK$ scalar variables of $\mathbf{w}$ is now reduced to $K$ scalar variables of $\boldsymbol{\omega}\triangleq[\omega_1,\cdots,\omega_K]^T\in\mathbb{R}^{K\times 1}$. The SINR of $\DLU$ with ZF precoder is
\begin{align}\label{eq:ZFSINIDL}
\gamma_{k}^{\dl,\tZF}(\boldsymbol{\omega}, \mathbf{p}) = \frac{\omega_k|\mathbf{h}_{k}^\dl\mathbf{h}_{k}^\tZF|^2}{\|\mathbf{g}_{k}^{\mathtt{cci}}\mathbf{D}^{\ul}\|^2+\sigma_{k}^2},
\end{align}
where $ \mathbf{h}_{k}^\tZF $ is the $ k $-th column of the ZF precoder $ \mathbf{H}^{\tZF} $ and the MUI term $ |\mathbf{h}_{k}^\dl\mathbf{w}_{k'}|^2 \approx 0, \forall k'\in\cK\setminus\{k\}$. 
\begin{remark}
The following result characterizes the relationship between  $ \boldsymbol{\omega} $ and $ \mathbf{W}^{\mathtt{ZF}} $:
	\begin{align}\label{eq:relationshipWOmega}
	\mathbf{W}^{\mathtt{ZF}}=f_{\mathbf{W}}(\boldsymbol{\omega},\mathbf{H}^{\tZF}),
	\end{align}
where $ f_{\mathbf{W}}(\boldsymbol{\omega},\mathbf{X}) \triangleq \mathbf{X}\bigl(\diag(\boldsymbol{\omega})\bigr)^{\frac{1}{2}} $. Hence,  $ \mathbf{w}_{km} $ is recovered by extracting from the $ ((m-1)N_m+1) $-th to $ (mN_m) $-th elements of $ \mathbf{w}_k $, where $ \mathbf{w}_k $ is the $ k $-th column of $ \mathbf{W}^{\mathtt{ZF}} $.\thmend
\end{remark}
\subsubsection{ZF-Based  UL Transmission}
Let $ \mathbf{A}^{\mathtt{ZF}}=\bigl((\mathbf{H}^\ul)^H\mathbf{H}^\ul\bigr)^{-1}(\mathbf{H}^\ul)^H \in\mathbb{C}^{L\times N} $ be the ZF receiver matrix at the CPU. The SINR of $ \ULU $ with ZF receiver is
\begin{align}\label{eq:ZFSINIUL}
\gamma_{\ell}^{\ul,\tZF}(\boldsymbol{\omega},\mathbf{p}) = \frac{p_{\ell}|\mathbf{a}_{\ell}^{\mathtt{ZF}}\mathbf{h}_{\ell}^{\ul}|^2 }{\underbrace{\|\mathbf{a}_{\ell}^{\mathtt{ZF}}\mathbf{\tilde{G}}^{\AtoA} \mathbf{W}^{\mathtt{ZF}}\|^2}_{\text{IAI + RSI}}+\sigma_{\mathtt{AP}}^2\|\mathbf{a}_{\ell}^{\mathtt{ZF}}\|^2},
\end{align}
where $ \mathbf{a}_{\ell}^{\mathtt{ZF}} $ is the $\ell$-th row of $ \mathbf{A}^{\mathtt{ZF}}$.

\subsection{Proposed Algorithm}
Before proceeding, we provide some useful approximate functions following  ICA properties \cite{Marks:78,Beck:JGO:10}, which will be frequently employed to devise the proposed solutions.
\begin{itemize}
	\item Consider the convex function $ h_{\mathtt{fr}}(x,y)\triangleq x^2/y $ with $(x,y)\in\mathbb{R}_{++}^2$. The concave lower bound of $h_{\mathtt{fr}}(x,y)$ at the feasible point $ (x^{(\kappa)},y^{(\kappa)}) $ is given as \cite{Dinh:JSAC:18}
\begin{align} \label{eq: frac. approx.}
h_{\mathtt{fr}}(x,y) \geq \frac{2x^{(\kappa)}}{y^{(\kappa)}}x - \frac{(x^{(\kappa)})^2}{(y^{(\kappa)})^2}y := h_{\mathtt{fr}}^{(\kappa)}(x,y).
\end{align}
	
	\item {\majrev For the quadratic function $ h_{\mathtt{qu}}(z)\triangleq z^2 $ with $z\in\mathbb{R}_{++}$, its concave lower bound at  $z^{(\kappa)} $ is
\begin{align} \label{eq: quad. approx.}
h_{\mathtt{qu}}(z) \geq 2z^{(\kappa)} z - (z^{(\kappa)})^2 := h_{\mathtt{qu}}^{(\kappa)}(z).
\end{align}}
\end{itemize}
\vspace{-20pt}
Next, problem \eqref{eq: prob. bi-obj. - continuous vars.} with ZF design now reduces to the following problem
	\begin{IEEEeqnarray}{rCl}\label{eq: prob. bi-obj. - ZF}
		\underset{\boldsymbol{\omega}, \mathbf{p}}{\max} &&\;\;  \eta \bar{F}_{\mathtt{SE}}\bigl(\boldsymbol{\Gamma}_{\dl}^{\mathtt{ZF}},\boldsymbol{\Gamma}_{\ul}^{\mathtt{ZF}}\bigr) + (1-\eta) \bar{F}_{\mathtt{EE}}\bigl(\boldsymbol{\Gamma}_{\dl}^{\mathtt{ZF}},\boldsymbol{\Gamma}_{\ul}^{\mathtt{ZF}},\mathcal{C}^{\mathtt{ZF}},\boldsymbol{\mu}^{(\kappa)}\bigr) \IEEEyessubnumber\label{eq: prob. bi-obj. - ZF :: a} \qquad\\
		\st	&&\quad   
		R\bigl(\gamma_{k}^{\dl,\tZF}(\boldsymbol{\omega}, \mathbf{p})\bigr) \geq \bar{R}_{k}^{\dl}, \; \forall  k \in \mathcal{K}, \IEEEyessubnumber\label{eq: prob. bi-obj. - ZF :: c} \\
		&&\quad   		R\bigl(\gamma_{\ell}^{\ul,\tZF}(\boldsymbol{\omega}, \mathbf{p})\bigr) \geq \bar{R}_{\ell}^{\ul}, \; \forall \ell \in \mathcal{L}, \IEEEyessubnumber\label{eq: prob. bi-obj. - ZF :: d} \\
		&&\quad  \eqref{eq: prob. general form bi-obj. trade-off :: c}, \eqref{eq: power constraint - ZF}, \IEEEyessubnumber\label{eq: prob. bi-obj. - ZF :: b} 
	\end{IEEEeqnarray}							
 where $ \boldsymbol{\Gamma_{\dl}^{\mathtt{ZF}}}\triangleq\{\gamma_{k}^{\dl,\tZF}(\boldsymbol{\omega}, \mathbf{p})|\forall k\in\mathcal{K}\} $ and $ \boldsymbol{\Gamma_{\ul}^{\mathtt{ZF}}}\triangleq\{\gamma_{\ell}^{\ul,\tZF}(\boldsymbol{\omega}, \mathbf{p})|\forall \ell\in\mathcal{L}\} $. Problem \eqref{eq: prob. bi-obj. - ZF} involves the nonconcave  objective  \eqref{eq: prob. bi-obj. - ZF :: a},  and nonconvex constraints \eqref{eq: prob. bi-obj. - ZF :: c} and \eqref{eq: prob. bi-obj. - ZF :: d}. To apply ICA method,  a new transformation with an equivalent feasible set is required.
Let us start by rewriting the objective \eqref{eq: prob. bi-obj. - ZF :: a} as $\bar{F}\bigl(\boldsymbol{\Gamma}_{\dl}^{\mathtt{ZF}},\boldsymbol{\Gamma}_{\ul}^{\mathtt{ZF}},\mathcal{C}^{\mathtt{ZF}}\bigr) \triangleq \bar{F}_{\mathtt{SE}}\bigl(\boldsymbol{\Gamma}_{\dl}^{\mathtt{ZF}},\boldsymbol{\Gamma}_{\ul}^{\mathtt{ZF}}\bigr) \bar{P}(\mathcal{C}^{\mathtt{ZF}}),$ where $ \bar{P}(\mathcal{C}^{\mathtt{ZF}},\boldsymbol{\mu}^{(\kappa)})\triangleq \bigl(\eta +\frac{(1-\eta)}{\bar{P}_{\mathtt{T}}(\mathcal{C}^{\mathtt{ZF}},\boldsymbol{\mu}^{(\kappa)})}\bigr) $ and $\mathcal{C}^{\mathtt{ZF}}\triangleq\{\boldsymbol{\omega}, \mathbf{p}\}$. 

\begin{theorem}\label{thr: theorem 3}
Problem \eqref{eq: prob. bi-obj. - ZF} is equivalent to the following problem
\begingroup
\allowdisplaybreaks\begin{subequations} \label{eq: prob. bi-obj. - equiZF}
	\begin{IEEEeqnarray}{cl}
		\underset{\boldsymbol{\omega}, \mathbf{p},\boldsymbol{\lambda},\phi}{\max} &\quad  \frac{\tilde{F}_{\mathtt{SE}}\bigl(\boldsymbol{\Lambda}_{\dl},\boldsymbol{\Lambda}_{\ul}\bigr)}{\phi}		\label{eq: prob. bi-obj. - equiZF :: a} \\
		\st	& \quad  \bar{P}(\mathcal{C}^{\mathtt{ZF}},\boldsymbol{\mu}^{(\kappa)}) \geq 1/\phi, \label{eq: prob. bi-obj. - equiZF :: b} \\
		& \quad \gamma_{k}^{\dl,\tZF}(\boldsymbol{\omega}, \mathbf{p})\geq\lambda_{k}^{\dl}, \; \forall  k \in \mathcal{K},\label{eq: prob. bi-obj. - equiZF :: c} \\
	& \quad \gamma_{\ell}^{\ul,\tZF}(\boldsymbol{\omega}, \mathbf{p})\geq\lambda_{\ell}^{\ul}, \; \forall  \ell \in \mathcal{L}, \label{eq: prob. bi-obj. - equiZF :: d} \\
				& \quad \lambda_{k}^{\dl} + 1 \geq \exp(\bar{R}_{k}^{\dl}), \; \forall  k \in \mathcal{K},   \label{eq: prob. bi-obj. - equiZF :: e} \\
		& \quad \lambda_{\ell}^{\ul} + 1 \geq \exp(\bar{R}_{\ell}^{\ul}), \; \forall  \ell \in \mathcal{L}, \label{eq: prob. bi-obj. - equiZF :: f} \\
		&\quad \eqref{eq: prob. general form bi-obj. trade-off :: c}, \eqref{eq: power constraint - ZF}, \label{eq: prob. bi-obj. - equiZF :: g} 
	\end{IEEEeqnarray}							
\end{subequations}\endgroup
where  $\tilde{F}_{\mathtt{SE}}\bigl(\boldsymbol{\Lambda}_{\dl},\boldsymbol{\Lambda}_{\ul}\bigr)\triangleq \ln|\mathbf{I} + \boldsymbol{\Lambda}_{\dl} | + \ln|\mathbf{I} + \boldsymbol{\Lambda}_{\ul} |$ is a concave function, with $ \boldsymbol{\Lambda}_{\dl}\triangleq\diag([\lambda_{1}^{\dl}\cdots\lambda_{K}^{\dl}]) $ and $ \boldsymbol{\Lambda}_{\ul}\triangleq\diag([\lambda_{1}^{\ul}\cdots\lambda_{L}^{\ul}]) $;  $\phi$ and $\boldsymbol{\lambda}\triangleq\{\boldsymbol{\lambda}_\dl,\boldsymbol{\lambda}_\ul\}$ with $ \boldsymbol{\lambda}_{\dl}\triangleq\{\lambda_{k}^{\dl}\}_{\forall k\in\mathcal{K}} $ and $ \boldsymbol{\lambda}_{\ul}\triangleq\{\lambda_{\ell}^{\dl}\}_{\forall \ell\in\mathcal{L}} $ are newly introduced variables. Here $\lambda_{k}^{\dl}$ and $\lambda_{\ell}^{\dl}$ can be viewed as  soft SINRs for $\DLU$ and $\ULU$, respectively.
\end{theorem}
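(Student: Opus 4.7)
The plan is to verify this equivalence through three coordinated reformulation steps, each loss-free at the optimum. First, I would factor the scalarized SE--EE objective of \eqref{eq: prob. bi-obj. - ZF :: a} as
\begin{align}
\eta\,\bar{F}_{\mathtt{SE}} + (1-\eta)\,\bar{F}_{\mathtt{EE}} = \bar{F}_{\mathtt{SE}}\!\left(\eta + \frac{1-\eta}{\bar{P}_{\mathtt{T}}}\right) = \bar{F}_{\mathtt{SE}}\!\left(\boldsymbol{\Gamma}_{\dl}^{\mathtt{ZF}},\boldsymbol{\Gamma}_{\ul}^{\mathtt{ZF}}\right)\cdot\bar{P}\!\left(\mathcal{C}^{\mathtt{ZF}},\boldsymbol{\mu}^{(\kappa)}\right),\nonumber
\end{align}
exposing a product of two strictly positive factors. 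Introducing a new scalar $\phi>0$ together with the hypograph constraint $\bar{P}\geq 1/\phi$, this product is equivalently expressed as the ratio $\tilde{F}_{\mathtt{SE}}/\phi$, because $\bar{F}_{\mathtt{SE}}>0$ forces the slack to bind at the maximizer (otherwise $\phi$ could be shrunk to enlarge the ratio), giving $\phi^{\star}=1/\bar{P}$ and yielding exactly \eqref{eq: prob. bi-obj. - equiZF :: b}.

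Second, I would lift each ZF SINR by a soft SINR: for every $k\in\mathcal{K}$ and $\ell\in\mathcal{L}$, impose $\lambda_{k}^{\dl}\leq\gamma_{k}^{\dl,\tZF}$ and $\lambda_{\ell}^{\ul}\leq\gamma_{\ell}^{\ul,\tZF}$ as in \eqref{eq: prob. bi-obj. - equiZF :: c}--\eqref{eq: prob. bi-obj. - equiZF :: d}, and replace $\bar{F}_{\mathtt{SE}}(\boldsymbol{\Gamma}_{\dl}^{\mathtt{ZF}},\boldsymbol{\Gamma}_{\ul}^{\mathtt{ZF}})$ in the objective by $\tilde{F}_{\mathtt{SE}}(\boldsymbol{\Lambda}_{\dl},\boldsymbol{\Lambda}_{\ul}) = \sum_{k}\ln(1+\lambda_{k}^{\dl}) + \sum_{\ell}\ln(1+\lambda_{\ell}^{\ul})$. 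Because $\ln(1+\cdot)$ is strictly increasing, every optimizer must activate these bounds, so the lift is loss-free; the diagonal structure of $\boldsymbol{\Lambda}_{\dl},\boldsymbol{\Lambda}_{\ul}$ additionally makes $\tilde{F}_{\mathtt{SE}}$ a sum of $\ln(1+\lambda)$ terms and hence jointly concave in $\boldsymbol{\lambda}$, as asserted. Third, I would rewrite each QoS constraint $R(\gamma_{k}^{\dl,\tZF})\geq \bar{R}_{k}^{\dl}$ as $\gamma_{k}^{\dl,\tZF}\geq\exp(\bar{R}_{k}^{\dl})-1$; combining this with $\lambda_{k}^{\dl}\leq\gamma_{k}^{\dl,\tZF}$ produces \eqref{eq: prob. bi-obj. - equiZF :: e} (the UL case is identical), while in the reverse direction $\gamma_{k}^{\dl,\tZF}+1\geq\lambda_{k}^{\dl}+1\geq\exp(\bar{R}_{k}^{\dl})$ recovers \eqref{eq: prob. bi-obj. - ZF :: c}. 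The remaining constraints \eqref{eq: prob. general form bi-obj. trade-off :: c} and \eqref{eq: power constraint - ZF} are transcribed verbatim.

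I expect the main subtlety to be verifying that these three reformulations compose losslessly, i.e., that every optimizer $(\boldsymbol{\omega}^{\star},\mathbf{p}^{\star})$ of \eqref{eq: prob. bi-obj. - ZF} extends to an optimizer $(\boldsymbol{\omega}^{\star},\mathbf{p}^{\star},\boldsymbol{\lambda}^{\star},\phi^{\star})$ of \eqref{eq: prob. bi-obj. - equiZF} with the same objective value, and vice versa. Both directions will follow from the monotonicity arguments above: setting $\lambda_{k}^{\dl,\star}=\gamma_{k}^{\dl,\tZF}(\boldsymbol{\omega}^{\star},\mathbf{p}^{\star})$, $\lambda_{\ell}^{\ul,\star}=\gamma_{\ell}^{\ul,\tZF}(\boldsymbol{\omega}^{\star},\mathbf{p}^{\star})$ and $\phi^{\star}=1/\bar{P}(\mathcal{C}^{\mathtt{ZF},\star},\boldsymbol{\mu}^{(\kappa)})$ is feasible for \eqref{eq: prob. bi-obj. - equiZF} and attains the same value as the original objective; conversely, any feasible tuple of \eqref{eq: prob. bi-obj. - equiZF} projects onto a feasible pair of \eqref{eq: prob. bi-obj. - ZF} with no greater objective, since decreasing any $\lambda$ or increasing $\phi$ only relaxes constraints while worsening the ratio. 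This bijective correspondence closes the equivalence.
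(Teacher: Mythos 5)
Your proposal is correct and follows essentially the same route as the paper's own proof in Appendix D: introduce the soft-SINR variables $\boldsymbol{\lambda}$ and the scalar $\phi$ as hypograph/epigraph liftings, observe that monotonicity of $\ln(1+\cdot)$ and positivity of $\tilde{F}_{\mathtt{SE}}$ force constraints \eqref{eq: prob. bi-obj. - equiZF :: b}--\eqref{eq: prob. bi-obj. - equiZF :: d} to be active at the optimum, and convert the QoS constraints into the linear forms \eqref{eq: prob. bi-obj. - equiZF :: e}--\eqref{eq: prob. bi-obj. - equiZF :: f}. Your explicit two-directional feasibility correspondence is a slightly more careful writeup of the same argument the paper sketches.
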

\begin{proof}	Please see Appendix \ref{app: theorem 3}.\end{proof}

 In problem \eqref{eq: prob. bi-obj. - equiZF}, the nonconvex parts include \eqref{eq: prob. bi-obj. - equiZF :: b}-\eqref{eq: prob. bi-obj. - equiZF :: d}.
Following the spirit of the ICA method, we introduce a new variable $\xi\in\mathbb{R}_{++}$ to equivalently split constraint \eqref{eq: prob. bi-obj. - equiZF :: b}  into two constraints as
\begin{subnumcases}{\label{eq: convexity - Ptotal - equivAD.}
\eqref{eq: prob. bi-obj. - equiZF :: b} \Leftrightarrow} 
\bar{P}_{\mathtt{T}}(\mathcal{C}^\tZF,\boldsymbol{\mu}^{(\kappa)}) \leq  h_{\mathtt{qu}}(\xi) \triangleq \xi^2, \IEEEyessubnumber\label{eq: convexity - Ptotal - equivAD.a}\\
\xi^2\leq \phi(\eta\bar{P}_{\mathtt{T}}(\mathcal{C}^\tZF,\boldsymbol{\mu}^{(\kappa)})+1-\eta)\IEEEyessubnumber\label{eq: convexity - Ptotal - equivAD.b}.
\end{subnumcases}
We note that $\bar{P}_{\mathtt{T}}(\mathcal{C}^\tZF,\boldsymbol{\mu}^{(\kappa)})$ is a linear function in $(\boldsymbol{\omega}, \mathbf{p})$ due to \eqref{eq:relationshipWOmega}, and thus, \eqref{eq: convexity - Ptotal - equivAD.b} is a second order cone (SOC) representative \cite[Sec. 3.3]{Ben:2001}. Using \eqref{eq: quad. approx.}, the nonconvex constraint \eqref{eq: convexity - Ptotal - equivAD.a}  can be iteratively replaced by the following convex one
\begin{IEEEeqnarray}{rCl}\label{eq: convexity - Ptotal - equiv.}
\bar{P}_{\mathtt{T}}(\mathcal{C}^\tZF,\boldsymbol{\mu}^{(\kappa)}) &\leq&  h_{\mathtt{qu}}^{(\kappa)}(\xi).
\end{IEEEeqnarray}
Next, we can rewrite \eqref{eq: prob. bi-obj. - equiZF :: c} equivalently as
\begin{subnumcases}{\label{eq: prob. bi-obj. - equiZF :: cEq}
\eqref{eq: prob. bi-obj. - equiZF :: c} \Leftrightarrow} 
h_{\mathtt{fr}}(\sqrt{\omega_k},\psi_{k}^{\dl})\triangleq \omega_k/\psi_{k}^{\dl} \geq  \lambda_{k}^{\dl},\;\forall k\in\mathcal{K}, \IEEEyessubnumber\label{eq: prob. bi-obj. - equiZF :: cEq.a}\qquad\\
\psi_{k}^{\dl} \geq \frac{\|\mathbf{g}_{k}^{\mathtt{cci}}\mathbf{D}^{\ul}\|^2+\sigma_{k}^2}{|\mathbf{h}_{k}^\dl\mathbf{h}_{k}^{\tZF}|^2}, \forall k\in\mathcal{K}, \IEEEyessubnumber\label{eq: prob. bi-obj. - equiZF :: cEq.b}\qquad
\end{subnumcases}
where $\boldsymbol{\psi^{\dl}}\triangleq\{\psi_{k}^{\dl}\}_{\forall k\in\cK}$ are new variables. Since \eqref{eq: prob. bi-obj. - equiZF :: cEq.b} is a linear constraint, we focus on convexifying \eqref{eq: prob. bi-obj. - equiZF :: cEq.a} using \eqref{eq: frac. approx.} as
	\begin{align}\label{eq: SINR cons DL convex.}
	h_{\mathtt{fr}}^{(\kappa)}(\sqrt{\omega_k},\psi_{k}^{\dl}) & \geq \lambda_{k}^{\dl},\;\forall k\in\mathcal{K}.
	\end{align}
Similarly, with new variables $\boldsymbol{\psi^{\ul}}\triangleq\{\psi_{\ell}^{\ul}\}_{\forall \ell\in\cL}$, constraint \eqref{eq: prob. bi-obj. - equiZF :: d} is iteratively replaced by the following two convex ones
\begin{subequations} \label{eq: prob. bi-obj. - equiZF :: dConcex}
	\begin{align} 
	&h_{\mathtt{fr}}^{(\kappa)}(\sqrt{p_{\ell}},\psi_{\ell}^{\ul})  \geq \lambda_{\ell}^{\ul},\;\forall \ell\in\mathcal{L}, \label{eq: prob. bi-obj. - equiZF :: dConcex:a}\\
	&\psi_{\ell}^{\ul} \geq \frac{\|\mathbf{a}_{\ell}^{\mathtt{ZF}}\mathbf{\tilde{G}}^{\AtoA} \mathbf{W}^{\mathtt{ZF}}\|^2+\sigma_{\mathtt{AP}}^2\|\mathbf{a}_{\ell}^{\mathtt{ZF}}\|^2}{|\mathbf{a}_{\ell}^{\mathtt{ZF}}\mathbf{h}_{\ell}^{\ul}|^2}, \forall \ell\in\mathcal{L}.\label{eq: prob. bi-obj. - equiZF :: dConcex:b}\quad
	\end{align}
\end{subequations}

With the above discussions based on the ICA method,  we obtain the following approximate problem of \eqref{eq: prob. bi-obj. - equiZF} (and hence \eqref{eq: prob. bi-obj. - ZF}) with the convex set solved  at iteration $ (\kappa+1) $:
\begin{subequations} \label{eq: prob. bi-obj. - frac. prog.}
	\begin{IEEEeqnarray}{cl}
		\underset{\boldsymbol{\omega}, \mathbf{p},\boldsymbol{\lambda},\boldsymbol{\psi},\xi,\phi}{\max} &\quad  \frac{\tilde{F}_{\mathtt{SE}}\bigl(\boldsymbol{\Lambda}_{\dl},\boldsymbol{\Lambda}_{\ul}\bigr)}{\phi} \label{eq: prob. bi-obj. - frac. prog. :: a} \\
		\st&  \quad \eqref{eq: prob. general form bi-obj. trade-off :: c}, \eqref{eq: power constraint - ZF}, \eqref{eq: prob. bi-obj. - equiZF :: e}, \eqref{eq: prob. bi-obj. - equiZF :: f}, \nonumber\\
	&\quad \eqref{eq: convexity - Ptotal - equivAD.b},	  \eqref{eq: convexity - Ptotal - equiv.}, \eqref{eq: prob. bi-obj. - equiZF :: cEq.b}, \eqref{eq: SINR cons DL convex.}, \eqref{eq: prob. bi-obj. - equiZF :: dConcex},
		 \label{eq: prob. bi-obj. - frac. prog. :: b} \qquad
	\end{IEEEeqnarray}							
\end{subequations}
where $\boldsymbol{\psi}\triangleq\{\boldsymbol{\psi^{\dl}},\boldsymbol{\psi^{\ul}}\}.$ We can see that  the set of variables in \eqref{eq: prob. bi-obj. - frac. prog.} is independent of the numbers of APs and antennas, which differs from the original problem \eqref{eq: prob. general form bi-obj. trade-off}.
The objective \eqref{eq: prob. bi-obj. - frac. prog. :: a} is a concave-over-linear function, which can be efficiently addressed by the Dinkelbach transformation. In particular, we have
\begin{align} \label{eq: prob. bi-obj. - Dinkelbach}
\mathcal{V}^{\tZF,(\kappa+1)} = \underset{\mathcal{V}^\tZF\in\mathcal{F}^{(\kappa)}}{\argmax}\; \ddot{F}^{(\kappa)}\triangleq\tilde{F}_{\mathtt{SE}}\bigl(\boldsymbol{\Lambda}_{\dl},\boldsymbol{\Lambda}_{\ul}\bigr) - t^{(\kappa)}\phi,
\end{align}
where $ \mathcal{F}^{(\kappa)}\triangleq\bigl\{ \mathcal{V}^\tZF\triangleq\bigl\{\boldsymbol{\omega}, \mathbf{p},\boldsymbol{\lambda},\boldsymbol{\psi},\xi,\phi\bigr\}|\eqref{eq: prob. bi-obj. - frac. prog. :: b} \mbox{ holds}\bigr\} $ and $ t^{(\kappa)}\triangleq\frac{\tilde{F}_{\mathtt{SE}}\bigl(\boldsymbol{\Lambda}_{\dl}^{(\kappa)},\boldsymbol{\Lambda}_{\ul}^{(\kappa)}\bigr)}{\phi^{(\kappa)}} $.  To start the computational procedure, an initial feasible point $(\boldsymbol{\omega}^{(0)}, \mathbf{p}^{(0)},\boldsymbol{\psi}^{(0)},\xi^{(0)},\phi^{(0)})$ for \eqref{eq: prob. bi-obj. - Dinkelbach} is required. This is done by guaranteeing QoS constraints  \eqref{eq: prob. bi-obj. - equiZF :: e} and \eqref{eq: prob. bi-obj. - equiZF :: f} to be satisfied. Thus, we successively solve the following simplified problem of  \eqref{eq: prob. bi-obj. - frac. prog.}
\begingroup
\allowdisplaybreaks\begin{subequations} \label{eq: prob. bi-obj. - frac. prog. - init}
	\begin{IEEEeqnarray}{cl}
		\underset{\mathcal{V}^\tZF,\boldsymbol{\theta}}{\max} &\quad \Theta \triangleq  \sum\nolimits_{k\in\mathcal{K}}\theta_{k}^{\dl} + \sum\nolimits_{\ell\in\mathcal{L}}\theta_{\ell}^{\ul} \label{eq: prob. bi-obj. - frac. prog. - init :: a} \\
		\st &  \quad \lambda_{k}^{\dl} + 1 - \exp(\bar{R}_{k}^{\dl})  \geq \theta_{k}^{\dl}, \; \forall  k \in \mathcal{K}, \\
		& \quad \lambda_{\ell}^{\ul} + 1 - \exp(\bar{R}_{\ell}^{\ul})  \geq \theta_{\ell}^{\ul}, \; \forall  \ell \in \mathcal{L}, \\
		& \quad \theta_{k}^{\dl} \leq 0, \; \forall  k \in \mathcal{K},\;\; \theta_{\ell}^{\ul} \leq 0,  \forall  \ell \in \mathcal{L},\\
		&\quad \eqref{eq: prob. general form bi-obj. trade-off :: c}, \eqref{eq: power constraint - ZF},  \eqref{eq: convexity - Ptotal - equivAD.b},	  \eqref{eq: convexity - Ptotal - equiv.}, \eqref{eq: prob. bi-obj. - equiZF :: cEq.b}, \eqref{eq: SINR cons DL convex.}, \eqref{eq: prob. bi-obj. - equiZF :: dConcex}, \label{eq: prob. bi-obj. - frac. prog. - init :: b} \quad
	\end{IEEEeqnarray}							
\end{subequations}\endgroup
where $ \boldsymbol{\theta}\triangleq\{\theta_{k}^{\dl},\theta_{\ell}^{\ul}\}_{\forall k\in\mathcal{K}, \ell\in\mathcal{L}} $ are slack variables. The initial feasible point $(\boldsymbol{\omega}^{(0)}, \mathbf{p}^{(0)},\boldsymbol{\psi}^{(0)},\xi^{(0)})$ for \eqref{eq: prob. bi-obj. - Dinkelbach} is found  when the objective  \eqref{eq: prob. bi-obj. - frac. prog. - init :: a} is close to zero, and $\phi^{(0)}=1/\bar{P}(\mathcal{C}^{\mathtt{ZF},{(0)}},\boldsymbol{\mu}^{(0)})$. The proposed algorithm for solving the ZF-based SE-EE problem \eqref{eq: prob. general form bi-obj. trade-off} is summarized in Algorithm \ref{alg: ZFD problem}. 

\begin{algorithm}[t]
	\begin{algorithmic}[1]
		\fontsize{10}{11}\selectfont
		\protect\caption{Proposed Algorithm to Solve ZF-based SE-EE Problem \eqref{eq: prob. general form bi-obj. trade-off}}
		
		\label{alg: ZFD problem}
		
		\STATE \textbf{Initialization:} Compute ZF  precoder and receiver: $\mathbf{H}^{\tZF}$ and $\mathbf{A}^{\tZF}$.
		\STATE Set $ \ddot{F}^{(\kappa)} := -\infty $, $\kappa:=0$, and solve \eqref{eq: prob. bi-obj. - frac. prog. - init} to generate an initial feasible point $(\boldsymbol{\omega}^{(0)}, \mathbf{p}^{(0)},\boldsymbol{\psi}^{(0)},\xi^{(0)},\phi^{(0)})$.
		
		\REPEAT[Solving \eqref{eq: prob. bi-obj. - ZF}]
		\STATE Solve \eqref{eq: prob. bi-obj. - Dinkelbach} to obtain the optimal solution $(\boldsymbol{\omega}^{\star}, \mathbf{p}^{\star},\boldsymbol{\lambda}^{\star},\boldsymbol{\psi}^{\star},\xi^{\star},\phi^{\star})$ and $\ddot{F}^{(\kappa+1)}$.
		
		\STATE Update $(\boldsymbol{\omega}^{(\kappa+1)}, \mathbf{p}^{(\kappa+1)},\boldsymbol{\psi}^{(\kappa+1)},\xi^{(\kappa+1)},\phi^{(\kappa+1)}) :=(\boldsymbol{\omega}^{\star}, \mathbf{p}^{\star},\boldsymbol{\psi}^{\star},\xi^{\star},\phi^{\star})$.
		\STATE Set $ \kappa := \kappa + 1 $.
		\UNTIL Convergence
		
		\STATE Update $(\boldsymbol{\omega}^{*}, \mathbf{p}^{*}) := (\boldsymbol{\omega}^{(\kappa)}, \mathbf{p}^{(\kappa)} )$.
		
		\STATE Use \eqref{eq:relationshipWOmega} to recover $\mathbf{w}_{km},\;\forall k\in\cK, m\in\cM$.
		
		\STATE Compute $ \boldsymbol{\alpha}^{*} $ and $ \boldsymbol{\mu}^{*} $ as in \eqref{eq: compute alpha} and \eqref{eq: compute mu}, respectively.
		
		\STATE Repeat Steps 1-9 with fixed values of $ \boldsymbol{\alpha}^{*} $ and $ \boldsymbol{\mu}^{*} $ to find the exact solution of $(\mathbf{w}^{*}, \mathbf{p}^{*})$.
		
		\STATE Use $ (\mathbf{w}^{*}, \mathbf{p}^{*},\boldsymbol{\alpha}^{*},\boldsymbol{\mu}^{*}) $ to compute $ F(\mathbf{w}^{*}, \mathbf{p}^{*},\boldsymbol{\alpha}^{*},\boldsymbol{\mu}^{*}) $ in \eqref{eq: prob. general form bi-obj. trade-off :: a}.

		\STATE {\textbf{Output:}  The optimal solution $ (\mathbf{w}^{*}, \mathbf{p}^{*},\boldsymbol{\alpha}^{*},\boldsymbol{\mu}^{*}) $ and objective value $ F(\mathbf{w}^{*}, \mathbf{p}^{*},\boldsymbol{\alpha}^{*},\boldsymbol{\mu}^{*}) $.}
	\end{algorithmic} 
\end{algorithm}

\subsection{Convergence and Complexity Analysis}

\subsubsection{Convergence Analysis}
Algorithm~\ref{alg: ZFD problem} is mainly based on inner approximation and Dinkelbach transformation, where their convergences were proved in \cite{Marks:78} and \cite{Dinkelbach67}, respectively. Specifically, as provided in \cite{Dinkelbach67}, the optimal solution of problem \eqref{eq: prob. bi-obj. - Dinkelbach} is derived as a minorant obtained at each iteration of the ICA-based approximate problem \eqref{eq: prob. bi-obj. - frac. prog.}. From the properties of the ICA method \cite{Beck:JGO:10},  it follows that $ \mathcal{F}^{(\kappa)} \subset \mathcal{F}^{(\kappa+1)} $, resulting in a sequence $(\boldsymbol{\omega}^{(\kappa)}, \mathbf{p}^{(\kappa)},\boldsymbol{\psi}^{(\kappa)},\xi^{(\kappa)},\phi^{(\kappa)})$ of improved points of \eqref{eq: prob. bi-obj. - frac. prog.} and a sequence of non-decreasing objective values. Moreover, $ \mathcal{F}^{(\kappa)} $ is a convex connected set, as shown in \cite{Hieu:IEEETWC:June2019}. Therefore, Algorithm~\ref{alg: ZFD problem} is guaranteed to arrive at least at  a locally optimal solution for \eqref{eq: prob. bi-obj. - equiZF} (and hence \eqref{eq: prob. bi-obj. - ZF}) when $ \kappa \rightarrow \infty $,  satisfying the Karush-Kuhn-Tucker conditions according to \cite[Theorem 1]{Marks:78}.

\subsubsection{Computational Complexity}
Before deriving the complexity, we consider the following stages of Algorithm~\ref{alg: ZFD problem}:
\begin{itemize}
	\item The pre-processing stage computes  constant matrices, i.e., $ \mathbf{H}^{\mathtt{ZF}} $ and $ \mathbf{A}^{\mathtt{ZF}} $. This stage contributes a minor part to the total complexity since it only executes the matrix computation, which can be done easily. For the ZF design, it implies a computational complexity of $ \mathcal{O}(N^3) $ floating operations (flops).
	\item {\majrev The major complexity comes from the optimization of the involved variables. This is associated to the main loop in solving \eqref{eq: prob. bi-obj. - ZF} (i.e., Steps 3-7 in Algorithm \ref{alg: ZFD problem}), of which the per-iteration complexity is  $ \mathcal{O}\bigl(c^{2.5}(v^2+c)\bigr) $, with $ v=(3K+3L+2) $ scalar decision variables and $ c= (M+3K+4L+2)  $ linear/SOC constraints \cite{Hieu:IEEETWC:June2019}.}
\end{itemize}
It can be observed that the per-iteration complexity for the main loop is less dependent on $ M $, since problem \eqref{eq: prob. bi-obj. - Dinkelbach} only contains $ M $ linear constraints in \eqref{eq: power constraint - ZF}. Moreover, the size of $ \mathbf{H}^{\mathtt{ZF}} $ and $ \mathbf{A}^{\mathtt{ZF}} $ remains unchanged. Therefore, the complexity based on the proposed design is almost the same for different transmission strategies. {\majrev In Table~\ref{tab: Complexities}, we provide the major complexities of the proposed ZF and MRT/MRC, which are quite comparative. However, the execution time partially depends on the complexity of solving the successive approximate program in an iterative algorithm,  as well as the feasible region under the structure of constant matrices in the pre-processing stage. This will be further elaborated through  numerical examples.}

\begin{table*}[t]
	\centering
	\captionof{table}{Complexity Comparison.}
	\label{tab: Complexities}
	\scalebox{0.9}{\majrev
		\begin{tabular}{l|c|c}
			\hline
			Transmission strategies & Pre-processing (flops) & Per-iteration complexity for optimization\\
			\hline\hline
			Proposed ZF-based design & $\mathcal{O}(N^3)$ & $ \mathcal{O}\bigl(c^{2.5}(v^2+c)\bigr) $ \\ \hline
			MRT/MRC &  $\mathcal{O}(N^2)$ & $ \mathcal{O}\bigl(c^{2.5}(v^2+c)\bigr) $ \\
			\hline		   				
	\end{tabular} }
\end{table*}

\section{Proposed Solution Based on Improved Zero-Forcing}\label{sec:IZF}
From \eqref{eq:ZFSINIUL}, it can be seen that the IAI and RSI are still the main limitations  of FD CF-mMIMO. Thus, our next endeavor  is to propose an IZF-based design  to manage the network interference more effectively. In particular,  ONB-ZF with PCA, referred to as ONB-ZF-PCA, is developed for the DL transmission to mitigate the effects of IAI and RSI. In addition, we also adopt a ZF-SIC receiver  for UL reception to further accelerate the performance of the ZF-based design.
\subsection{IZF-Based Transmission Design}
\subsubsection{ONB-ZF-PCA-Based DL Transmission}
{\majrev The key idea of the ONB-ZF-PCA method is to utilize ONB-ZF for MUI cancellation and exploit PCA to depress the effects of IAI and RSI on UL transmission.  For  $\mathbf{W}^{\mathtt{IZF}}\triangleq[\mathbf{w}_{1}\;\dots\;\mathbf{w}_{K}]\in\mathbb{C}^{N\times K}$, we introduce the ONB-ZF-PCA procedure and its operation as follows.}
\begin{procedure}\label{procedure1}
	The ONB-ZF-PCA precoder is computed as
	\begin{align} \label{eq: PCA precoding matrix}
	\mathbf{W}^{\mathtt{IZF}} = \mathbf{P}\mathbf{Q}^H\mathbf{\tilde{T}}(\mathbf{D}^{\dl})^{\frac{1}{2}},
	\end{align}
	where $\mathbf{D}^{\dl}$ was already defined in \eqref{eq: power constraint - ZF}, and other matrix components $\mathbf{P}$, $\mathbf{Q}$, and $\mathbf{\tilde{T}}$  are determined  by the following steps:
	\begin{enumerate}
		\item Using the PCA method, we can express the covariance matrix of $ \mathbf{\tilde{G}}^{\AtoA} $ as
		\begin{align} \label{eq: SVD}
		(\mathbf{\tilde{G}}^{\AtoA})^H\mathbf{\tilde{G}}^{\AtoA} = \mathbf{U}\mathbf{E}\mathbf{U}^H,
		\end{align}
		where $ \mathbf{U} $ and $ \mathbf{E} $ are  unitary and diagonal matrices, respectively, which are derived by using singular value decomposition (SVD).
		\item Let $ \mathbf{e}\triangleq[e_1,\cdots,e_N]$, where $ e_1 \geq e_2 \geq \cdots \geq e_N \geq 0 $ are  eigenvalues on the diagonal of $ \mathbf{E} $. We define $ \bar{N} $ as
		\begin{align} \label{eq: N_bar top eigenvalues}
		\bar{N} \triangleq \min \Bigl\{ \bigl\{n\in \mathcal{N}\triangleq\{1,\cdots,N\}\bigl| f_{n}^{\mathtt{ER}}(\mathbf{e})\geq\delta \; 
		\wedge \;n<N-1 \bigr\} \cup \{N-1\} \Bigr\},
		\end{align}
		where $ f_{n}^{\mathtt{ER}}(\mathbf{e})=\frac{\sum_{i=1}^{n}e_i}{\sum_{i=1}^{N}e_i} $ denotes the ratio  of the first $ n $ eigenvalues to the sum of all eigenvalues, and $ \delta $ is a percentage threshold for the sum of first $ \bar{N} $ eigenvalues over all eigenvalues with $ \delta =99\%$.
		\item \textbf{Compute} $ \mathbf{P}$: We compute $ \mathbf{P}=\mathbf{I}-\mathbf{\bar{U}}\mathbf{\bar{U}}^H $, where  $ \mathbf{\bar{U}} $ is generated from the first $ \bar{N} $ columns of $ \mathbf{U} $.
		\item \textbf{Compute}  $ \mathbf{Q} $: The economy-size LQ decomposition is applied to the compound matrix $\mathbf{H}^{\dl} \mathbf{P}$ such as  $\mathbf{H}^{\dl} \mathbf{P}= \mathbf{T}\mathbf{Q} $, where $ \mathbf{T}\in\mathbb{C}^{K\times K} $ is a lower triangular matrix and $ \mathbf{Q}\in\mathbb{C}^{K\times N} $ is an ONB matrix. Since $ K \ll N $ in CF-mMIMO, the economy-size decomposition can be exploited to reduce the computational complexity, leading to $\mathbf{Q}\mathbf{Q}^H=\mathbf{I}$ but $\mathbf{Q}^H\mathbf{Q}\neq\mathbf{I}$.
		\item \textbf{Compute} $ \mathbf{\tilde{T}}$: The entry at the $ i $-th row and $ j $-th column of $ \mathbf{\tilde{T}}\in\mathbb{C}^{K\times K} $, denoted by $\tilde{T}_{ij}$, is generally computed by using the following recursive expression:
		\begin{align}
		\tilde{T}_{ij} = \begin{cases}
		-\frac{1}{T_{ii}}\sum_{j'=j}^{i-1}T_{ij'}\tilde{T}_{j'j}, & \text{if } i>j,\\
		1, & \text{if } i=j,\\
		0, & \text{otherwise},
		\end{cases}
		\end{align}
		where $ T_{ij} $ denotes the entry at the $ i $-th row and $ j $-th column of $\mathbf{T}$ obtained in Step 4.
	\end{enumerate} 
\end{procedure}

\begin{proof}
Please see Appendix \ref{app: procedure1}.
\end{proof}
\begin{remark}\label{remark4}
	We note that the matrix $ \mathbf{P} $ computed via the PCA method aims at mitigating the effects of IAI and RSI. On the other hand, we can use the ZF precoder matrix based on ONB only (i.e., by skipping Steps 1-3). The LQ decomposition in Step 4 is applied to $\mathbf{H}^{\dl} $ instead of $\mathbf{H}^{\dl} \mathbf{P} $, i.e., $\mathbf{H}^{\dl}= \mathbf{T}\mathbf{Q} $. Then, a simpler precoder matrix can be constructed as
	\begin{align}\label{eq:ONB-ZF}
	\mathbf{W}^{\mathtt{ONB\text{-}ZF}} = \mathbf{Q}^H\mathbf{\tilde{T}}(\mathbf{D}^{\dl})^{\frac{1}{2}}.
	\end{align}
\end{remark}

Based on \textbf{Procedure \ref{procedure1}} and in the same manner as \eqref{eq: power constraint - ZF}, the power constraint at $\AP$ becomes
\begin{align} \label{eq: power constraint - DPC}
\tr\bigl(\mathbf{\tilde{T}}^H\mathbf{Q}\mathbf{P}\mathbf{B}_{m}\mathbf{P}^H\mathbf{Q}^H\mathbf{\tilde{T}}\mathbf{D}^{\dl}\bigr) \leq \mu_{m}^{(\kappa)} P_{\mathtt{AP}_m}^{\max}, \;\forall m\in\mathcal{M}.
\end{align}
The SINR at $ \DLUi{k} $ with the ONB-ZF-PCA precoder is
\begin{align}
\gamma_{k}^{\dl,\tIZF}(\boldsymbol{\omega}, \mathbf{p}) = \frac{\omega_k|\mathbf{h}_{k}^\dl\mathbf{h}_{k}^\tIZF|^2}{\|\mathbf{g}_{k}^{\mathtt{cci}}\mathbf{D}^{\ul}\|^2+\sigma_{k}^2},
\end{align}
where $ \mathbf{h}_{k}^\tIZF $ is the $ k $-th column of  $ \mathbf{H}^{\tIZF} \triangleq \mathbf{P}\mathbf{Q}^H\mathbf{\tilde{T}}$. 

\subsubsection{ZF-SIC-Based UL Transmission}
The decoded signals are successively removed before decoding the next signals, following the SIC principle \cite{Tse:book:05}.  Assuming that the decoding
order follows the UL UEs' index, i.e., $\ell = 1,2,\cdots, L,$  the $\mathtt{U}^{\mathtt{u}}_\ell$'s signal is decoded  by treating  signals of $\mathtt{U}^{\mathtt{u}}_{\ell'}$ for  $\ell' \geq \ell +1$ as noise, while other signals are  removed by SIC. The remaining MUI at $\ULU$ is further canceled by the ZF receiver. Thus, the ZF-SIC receiver for decoding $\ULU$'s signal can be expressed as 
 $ \mathbf{a}_{\ell}^{\tIZF} $, which is the first row of  $ \mathbf{A}_{\ell}^{\tIZF}\triangleq\bigl((\mathbf{\bar{H}}_{\ell}^\ul)^H\mathbf{\bar{H}}_{\ell}^\ul\bigr)^{-1}(\mathbf{\bar{H}}_{\ell}^\ul)^H \in\mathbb{C}^{(L-\ell+1)\times N} $, where $\mathbf{\bar{H}}_{\ell}^{\ul}\triangleq\bigl[\mathbf{h}_{\ell}^{\ul},\cdots, \mathbf{h}_{L}^{\ul}\bigr]\in\mathbb{C}^{N\times (L-\ell+1)} $ and $\mathbf{\bar{D}}^{\ul}\triangleq\diag\bigl(\bigl[\sqrt{p_{\ell}},\cdots, \sqrt{p_{L}}\bigr]\bigr)$. Accordingly, the SINR of $ \ULU $ with  ZF-SIC receiver becomes
\begin{align}
\gamma_{\ell}^{\ul,\tIZF}(\boldsymbol{\omega}, \mathbf{p}) = \frac{p_{\ell}|\mathbf{a}_{\ell}^{\tIZF}\mathbf{h}_{\ell}^{\ul}|^2 }{\|\mathbf{a}_{\ell}^{\tIZF}\mathbf{\tilde{G}}^{\AtoA} \mathbf{W}^{\mathtt{IZF}}\|^2+\sigma_{\mathtt{AP}}^2\|\mathbf{a}_{\ell}^{\tIZF}\|^2},
\end{align}
where $ \| \mathbf{a}_{\ell}^{\tIZF}\mathbf{\bar{H}}_{\ell+1}^{\ul} \mathbf{\bar{D}}_{\ell+1}^{\ul}\|^2\rightarrow0 $ due to the ZF-SIC matrix $\mathbf{A}_{\ell}^{\tIZF}$.

\subsection{IZF-Based Optimization Problem}
Similarly to problem \eqref{eq: prob. bi-obj. - ZF}, the IZF-based optimization problem can be expressed as 
\begingroup\allowdisplaybreaks\begin{subequations} \label{eq: prob. bi-obj. - IZF}
	\begin{IEEEeqnarray}{cl}
		\underset{\boldsymbol{\omega}, \mathbf{p}}{\max} &\quad  \eta \bar{F}_{\mathtt{SE}}\bigl(\boldsymbol{\Gamma}_{\dl}^{\mathtt{IZF}},\boldsymbol{\Gamma}_{\ul}^{\mathtt{IZF}}\bigr) 
	+ (1-\eta) \bar{F}_{\mathtt{EE}}\bigl(\boldsymbol{\Gamma}_{\dl}^{\mathtt{IZF}},\boldsymbol{\Gamma}_{\ul}^{\mathtt{IZF}},\mathcal{C}^{\mathtt{IZF}},\boldsymbol{\mu}^{(\kappa)}\bigr) \label{eq: prob. bi-obj. - IZF :: a} \qquad\\
		\st	& \quad 
		R\bigl(\gamma_{k}^{\dl,\tIZF}(\boldsymbol{\omega}, \mathbf{p})\bigr) \geq \bar{R}_{k}^{\dl}, \; \forall  k \in \mathcal{K}, \label{eq: prob. bi-obj. - IZF :: c} \\
		&  \quad
		R\bigl(\gamma_{\ell}^{\ul,\tIZF}(\boldsymbol{\omega}, \mathbf{p})\bigr) \geq \bar{R}_{\ell}^{\ul}, \; \forall \ell \in \mathcal{L}, \label{eq: prob. bi-obj. - IZF :: d} \\
		&\quad \eqref{eq: prob. general form bi-obj. trade-off :: c}, \eqref{eq: power constraint - DPC}, \label{eq: prob. bi-obj. - IZF :: b} 
	\end{IEEEeqnarray}							
\end{subequations} \endgroup
where $ \boldsymbol{\Gamma_{\dl}^{\mathtt{IZF}}}\triangleq\{\gamma_{k}^{\dl,\tIZF}(\boldsymbol{\omega}, \mathbf{p})|\forall k\in\mathcal{K}\} $, $ \boldsymbol{\Gamma_{\ul}^{\mathtt{IZF}}}\triangleq\{\gamma_{\ell}^{\ul,\tIZF}(\boldsymbol{\omega}, \mathbf{p})|\forall \ell\in\mathcal{L}\} $ and $\mathcal{C}^{\mathtt{IZF}}\triangleq\{\boldsymbol{\omega}, \mathbf{p}\}$.  It can be seen that the nonconvex parts in problems \eqref{eq: prob. bi-obj. - ZF} and \eqref{eq: prob. bi-obj. - IZF} have a similar structure, and thus, we can slightly modify Algorithm \ref{alg: ZFD problem}  to solve \eqref{eq: prob. bi-obj. - IZF}. 

\section{Proposed Heap-Based Pilot Assignment Strategy}\label{sec:Pilo Assignment}
The developments presented in the previous sections are based on the assumption of perfect CSI to realize the potential performance of the proposed FD CF-mMIMO.  However,  it is of practical interest to take imperfect CSI into account. Each coherence interval in FD CF-mMIMO can be divided into two phases: UL training and data transmission in DL-UL. The coherence interval is short, and thus, each UE should practically be assigned a non-orthogonal
pilot sequence, resulting in the well-known  pilot contamination problem \cite{Marzetta:IEEETWC:Nov2010}.  Therefore, the main goal of this section is to develop  a pilot assignment algorithm based on the heap structure to reduce the effect of pilot contamination and training complexity. 
 We note that  the pilot assignment based on greedy method given in \cite{Ngo:TWC:Mar2017} not only requires high complexity due to the strategy of trial and error, but also is inapplicable to FD CF-mMIMO due to the additional channel estimation of CCI links. 
\begin{remark}
The channels of fading loop and IAI (i.e., $\mathbf{G}_{mm}^{\mathtt{SI}}$ and $\mathbf{G}_{mm'}^{\AtoA},\forall m\neq m'$) are assumed
to be the same as before. The reason for the fading loop channel is that the transmit and receive antennas are co-located at the APs. On the other hand, APs are generally fixed in a given area without mobility, and thus, the IAI channels can be perfectly acquired at the CPU at the initial deployment of the FD CF-mMIMO networks. 
\end{remark}

\subsection{Channel Estimation  and MSE Minimization Problem}
We assume that all UEs share the same orthogonal set of pilots, and the DL and UL UEs send  training sequences in different intervals to allow the channel estimation of CCI links. Let $ \tau<\min\{K,L\} $ be the length of pilot sequences. Then, the pilot set is defined as $ \boldsymbol{\Xi}\triangleq[ \boldsymbol{\varphi}_{1},\;\cdots\;,\boldsymbol{\varphi}_{\tau} ]\in\mathbb{C}^{\tau \times\tau} $, where $ \boldsymbol{\varphi}_i\in\mathbb{C}^{\tau\times 1}$ satisfies the orthogonality, i.e., $\boldsymbol{\varphi}_{i}^H\boldsymbol{\varphi}_{i'}=1$ if $i= i' \in \mathcal{T}_{\mathtt{p}}\triangleq\{1,\cdots,\tau \} $, and $\boldsymbol{\varphi}_{i}^H\boldsymbol{\varphi}_{i'}=0$, otherwise.
  We introduce the assignment variable $ \upsilon_{ij}\in\{0,1\} $ to determine whether the $ i $-th pilot sequence is assigned to the $ j $-th UE, with $ j\in\mathcal{T}_{\ul}\triangleq\{1,\cdots,U\} $ and  $ U\in\{K,L\} $. As a result, the pilot assigned to  UE $j$ can be expressed as	$ \boldsymbol{\bar{\varphi}}_j = \boldsymbol{\varphi}_i $ if $ \upsilon_{ij}=1 $. Let $ \boldsymbol{\bar{\Xi}}\triangleq[ \boldsymbol{\bar{\varphi}}_{1},\cdots,\boldsymbol{\bar{\varphi}}_{U} ]\in\mathbb{C}^{\tau \times U} $ be the pilot assignment matrix, such as $\boldsymbol{\bar{\Xi}} = \boldsymbol{\Xi}\boldsymbol{\Upsilon},$
where $ \boldsymbol{\Upsilon}\triangleq[\upsilon_{ij}]_{i\in\mathcal{T}_{\mathtt{p}},j\in\mathcal{T}_{\ul} }\in\mathbb{C}^{\tau\times U} $ following by the condition:
$\sum_{i\in\mathcal{T}_{\mathtt{p}} } \upsilon_{ij} \leq 1,\;\forall j\in\mathcal{T}_{\ul}.$

The training procedure for FD CF-mMIMO in TDD operation is executed in two phases. In the first phase, UL UEs send their pilot signals to APs to perform the channel estimation, and at the same time DL UEs also receive UL pilots to estimate CCI channels. In the second phase, DL UEs send their pilot signals along with the estimates of CCI links to APs. The training signals received at $ \AP $ can be written as $\mathbf{Y}_{m}^{\mathtt{tr}} = \sum_{j'\in\mathcal{T}_{\ul}} \sqrt{\tau p_{j'}^{\mathtt{tr}}}\boldsymbol{\bar{\varphi}}_{j'} \mathbf{h}_{mj'} + \mathbf{Z}_m,$
where $ \mathbf{h}_{mj}\in\{\mathbf{h}_{km}^{\dl}, (\mathbf{h}_{m\ell}^{\ul})^H \}\in\mathbb{C}^{1\times N_m} $, and $ p_{j}^{\mathtt{tr}} $ and $ \mathbf{Z}_{m}\sim \mathcal{CN}(0,\sigma^2_{\mathtt{AP}}\mathbf{I}) $ denote the UL training power of UE $j$ and the AWGN, respectively.  Using the linear MMSE (LMMSE) estimation \cite{KayBook93}, the channel estimate of $ \mathbf{h}_{mj} $ is given as
\begin{align} \label{eq: channel estimate}
\mathbf{\hat{h}}_{mj} & = \frac{\sqrt{\tau p_{j}^{\mathtt{tr}} }\beta_{mj}}{\sum_{j'\in\mathcal{T}_{\ul}} \tau p_{j'}^{\mathtt{tr}} \beta_{mj'} |\boldsymbol{\bar{\varphi}}_{j}^H\boldsymbol{\bar{\varphi}}_{j'}|^2 + \sigma^2_{\mathtt{AP}}} \boldsymbol{\bar{\varphi}}_{j}^H\mathbf{Y}_{m}^{\mathtt{tr}},
\end{align}
where $\beta_{mj}\in\{\beta_{km}^{\dl},\beta_{m\ell}^{\ul}\} $ is  the large-scale fading of the link between $ \AP $ and UE $j$.
We denote $\mathbf{\tilde{h}}_{mj}=\mathbf{h}_{mj}-\mathbf{\hat{h}}_{mj}$ as the channel estimation error, which is independent of $\mathbf{h}_{mj}$. The
elements of $\mathbf{\tilde{h}}_{mj}$ can be modeled as i.i.d. $\mathcal{CN}(0,\varepsilon_{mj})$ RVs, where 
\begin{IEEEeqnarray}{cl} \label{eq: MSE}
\varepsilon_{mj}  =   \beta_{mj}  \Bigl(1 - \frac{\tau p_{j}^{\mathtt{tr}} \beta_{mj}}{\sum_{j'\in\mathcal{T}_{\ul}} \tau p_{j'}^{\mathtt{tr}} \beta_{mj'} |\boldsymbol{\bar{\varphi}}_{j}^H\boldsymbol{\bar{\varphi}}_{j'}|^2 + \sigma^2_{\mathtt{AP}}} \Bigr).\quad\
\end{IEEEeqnarray}
 In an analogous fashion,  the channel estimate and  channel estimation error of CCI link $g_{k\ell}^{\mathtt{cci}}$ executed at $ \DLU$ are given as
	\begin{align} \label{eq: channel estimate CCI}
	\hat{g}_{k\ell}^{\mathtt{cci}}  = \frac{\sqrt{\tau p_{\ell}^{\mathtt{tr}} }\beta_{k\ell}^{\mathtt{cci}}}{\sum_{\ell'\in\mathcal{L}} \tau p_{\ell'}^{\mathtt{tr}} \beta_{k\ell'}^{\mathtt{cci}} |\boldsymbol{\bar{\varphi}}_{\ell}^H\boldsymbol{\bar{\varphi}}_{\ell'}|^2 + \sigma_{k}^2} \boldsymbol{\bar{\varphi}}_{\ell}^H\mathbf{y}_{k}^{\mathtt{tr},\mathtt{cci}}, 
	\end{align}
and $\tilde{g}_{k\ell}^{\mathtt{cci}}\sim\mathcal{CN}(0,\varepsilon_{k\ell}^{\mathtt{cci}})$, respectively, where	
	\begin{align}
	\varepsilon_{k\ell}^{\mathtt{cci}}  = \beta_{k\ell}^{\mathtt{cci}}\Bigl(1-\frac{\tau p_{\ell}^{\mathtt{tr}} \beta_{k\ell}^{\mathtt{cci}}}{\sum_{\ell'\in\mathcal{L}} \tau p_{\ell'}^{\mathtt{tr}} \beta_{k\ell'}^{\mathtt{cci}} |\boldsymbol{\bar{\varphi}}_{\ell}^H\boldsymbol{\bar{\varphi}}_{\ell'}|^2 + \sigma_{k}^2}\Bigr).
	\end{align}
Here, $\beta_{k\ell}^{\mathtt{cci}}$ denotes the large-scale fading of  CCI link  $\ULU \rightarrow\DLU$,  and   $ \mathbf{y}_{k}^{\mathtt{tr},\mathtt{cci}}=\sum_{\ell\in\mathcal{T}_{\ul}} \sqrt{\tau p_{j'}^{\mathtt{tr}}}\boldsymbol{\bar{\varphi}}_{\ell} g_{k\ell}^{\mathtt{cci}} + \mathbf{z}_{k} $, with $ \mathbf{z}_{k}\sim\mathcal{CN}(0,\sigma_{k}^2\mathbf{I}) $, is the UL UEs' training signals  received at $ \DLU $.

The CSI of CCI links is directly fed back to APs using a dedicated control channel to ensure a low-complexity  channel estimation at DL UEs. To mitigate the effects of pilot contamination, a  pilot assignment for the main DL and UL channels  is far more important that of CCI channels. Thus, we consider the following  MSE minimization problem:
\begingroup\allowdisplaybreaks\begin{subequations} \label{eq: prob. MSE}
	\begin{IEEEeqnarray}{cl}
		\underset{\boldsymbol{\Upsilon}}{\min} &\quad  \underset{j\in\mathcal{T}_{\ul}}{\max} \sum\nolimits_{m\in\mathcal{M}} \frac{N_{m} \varepsilon_{mj}}{\beta_{mj}} \label{eq: prob. MSE :: a} \\
		\st & \quad \upsilon_{ij}\in\{0,1\},\;\sum\nolimits_{i\in\mathcal{T}_{\mathtt{p}} } \upsilon_{ij} \leq 1,\;\forall i\in\mathcal{T}_{\mathtt{p}},\forall j\in\mathcal{T}_{\ul}.\label{eq: prob. MSE :: b} \qquad
	\end{IEEEeqnarray}							
\end{subequations}\endgroup

\subsection{Heap Structure-Based Pilot Assignment Strategy}
 Problem \eqref{eq: prob. MSE} is a min-max problem for  sum of fractional functions, for which it is hard to find an optimal solution.  For an efficient solution, we first introduce the following theorem.
\begin{theorem} \label{thm: MSE prob.}
	Problem  \eqref{eq: prob. MSE} can be solved via the following tractable problem:
		\begin{IEEEeqnarray}{cl}\label{eq: prob. MSE quad.}
			\underset{\boldsymbol{\Upsilon}}{\min} \quad  \underset{j\in\mathcal{T}_{\ul}}{\max}  \sum\nolimits_{j'\in\mathcal{T}_{\ul}}  \tilde{\beta}_{j'}\boldsymbol{\upsilon}_{j}^H\boldsymbol{\upsilon}_{j'},\ \st\quad \eqref{eq: prob. MSE :: b},\quad
		\end{IEEEeqnarray}							
	where $ \tilde{\beta}_{j'}\triangleq\sum_{m\in\mathcal{M}}N_{m}\tau p_{j'}^{\mathtt{tr}} \beta_{mj'} $. 
\end{theorem}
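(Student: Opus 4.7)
The plan is to reduce the min--max of fractional MSE ratios in \eqref{eq: prob. MSE} to the quadratic surrogate \eqref{eq: prob. MSE quad.} by exposing the pilot--contamination structure hidden inside $\varepsilon_{mj}$. First, I would use the orthogonality $\boldsymbol{\Xi}^H\boldsymbol{\Xi}=\mathbf{I}$ together with the assignment identity $\boldsymbol{\bar{\varphi}}_j=\boldsymbol{\Xi}\boldsymbol{\upsilon}_j$ to get $\boldsymbol{\bar{\varphi}}_j^H\boldsymbol{\bar{\varphi}}_{j'}=\boldsymbol{\upsilon}_j^H\boldsymbol{\upsilon}_{j'}\in\{0,1\}$, so that $|\boldsymbol{\bar{\varphi}}_j^H\boldsymbol{\bar{\varphi}}_{j'}|^2=\boldsymbol{\upsilon}_j^H\boldsymbol{\upsilon}_{j'}$ by idempotency on $\{0,1\}$. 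Substituting this into \eqref{eq: MSE} and isolating the self-contribution $j'=j$ in the denominator yields
\begin{align*}
\frac{\varepsilon_{mj}}{\beta_{mj}} \;=\; \frac{I_{mj}+\sigma_{\mathtt{AP}}^2}{I_{mj}+\tau p_j^{\mathtt{tr}}\beta_{mj}+\sigma_{\mathtt{AP}}^2},\qquad I_{mj}\;\triangleq\;\sum\nolimits_{j'\neq j}\tau p_{j'}^{\mathtt{tr}}\beta_{mj'}\,\boldsymbol{\upsilon}_j^H\boldsymbol{\upsilon}_{j'},
\end{align*}
so the assignment enters the original objective exclusively through the per-AP pilot contamination $I_{mj}$.

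Next I would observe that for every $m$ the map $I_{mj}\mapsto \varepsilon_{mj}/\beta_{mj}$ is strictly increasing. Aggregating across APs with the weights $N_m$ that already appear in \eqref{eq: prob. MSE :: a} gives the scalar proxy $\sum_{m\in\mathcal{M}} N_m I_{mj}=\sum_{j'\neq j}\tilde{\beta}_{j'}\boldsymbol{\upsilon}_j^H\boldsymbol{\upsilon}_{j'}$, with $\tilde{\beta}_{j'}=\sum_{m\in\mathcal{M}} N_m \tau p_{j'}^{\mathtt{tr}}\beta_{mj'}$ as defined in the theorem. Since under the feasibility constraint each UE is assigned exactly one pilot, i.e. $\boldsymbol{\upsilon}_j^H\boldsymbol{\upsilon}_j=1$, the extra term $\tilde{\beta}_j$ obtained by appending $j'=j$ is an assignment-independent constant, and adding it back to the sum produces precisely $\sum_{j'\in\mathcal{T}_{\ul}}\tilde{\beta}_{j'}\boldsymbol{\upsilon}_j^H\boldsymbol{\upsilon}_{j'}$ as required by \eqref{eq: prob. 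MSE quad.}; this constant shift does not alter the \emph{argmin} of the min--max.

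The step I expect to be the main obstacle is promoting the per-AP monotonicity of $\varepsilon_{mj}/\beta_{mj}$ in $I_{mj}$ into an honest equivalence of the aggregated min--max, because a weighted sum of strictly increasing fractional functions is not in general a function of the single scalar $\sum_m N_m I_{mj}$. I would close this gap by exploiting the binary combinatorial nature of $\boldsymbol{\Upsilon}$: each pairwise indicator $\boldsymbol{\upsilon}_j^H\boldsymbol{\upsilon}_{j'}\in\{0,1\}$, and flipping a single co-pilot pairing $j\leftrightarrow j'$ adds the nonnegative increments $\tau p_{j'}^{\mathtt{tr}}\beta_{mj'}$ to every coordinate of $(I_{mj})_{m\in\mathcal{M}}$ simultaneously. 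Hence any two candidate assignments for UE $j$ are comparable coordinate-wise in $(I_{mj})_m$ and induce the same ordering under the original fractional objective and under the linear proxy $\sum_{j'}\tilde{\beta}_{j'}\boldsymbol{\upsilon}_j^H\boldsymbol{\upsilon}_{j'}$, so the two min--max problems share the same combinatorial optimizer, completing the proof.
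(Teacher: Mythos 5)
Your first two steps reproduce the paper's derivation exactly: using $\boldsymbol{\Xi}^H\boldsymbol{\Xi}=\mathbf{I}$ and the idempotency of the binary indicators to write $\varepsilon_{mj}/\beta_{mj}$ as $(I_{mj}+\sigma_{\mathtt{AP}}^2)/(I_{mj}+\tau p_j^{\mathtt{tr}}\beta_{mj}+\sigma_{\mathtt{AP}}^2)$, then aggregating with the weights $N_m$ to obtain $\sum_{j'}\tilde{\beta}_{j'}\boldsymbol{\upsilon}_j^H\boldsymbol{\upsilon}_{j'}$ up to an assignment-independent constant. This matches the paper's Appendix F, which uses the same decomposition and the per-AP monotonicity $\tfrac{x}{y}\le\tfrac{x+z}{y+z}$ before declaring the replacement of $x/y$ by $y$ ``for the ease of solution derivation.'' So far, so good.

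The problem is your closing argument. You correctly identify that a weighted sum of increasing fractions is not a function of the single scalar $\sum_m N_m I_{mj}$, but your fix --- that any two candidate pilot choices for UE $j$ are comparable coordinate-wise in $(I_{mj})_{m\in\mathcal{M}}$ --- is false. Assigning UE $j$ pilot $i_1$ versus pilot $i_2$ replaces one co-pilot set $S_1$ by a different set $S_2$, giving $I_{mj}^{(1)}=\sum_{j'\in S_1}\tau p_{j'}^{\mathtt{tr}}\beta_{mj'}$ and $I_{mj}^{(2)}=\sum_{j'\in S_2}\tau p_{j'}^{\mathtt{tr}}\beta_{mj'}$. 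Since the large-scale fading coefficients $\beta_{mj'}$ vary independently across APs $m$ (different UEs are close to different APs), one can easily have $I_{1j}^{(1)}>I_{1j}^{(2)}$ while $I_{2j}^{(1)}<I_{2j}^{(2)}$: the two vectors are incomparable. Your observation about ``flipping a single co-pilot pairing'' only covers adding or removing one UE from a fixed co-pilot set, not swapping one pilot for another, which is the actual decision the optimization makes. Consequently the claim that the fractional min--max and the linear min--max ``share the same combinatorial optimizer'' does not follow, and indeed the paper never asserts it; the paper only establishes \eqref{eq: prob. MSE quad.} as a tractable surrogate justified by per-AP monotonicity, not as an exactly equivalent problem. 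You should either weaken your conclusion to match (the surrogate is monotone-consistent per AP and hence a reasonable proxy) or drop the equivalence claim, because as stated the last paragraph proves something stronger than is true.
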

\begin{proof}
	Please see Appendix \ref{app: MSE prob.}.
\end{proof}
\begin{remark}
	For $ \bar{\upsilon}_{jj'}=\boldsymbol{\upsilon}_{j}^H\boldsymbol{\upsilon}_{j'} $, the objective function in \eqref{eq: prob. MSE quad.} becomes a kind of bottleneck assignment problem with the partial of  max function replaced by  $ \min\; \underset{j\in\mathcal{T}_{\ul}}{\max}  \sum_{j'\in\mathcal{T}_{\ul}}  \tilde{\beta}_{j'} \bar{\upsilon}_{jj'} $. This  indicates that all pilots play the same roles and the optimal solution for pilot assignment depends on how to cluster  UEs which share the same training sequence. \thmend
\end{remark}

Thus far, we have provided the tractable MSE minimization problem. We now propose the heap structure-based pilot assignment strategy. To do this,  the following definition is invoked.
\begin{definition}
	Min heap $ (\mathcal{H}^{\mathtt{min}}) $ is a tree-based structure, where  $ \mathcal{H}_{\mathtt{p}}^{\mathtt{min}} $ is a parent node of an arbitrary node $ \mathcal{H}_{\mathtt{c}}^{\mathtt{min}} $. Then, the key of $ \mathcal{H}_{\mathtt{p}}^{\mathtt{min}} $ is less than or equal to that of $ \mathcal{H}_{\mathtt{c}}^{\mathtt{min}} $. In a max heap $ (\mathcal{H}^{\mathtt{max}}) $, the key of $ \mathcal{H}_{\mathtt{p}}^{\mathtt{max}} $ is greater than or equal to that of $ \mathcal{H}_{\mathtt{c}}^{\mathtt{max}} $ \cite{Cormen:2001:IA:500824}. We note that a node of  heap contains not only the keys (values) to build a heap structure, but also other specific parameters depending on storage purposes. Therefore, if a node is moved on the heap, all its parameters are also moved in company.
\end{definition}

Let $ \mathcal{H}\in\{ \mathcal{H}^{\mathtt{min}}, \mathcal{H}^{\mathtt{max}} \} $, the following main operations are involved:
\begin{itemize}
	\item \textbf{Generate a heap}  $\bigl( \mathcal{G}(\mathbf{x},\{\mathbf{y}\}) \rightarrowtail \mathcal{H} \bigl)$: The size and keys of $ \mathcal{H} $ follow the size and values of vector $ \mathbf{x} $, where $ \{\mathbf{y}\} $ is a set of parameters.
	\item \textbf{Find min/max value} $\bigl( \mathcal{H} \rightarrow (x,\{\mathbf{y}\}) \bigl)$: To return the root key $ x $ and the parameter set $ \{\mathbf{y}\} $ of $ \mathcal{H} $.
	\item \textbf{Extract the root node} $\bigl( \mathcal{H} \vdash (x,\{\mathbf{y}\}) \bigl)$: To pop the root node out of $ \mathcal{H} $ (i.e., extract the max/min value), and then assign to $ (x,\{\mathbf{y}\}) $. Next, $ \mathcal{H} $ is updated to restore the heap condition.
	\item \textbf{Replace and Sift-down} $\bigl( \mathcal{H} \dashv (x,\{\mathbf{y}\}\bigr) $: To replace the root node with the key $ x $ and its parameter set $ \{\mathbf{y}\} $, and then,  move the root node down in the tree to restore the heap condition.
\end{itemize}

{\majrev Let $\mathbf{u}_{i}^{\mathtt{ID}}$ be the $ i $-th column of the identity matrix of size $\tau$ ($\mathbf{I}_{\tau}$). Then, the feasible set of the $ j $-th column variable of matrix $ \boldsymbol{\Upsilon} $, corresponding to UE $ j $, is determined as $ \boldsymbol{\upsilon}_j \in \{\mathbf{u}_{i}^{\mathtt{ID}}|i\in \mathcal{T}_{\mathtt{p}}\},\ j\in \mathcal{T}_{\mathtt{u}} $, satisfying constraint \eqref{eq: prob. MSE :: b}.
From Theorem \ref{thm: MSE prob.}, if the $ i $-th pilot is assigned to two arbitrary UEs $j, k \in \mathcal{T}_{\mathtt{u}} $, we have
\begin{align}
\sum_{j'\in\mathcal{T}_{\ul}}  \tilde{\beta}_{j'}\boldsymbol{\upsilon}_{j}^H\boldsymbol{\upsilon}_{j'}=\sum_{j'\in\mathcal{T}_{\ul}}  \tilde{\beta}_{j'}\boldsymbol{\upsilon}_{k}^H\boldsymbol{\upsilon}_{j'}=\sum_{j'\in\mathcal{T}_{\ul}}  \tilde{\beta}_{j'}(\mathbf{u}_{i}^{\mathtt{ID}})^H\boldsymbol{\upsilon}_{j'}. \nonumber
\end{align}
We define a pilot-reused coefficient (PRC) of the $ i $-th pilot by
$ \bar{\beta}_i \triangleq \sum_{j'\in\mathcal{T}_{\ul}}  \tilde{\beta}_{j'}(\mathbf{u}_{i}^{\mathtt{ID}})^H\boldsymbol{\upsilon}_{j'}, $ and rewrite \eqref{eq: prob. MSE quad.} equivalently as
\setcounter{equation}{51}
\begin{subequations}
	\begin{align}\label{eq: prob. MSE quad. 1}
	\underset{\boldsymbol{\Upsilon}}{\min} &\quad  \underset{i\in\mathcal{T}_{\mathtt{p}}}{\max} \; \{ \bar{\beta}_i \}, \\  \st &\quad \boldsymbol{\upsilon}_j \in \{\mathbf{u}_{i}^{\mathtt{ID}}|i\in \mathcal{T}_{\mathtt{p}}\},\ j\in \mathcal{T}_{\mathtt{u}}.
	\end{align}
\end{subequations}
It is realized that if the $ i $-th pilot is assigned to UE $ j $, the PRC of the $ i $-th pilot increases by a factor of $\tilde{\beta}_j$. To minimize the maximum of PRCs, a heuristic assignment is executed such that the pilot with the smallest PRC is assigned to UE $ j $ with the largest $ \tilde{\beta}_j $. The following example is to illustrate the procedure of the proposed heap-based pilot assignment.

\textbf{Example:} we consider a scenario where  $ \tau=4 $ pilots need to be assigned to $ U=10 $ UEs with a given large-scale fading as: $[\tilde{\beta}_1, \cdots, \tilde{\beta}_{10}]=[0.0107, 0.0881, 0.1384, 0.0309, 0.0798, 0.0531, 0.0130, \allowbreak 0.0765, 0.0109, \allowbreak 0.0102]$. The assignment progress is described in Table \ref{tab: heap states}.}

{\majrev \vspace{40pt}
\begin{spacing}{1.1}
\begin{longtable}{p{1cm}  c  c  P{6.8cm}}
	\caption{States of heap structures per iteration} \vspace{-10pt} \\
	\toprule
	\#Iter. &  Min-heap $ (\mathcal{H}^{\mathtt{min}}) $ & Max-heap $ (\mathcal{H}^{\mathtt{max}}) $ & \hspace{1.5cm}Processing \\
	\cmidrule(lr){1-4} \cmidrule(lr){1-4}
	\raisebox{-100pt}{0}	 &  \raisebox{-1\totalheight}{\includegraphics[width=0.18\columnwidth]{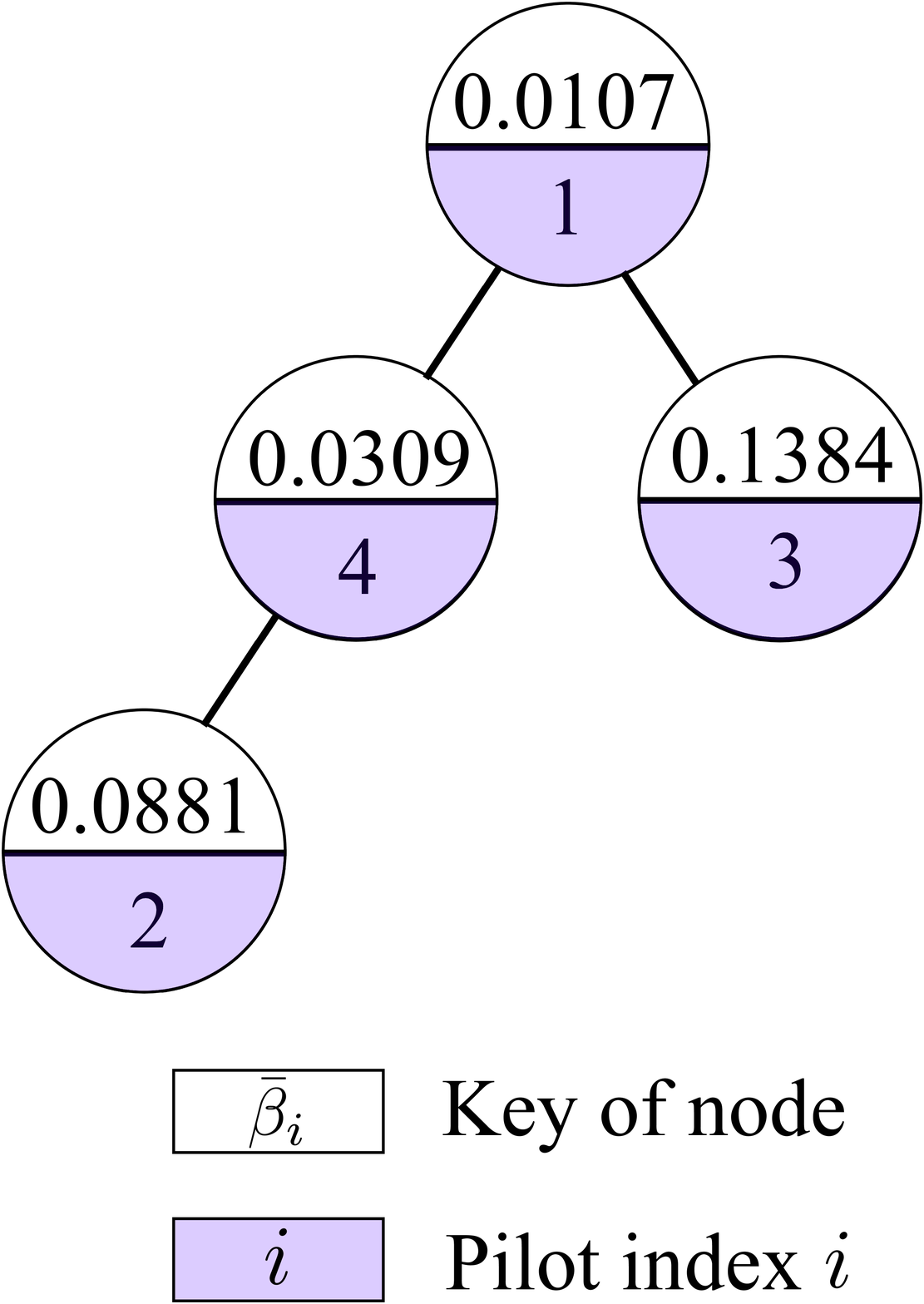}}
	& \raisebox{-1\totalheight}{\includegraphics[width=0.216\columnwidth]{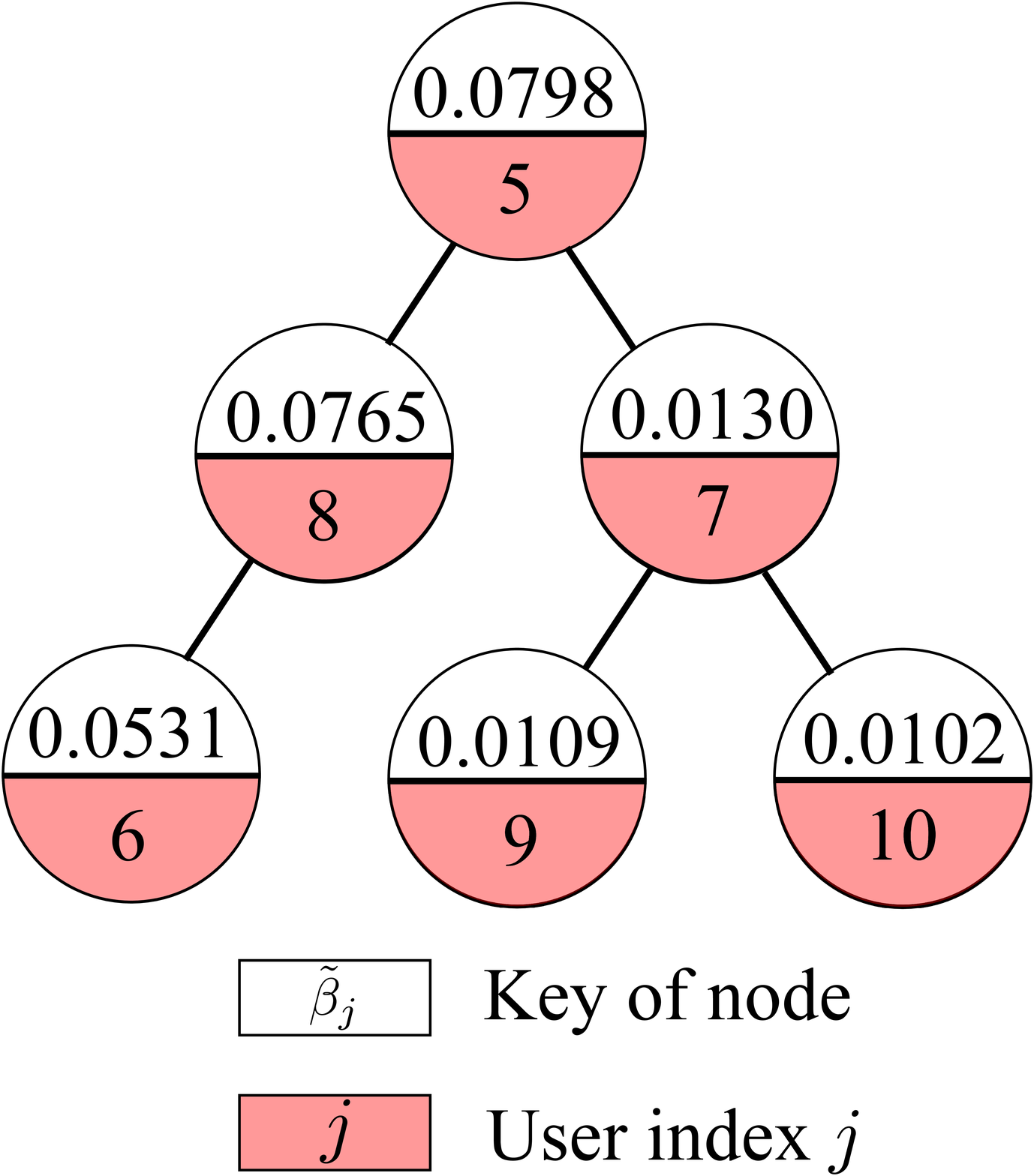}}	& 	Initial state: 
	\begin{itemize}[noitemsep,topsep=0pt,leftmargin=*]
		\item $ \tau $ pilots are assigned to the first $ \tau $ UEs, leading to $ \mathcal{H}^{\mathtt{min}} $ with $ \tau $ nodes
		\item $ U-\tau $ remaining UEs are put into $ \mathcal{H}^{\mathtt{max}} $
	\end{itemize}
	\vspace{5pt} Execution:
	\begin{itemize}[noitemsep,topsep=0pt,leftmargin=*,partopsep=0pt,parsep=0pt]
		\item Pop the root node of $ \mathcal{H}^{\mathtt{max}} $, and assign pilot $ 1 $ to UE $ 5 $
		\item Compute $\bar{\beta}_1=0.0107+0.0798=0.0905$
		\item Update $ \mathcal{H}^{\mathtt{min}} $ and $ \mathcal{H}^{\mathtt{max}} $
	\end{itemize}
	\\ 
	\cmidrule(r){1-1}\cmidrule(lr){2-2}\cmidrule(lr){3-3}\cmidrule(l){4-4}
	\raisebox{-50pt}{1} & \raisebox{-0.85\totalheight}{\includegraphics[width=0.18\columnwidth]{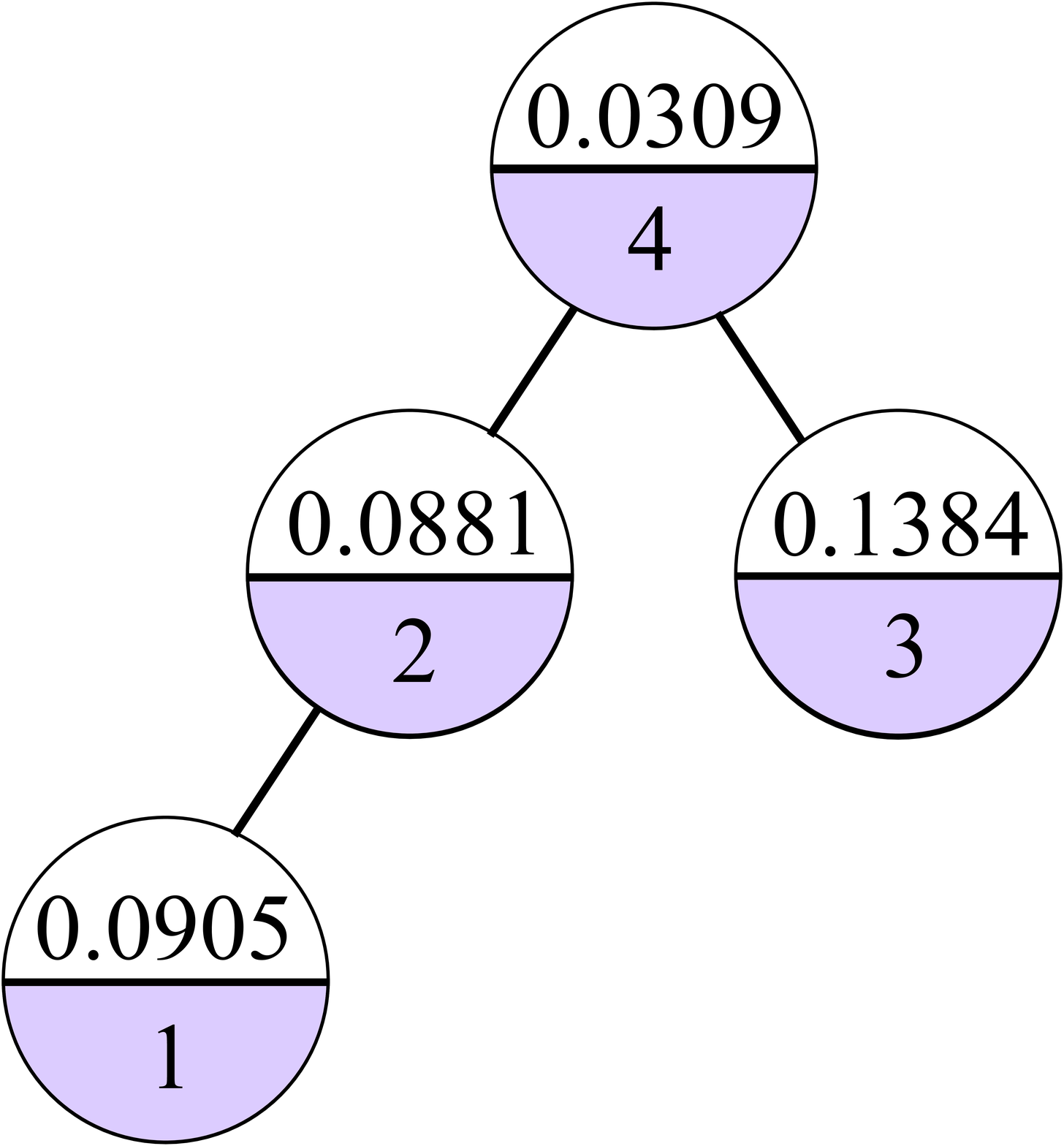}}
	& \raisebox{-0.85\totalheight}{\includegraphics[width=0.18\columnwidth]{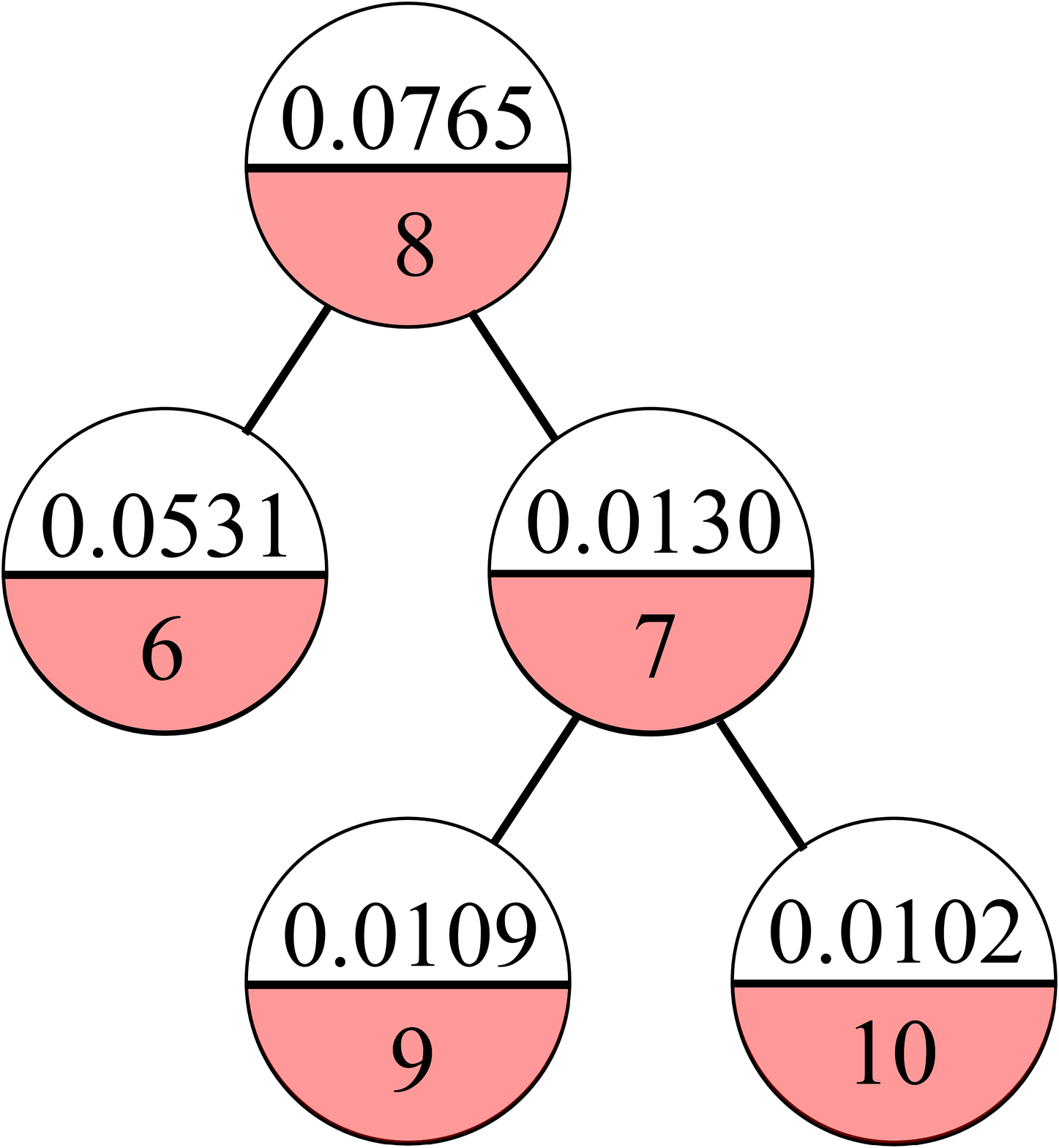}}	& 
	\begin{itemize}[noitemsep,topsep=0pt,leftmargin=*]
		\item Pop the root node of $ \mathcal{H}^{\mathtt{max}} $ and assign pilot $ 4 $ to UE $ 8 $
		\item Compute $\bar{\beta}_4=0.0309+0.0765=0.1074$
		\item Update $ \mathcal{H}^{\mathtt{min}} $ and $ \mathcal{H}^{\mathtt{max}} $
	\end{itemize} \\ 
	\cmidrule(r){1-1}\cmidrule(lr){2-2}\cmidrule(lr){3-3}\cmidrule(l){4-4}
	\raisebox{-40pt}{2} & \raisebox{-0.8\totalheight}{\includegraphics[width=0.18\columnwidth]{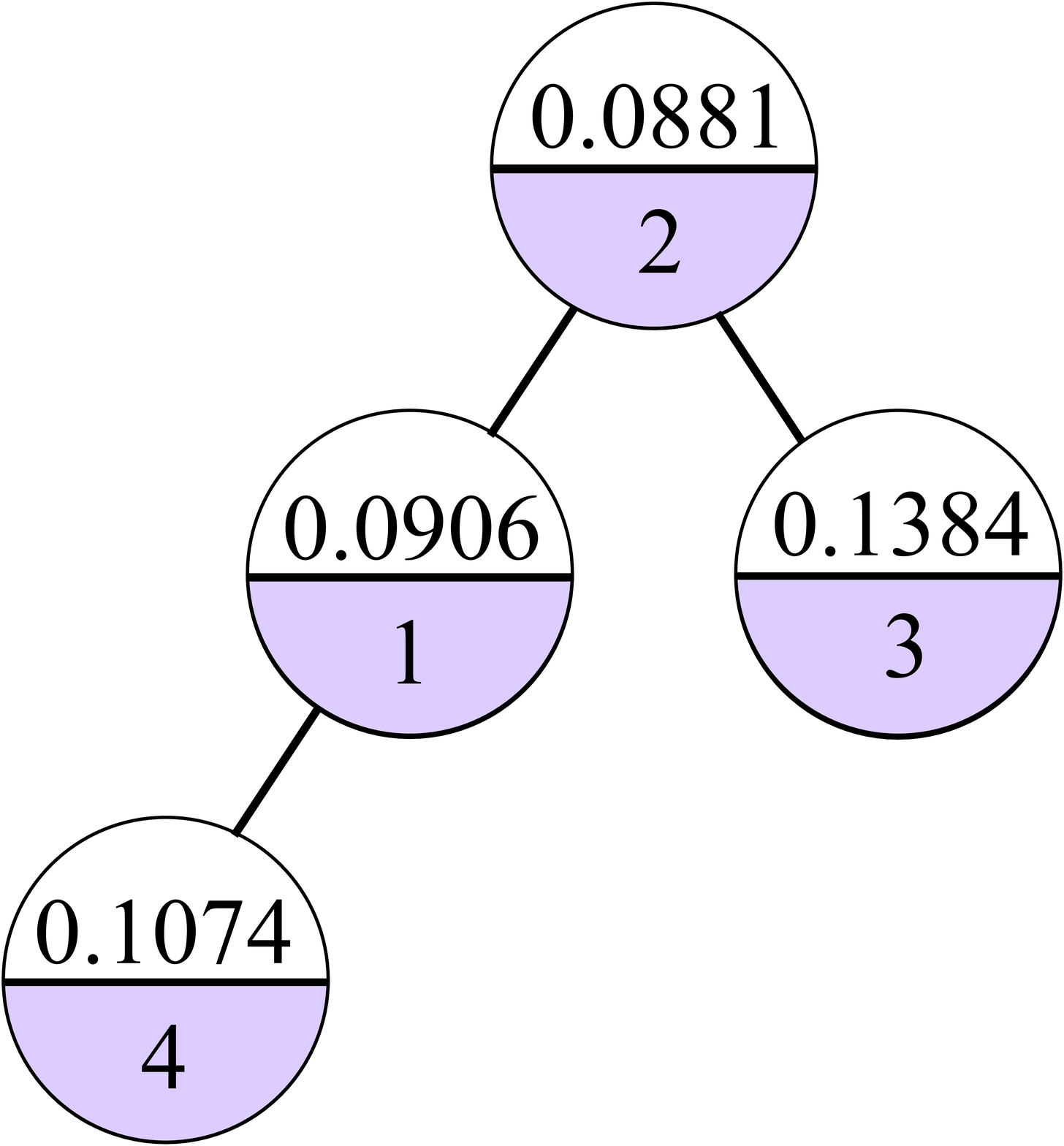}}
	& \raisebox{-0.8\totalheight}{\includegraphics[width=0.144\columnwidth]{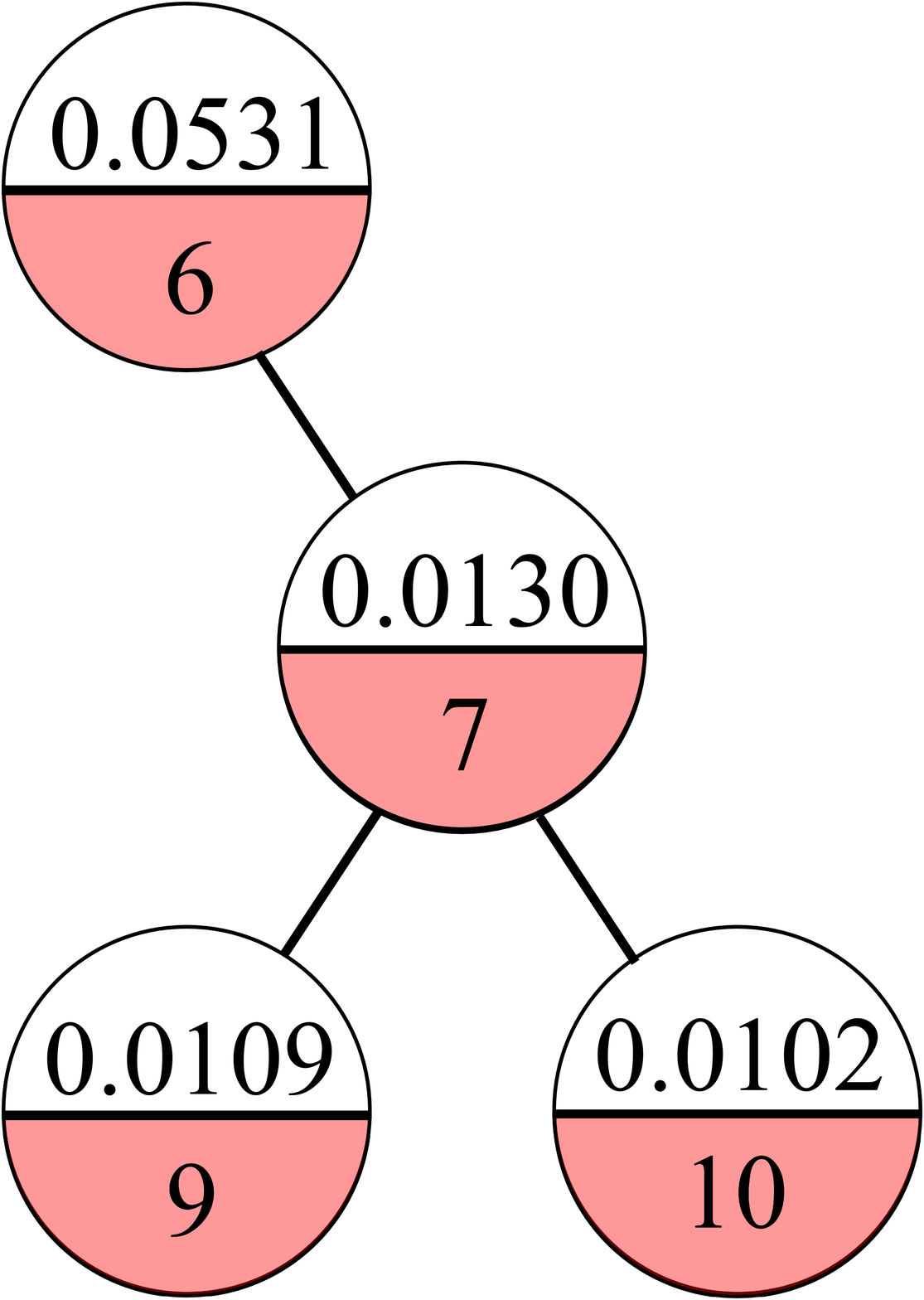}}	& \begin{itemize}[noitemsep,topsep=0pt,leftmargin=*]
		\item Pop the root node of $ \mathcal{H}^{\mathtt{max}} $ and assign pilot $ 2 $ to UE $ 6 $
		\item Compute $\bar{\beta}_2=0.0881+0.0531=0.1412$
		\item Update $ \mathcal{H}^{\mathtt{min}} $ and $ \mathcal{H}^{\mathtt{max}} $
	\end{itemize} \\ 
	\cmidrule(r){1-1}\cmidrule(lr){2-2}\cmidrule(lr){3-3}\cmidrule(l){4-4}
	\raisebox{-40pt}{3} & \raisebox{-0.8\totalheight}{\includegraphics[width=0.18\columnwidth]{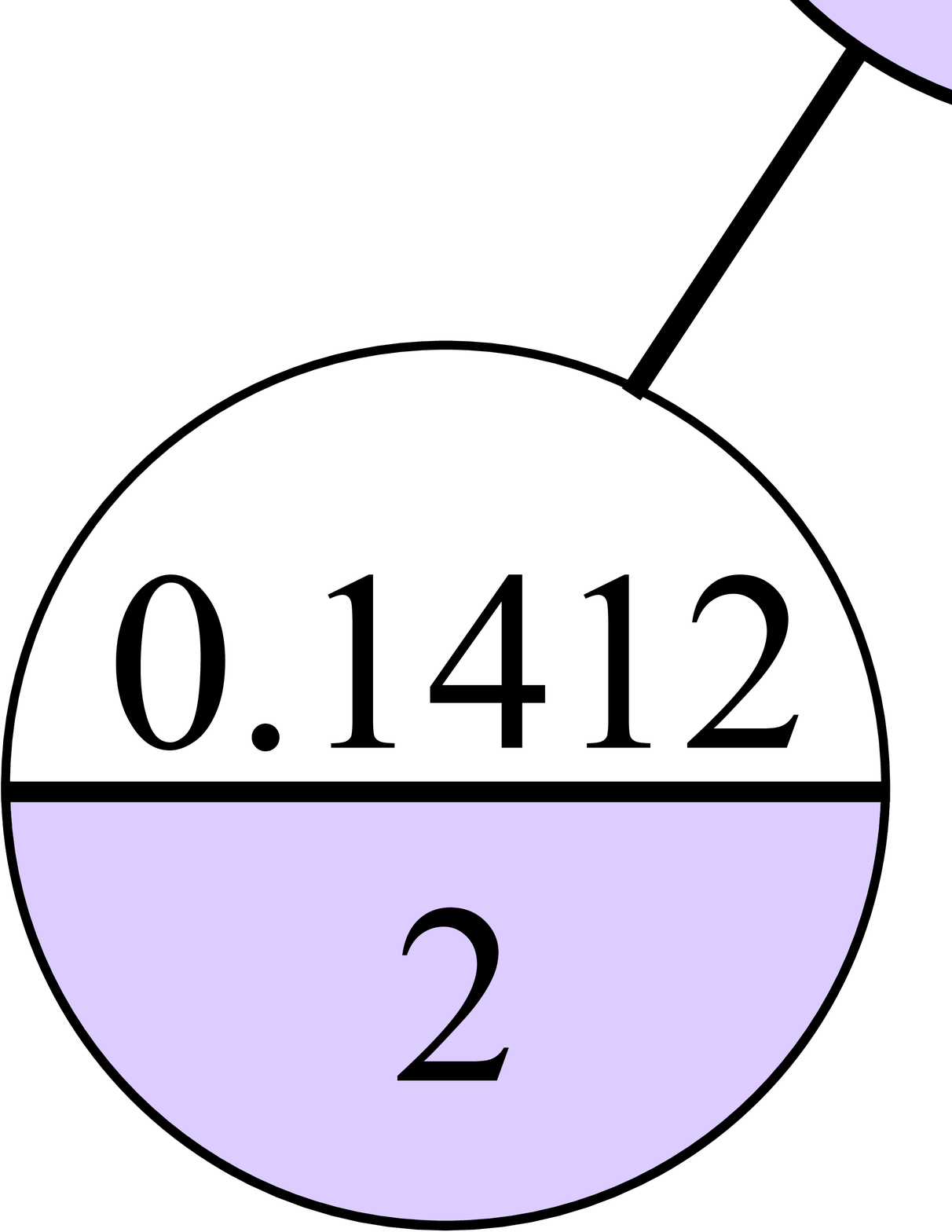}}
	& \raisebox{-0.8\totalheight}{\includegraphics[width=0.144\columnwidth]{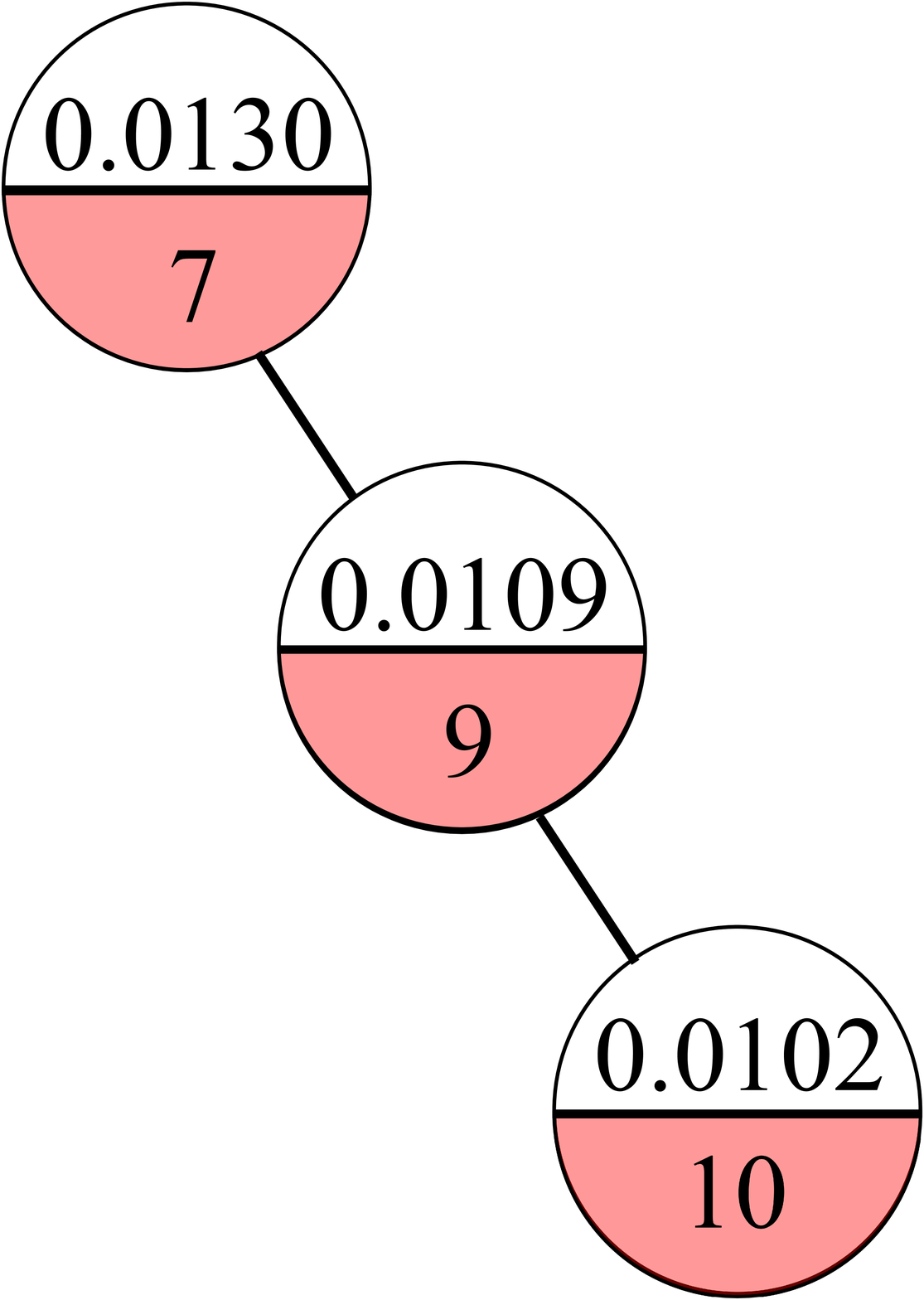}}	& \begin{itemize}[noitemsep,topsep=0pt,leftmargin=*]
		\item Pop the root node of $ \mathcal{H}^{\mathtt{max}} $ and assign pilot $ 1 $ to UE $ 7 $
		\item Compute $\bar{\beta}_1=0.0906+0.0130=0.1036$
		\item Update $ \mathcal{H}^{\mathtt{min}} $ and $ \mathcal{H}^{\mathtt{max}} $
	\end{itemize} \\ 
	\cmidrule(r){1-1}\cmidrule(lr){2-2}\cmidrule(lr){3-3}\cmidrule(l){4-4}
	4 & \raisebox{-0.5\totalheight}{\includegraphics[width=0.2\columnwidth]{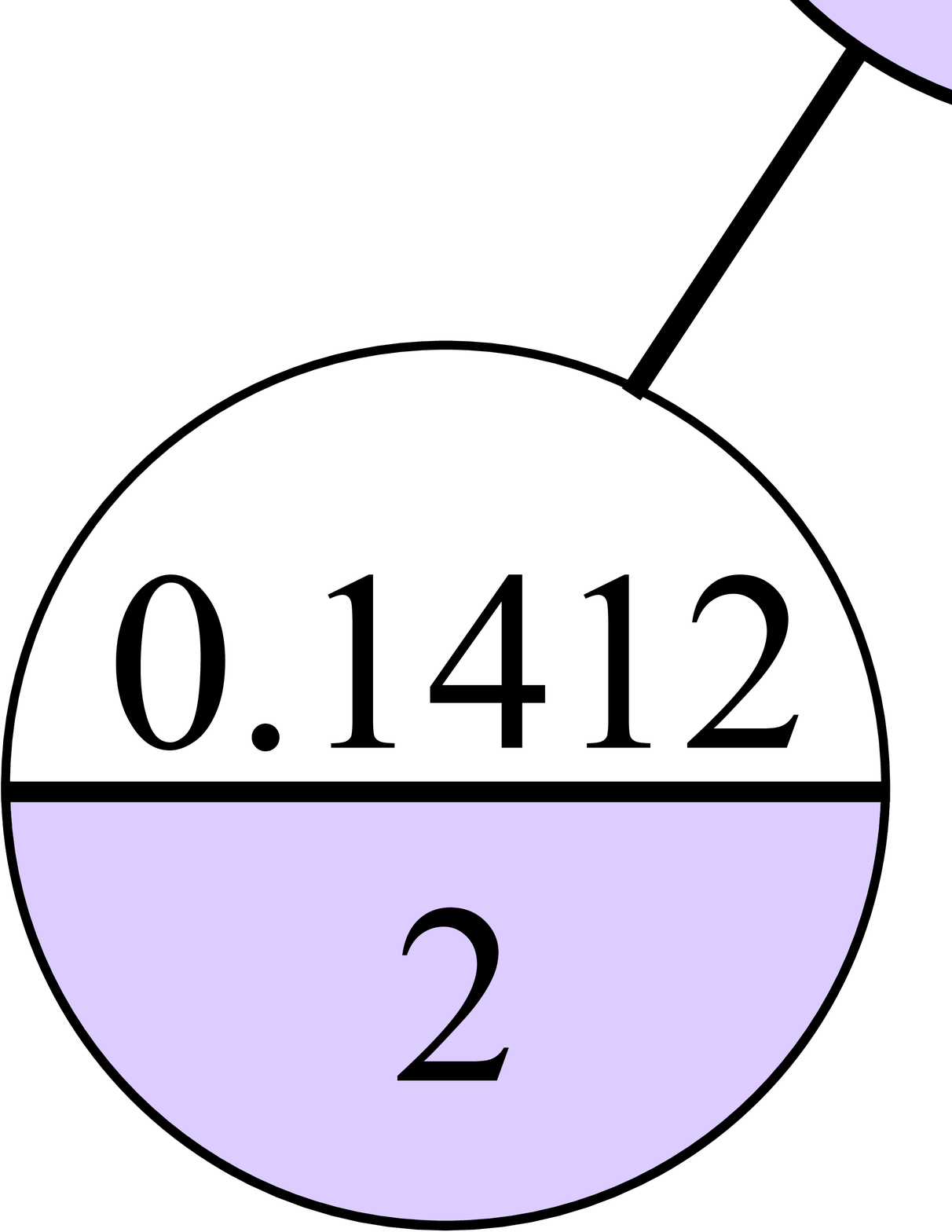}}
	& \raisebox{-0.25\totalheight}{\includegraphics[width=0.112\columnwidth]{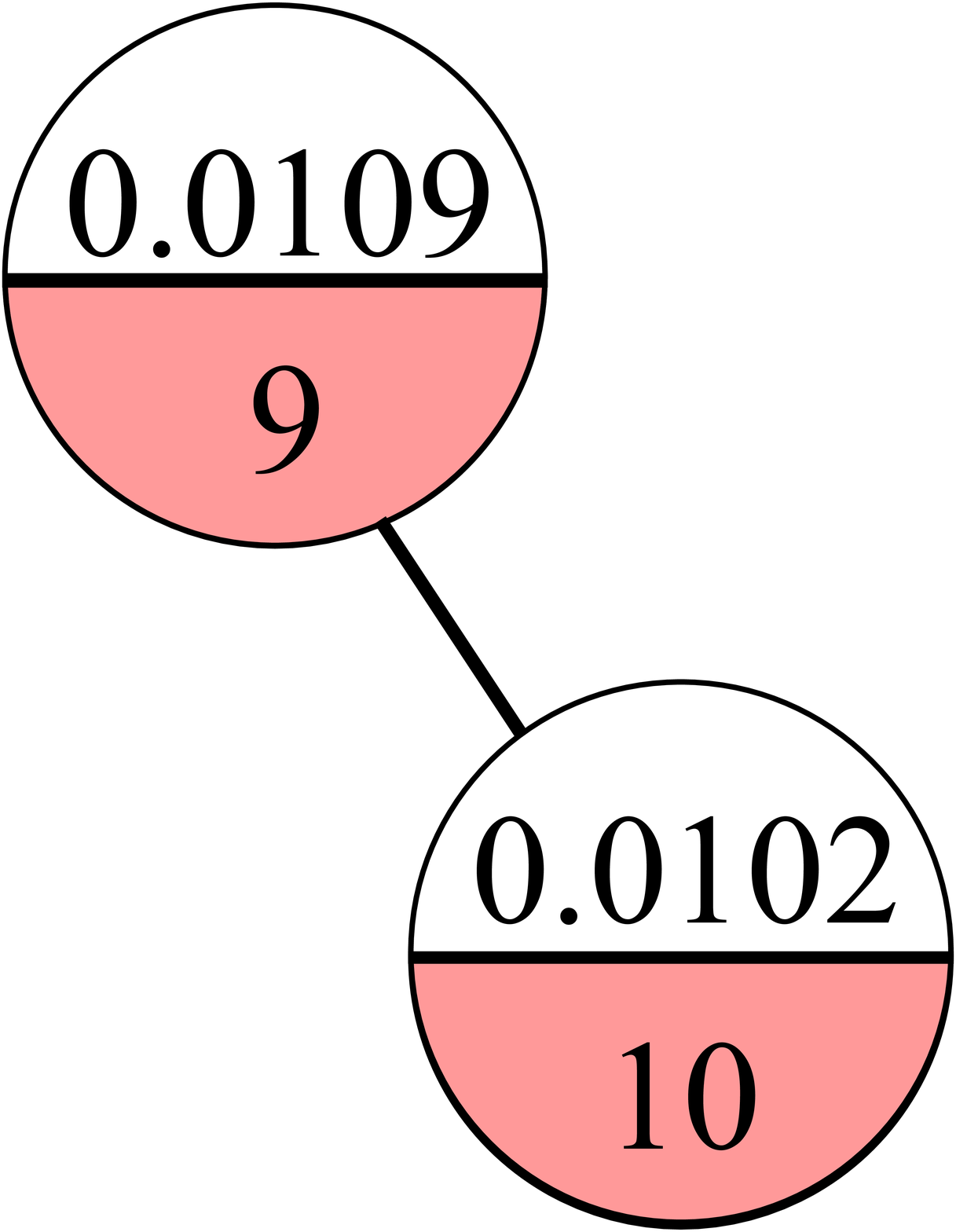}}	& \vspace{-40pt} \begin{itemize}[noitemsep,topsep=0pt,leftmargin=*]
		\item Pop the root node of $ \mathcal{H}^{\mathtt{max}} $ and assign pilot $ 1 $ to UE $ 9 $
		\item Compute $\bar{\beta}_1=0.1036+0.0109=0.1145$
		\item Update $ \mathcal{H}^{\mathtt{min}} $ and $ \mathcal{H}^{\mathtt{max}} $
	\end{itemize} \\ 
	\cmidrule(r){1-1}\cmidrule(lr){2-2}\cmidrule(lr){3-3}\cmidrule(l){4-4}
	5 & \raisebox{-0.5\totalheight}{\includegraphics[width=0.2\columnwidth]{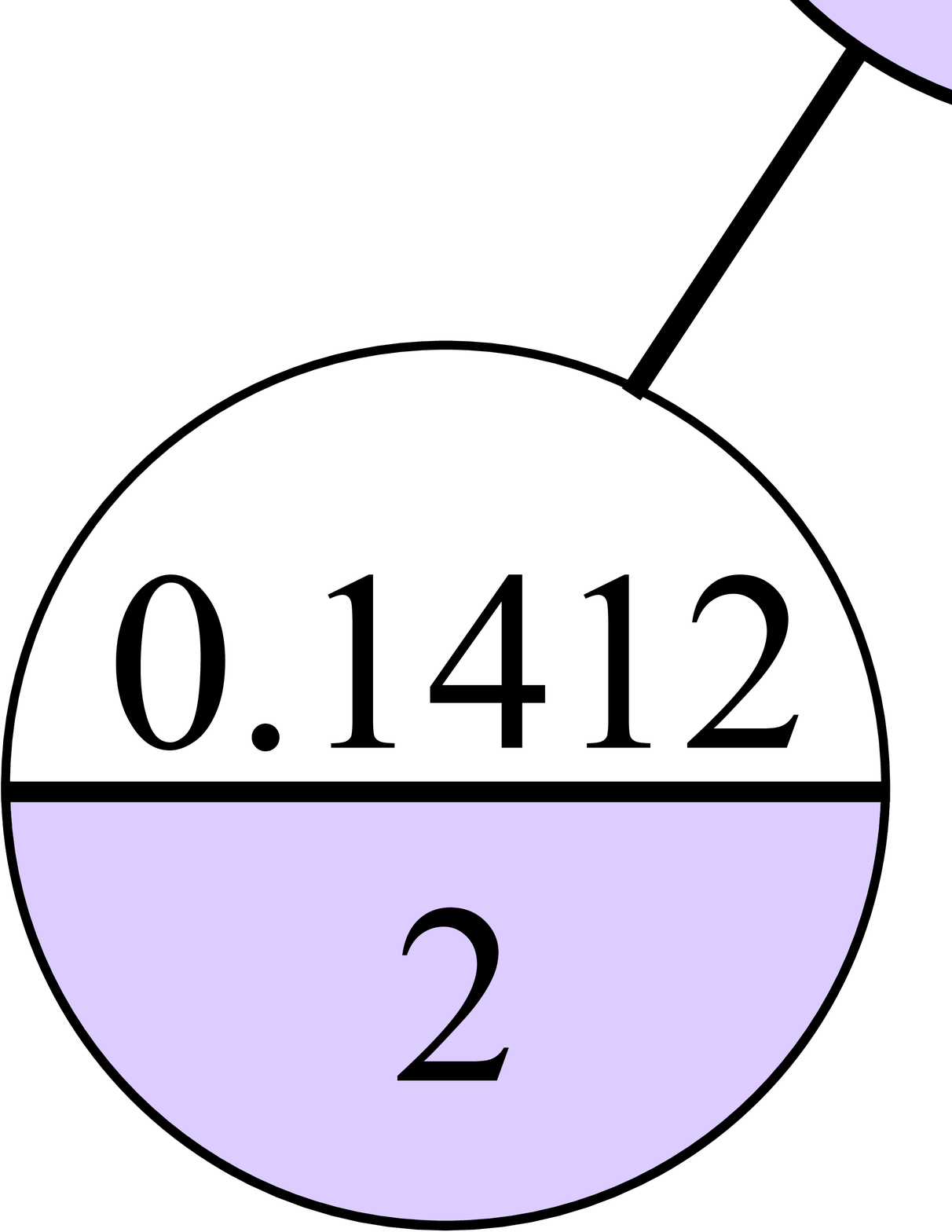}}
	& \raisebox{0.6\totalheight}{\includegraphics[width=0.0656\columnwidth]{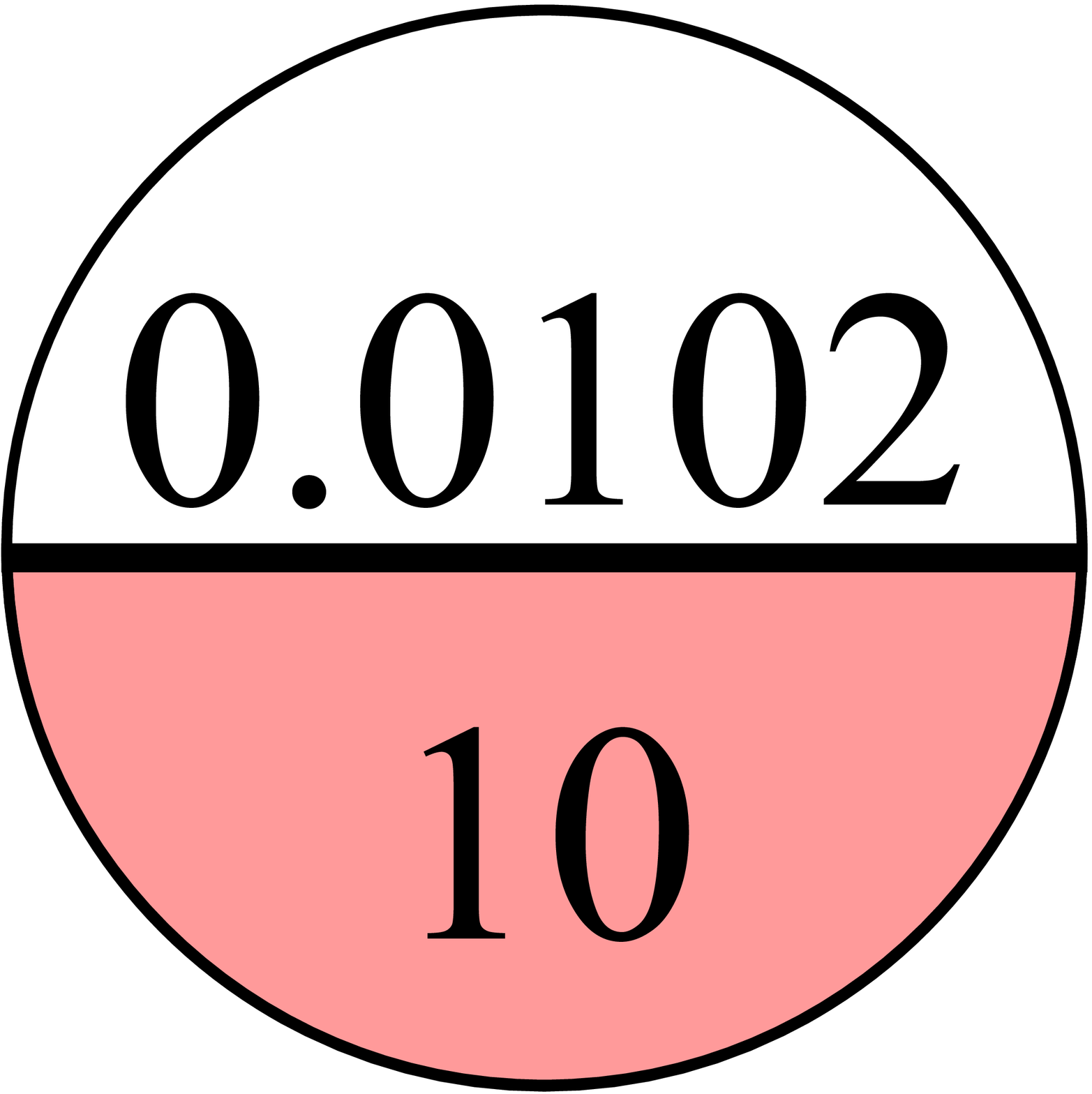}}	& \vspace{-30pt}
	\begin{itemize}[noitemsep,topsep=0pt,leftmargin=*]
		\item Pop the root node of $ \mathcal{H}^{\mathtt{max}} $ and assign pilot $ 4 $ to UE $ 10 $ which complete the pilot assignment
	\end{itemize} \\ 
	\bottomrule
	\vspace{-20pt}
	\label{tab: heap states}
\end{longtable} 
\end{spacing}
}

The proposed algorithm for pilot assignment is summarized in Algorithm~\ref{alg: MSE problem}. It takes the complexity of $ \mathcal{O}\bigr(U\log_{2}(\tau U )\bigl) $ for deriving the assignment solution, which is relatively low complexity.  For simplicity, this training strategy is referred to as Heap-FD, in which Algorithm~\ref{alg: MSE problem} is operated twice, i.e., with $ U=L $ for UL channel estimation in the first phase, and with $ U=K $ for achieving DL  and CCI channel estimates in the second phase. On the other hand, the training strategy for HD systems can be done by setting $ U=K+L $, called Heap-HD. For a given $ \tau $-length of pilot sequences, Heap-FD requires the training time of $ 2\tau $, while Heap-HD needs only $ \tau $. However, it is anticipated  that the channel estimate quality  of Heap-FD would be better than  Heap-HD, due to  smaller number of UEs sharing the same pilot set in Heap-FD.

\begin{algorithm}[t]
	\begin{algorithmic}[1]
		\fontsize{10}{11}\selectfont
		\protect\caption{Proposed Heap-Based Pilot Assignment for MSE Minimization Problem \eqref{eq: prob. MSE quad.}}
		
		\label{alg: MSE problem}
		
		\STATE Compute $ \boldsymbol{\tilde{\beta}}\triangleq[\tilde{\beta}_{j}]_{j\in\mathcal{T}_{\ul}} $ as in \textbf{Theorem}~\ref{thm: MSE prob.}.
		
		\STATE Randomly assign $ \tau $ pilots to the first $ \tau $ UEs in $ \mathcal{T}_{\ul} $, yielding $ \boldsymbol{\upsilon}_{j},\;\forall j=1,\cdots,\tau $.
		
		\STATE Execute $ \mathcal{G}([\boldsymbol{\tilde{\beta}}]_{1:\tau}, \{\boldsymbol{\upsilon}_j \}_{j=1,\cdots,\tau})\rightarrowtail\mathcal{H}^{\mathtt{min}} $.
		
		\STATE Execute $ \mathcal{G}([\boldsymbol{\tilde{\beta}}]_{\tau+1:U},\{\tau+1,\cdots,U\} )\rightarrowtail\mathcal{H}^{\mathtt{max}} $.
		
		\WHILE {$ \mathcal{H}^{\mathtt{max}}\neq \emptyset $}
		\STATE $ \mathcal{H}^{\mathtt{max}} \vdash (\tilde{\beta}_{j'},\{j'\}) $. \COMMENT{\textit{Root node is removed from} $ \mathcal{H}^{\mathtt{max}} $}
		\STATE $ \mathcal{H}^{\mathtt{min}} \rightarrow (\bar{\beta}_{i},\{\mathbf{\boldsymbol{\upsilon}}_{i}\}) $. 
		\STATE $ \mathbf{\boldsymbol{\upsilon}}_{j'}:=\mathbf{\boldsymbol{\upsilon}}_{i} $.
		\STATE $ \mathcal{H}^{\mathtt{min}} \dashv (\bar{\beta}_{i}+\tilde{\beta}_{j'},\{\mathbf{\boldsymbol{\upsilon}}_{i}\}) $.
		\ENDWHILE 
		
		\STATE Concatenate  assignment variable vectors as $ \boldsymbol{\Upsilon}:=[\boldsymbol{\upsilon}_{1},\cdots,\boldsymbol{\upsilon}_{U}] $.

		\STATE {\textbf{Output:}  Pilot assignment matrix $\boldsymbol{\bar{\Xi}} = \boldsymbol{\Xi}\boldsymbol{\Upsilon}$}.
	\end{algorithmic} 
\end{algorithm}

{\majrev
\begin{remark}
The SE-EE problem \eqref{eq: prob. general form bi-obj. trade-off} can be reformulated as a worst-case robust design by treating CSI errors as noise. More specifically, we use channel estimates (rather than the perfect ones) to
perform data transmission. The additional component introduced by CSI errors  in the denominators of SINRs
  is a linear function, and thus, can be easily tackled by our proposed methods given in Sections \ref{sec:ZF} and \ref{sec:IZF}. We refer the interested reader to \cite[Sec. V]{Hieu:IEEETWC:June2019} for further details of the derivations. 
\end{remark}
}

\section{Numerical Results}\label{NumericalResults}
In this section, we provide numerical examples to quantitatively evaluate the performance of the proposed FD CF-mMIMO.
\subsection{Simulation Setup and Parameters}
\begin{table}[t]
	\begin{minipage}{0.45\linewidth}
		\centering
		\fontsize{11}{12}\selectfont
		\captionof{table}{Simulation Parameters}
		\label{tab: parameter}
		\vspace{-5pt}
		\scalebox{0.8}{
			\begin{tabular}{l|l}
				\hline
				Parameter & Value \\
				\hline\hline
				System bandwidth, $B$ & 10 MHz \\
				Reference distances, $ (d_0,d_1) $ & (10, 50) m\\
				Residual SiS, $\rho^{\RSI}=\rho_{mm}^{\RSI},\;\forall m$& -110 dB \cite{Korpi:IEEETAP:Feb2017}\\
				Noise power at receivers & -104 dBm \\
				Number of APs and UEs,\; ($ M $, $K$, $L$) & (64, 10, 10)\\
				Number of antennas per AP, $ N_m,\forall m $ & 2\\
				Rate threshold, $  \bar{R}=\bar{R}_{k}^{\dl}=\bar{R}_{\ell}^{\ul} ,\;\forall k, \ell $	&  0.5 bits/s/Hz\\
				{\color{black}PA efficiency at $\AP$,\ $\ \nu_{\AP},\forall m $} & 0.39 \\
				{\color{black}PA efficiency at $ \ULU $,\ $\ \nu_{\ell}^{\ul},\;\forall \ell $} & 0.3 \\
				{\color{black}Backhaul traffic power, $ P^{\mathtt{bh}} $} & 0.25 W/(Gbits/s) \\
				{\color{black}Baseband power, $ P_{km}^{\dl}=P_{m}^{\ul}, \forall k,m $} & 0.1 W \\
				{\color{black}APs' power in active/sleep modes, $ (P_{\AP}^{\mathtt{a}},P_{\AP}^{\mathtt{s}}),\forall m $} & (10, 2) W \\
				{\color{black}APs' circuit operation power, $ P_{\AP}^{\mathtt{cir}},\forall m $} & 1 W \\
				{\color{black}UEs' circuit operation power, $ P_{k}^{\dl,\mathtt{cir}}=P_{\ell}^{\ul,\mathtt{cir}},\forall k,\ell $} & 0.1 W \\
				{\color{black}Power budget at UL UEs, $P_{\ell}^{\max},\forall\ell $} & 23 dBm \\
				{\color{black}Total power budget for all APs, $MP_{\mathtt{AP}}^{\max}$}  & 43 dBm \\
				\hline		   				
			\end{tabular}
		}
	\end{minipage}
	\hfill
	\begin{minipage}{0.44\linewidth}
		\centering
		\includegraphics[width=0.8\columnwidth,trim={0cm 0cm 0cm 0cm}]{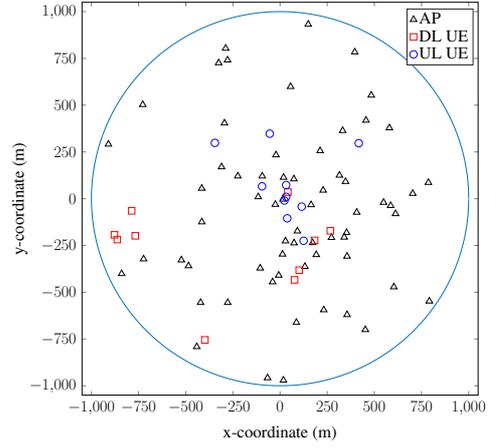}
		\vspace{-10pt}
		\captionof{figure}{A system topology with $M=64$ and $K=L=10$  located within a circle of $ 1 $-km radius is used in numerical examples.}
		\label{fig: Layout}
	\end{minipage}
\end{table}

A system topology illustrated in Fig. \ref{fig: Layout} is considered, 
where all  APs and UEs are located within a circle of 1-km radius. The entries of the fading loop channel $ \mathbf{G}^{\SI}_{mm},\forall m\in\mathcal{M} $ are modeled as independent and identically distributed Rician RVs, with the Rician factor of $ 5 $ dB \cite{Dinh:JSAC:18}. The large-scale fading of other channels is modeled as \cite{Ngo:TWC:Mar2017}
\begin{align}
\beta = 10^{\frac{\mathtt{PL}(d)+\sigma_{\mathtt{sh}}z }{10}},
\end{align}
where $ \beta\in\{ \beta_{mm'}^{\AtoA},\beta_{km}^{\dl}, \beta_{m\ell}^{\ul},\beta_{k\ell}^{\mathtt{cci}} \} $, $\forall  m,m'\in\mathcal{M},k\in\mathcal{K},\ell\in\mathcal{L} $ and $ m\neq m' $; The shadow fading is considered as an RV $ z\in\{ z_{mm'}^{\AtoA},z_{km}^{\dl},z_{m\ell}^{\ul},z_{k\ell}^{\mathtt{cci}} \} \sim \mathcal{N}(0,1) $ with standard deviation $ \sigma_{\mathtt{sh}} =8$  dB. The three-slope model for the
path loss in dB is given by \cite{Ngo:TWC:Mar2017,TangVTC01}
\begin{align} \label{eq: PL model}
\mathtt{PL}(d)=&-140.7-35\log_{10}(d) 
+20c_0\log_{10}(\frac{d}{d_0})+15c_1\log_{10}(\frac{d}{d_1}), 
\end{align}
where $ d\in\{d_{mm'}^{\AtoA},d_{km}^{\dl}, d_{m\ell}^{\ul},d_{k\ell}^{\mathtt{cci}}\} $ is the distance between transceivers as corresponding to $ \beta $; $ d_i$, with $i=\{0,1\}$, denotes the reference distance and $ c_i\triangleq\max\{0,\frac{d_i-d}{|d_i-d|}\} $. Note that the distances $ d $ and $ d_i $ in \eqref{eq: PL model} are measured in km.    Unless specifically stated otherwise,  other parameters are given in Table~\ref{tab: parameter}, where all APs are assumed to have the same power budget $P_{\mathtt{AP}}^{\max}=P_{\mathtt{AP}_m}^{\max},\forall m$.  Herein, the parameters of power consumption and PA efficiencies follow the study in \cite{Bjornson:IEETWC:June2015}.  We use the modeling tool YALMIP in the MATLAB environment. The SEs are divided by $ \ln2 $ to be presented in bits/s/Hz.

For comparison, the following two known schemes are  considered:
\begin{enumerate}
	\item ``Co-mMIMO:'' A BS is deployed at the center of the considered area to serve all UEs. To conduct a fair comparison, the centered-BS is equipped with $N=\sum_{m\in\mathcal{M}}N_m$ number of antennas with the total power budget of $MP_{\mathtt{AP}}^{\max}$  = 43 dBm.
	\item ``SC-MIMO:'' Under the same setup with CF-mMIMO,  each UE is only served by  one AP, but each AP can serve more than one UE. To make ZF feasible, the number of UEs served by one AP must be less than its own number of antennas.
\end{enumerate}
Both FD and HD operations are employed to evaluate the  performance of those schemes. For HD operation, DL and UL transmissions are separately carried out in two independent communication time blocks. As a result, there is no CCI at DL UEs, and no RSI and IAI on  UL reception, but the achieved SE and EE are devided by two. To  show the effectiveness of the proposed ZF- and IZF-based transmissions presented in Sections \ref{sec:ZF} and \ref{sec:IZF}, respectively, we additionally examine the following transmission strategies:
\begin{enumerate}
	\item ``ONB-ZF:'' The precoder matrix for DL transmission  $ \mathbf{W}^{\mathtt{ONB}\text{-}\mathtt{ZF}} $ is computed as in \eqref{eq:ONB-ZF}, while   UL reception adopts the ZF-SIC receiver similar to IZF.
	\item ``MRT/MRC:'' MRT and MRC are applied to DL and UL, respectively. This is easily done by replacing $ \mathbf{H}^{\mathtt{ZF}} $ and $ \mathbf{A}^{\mathtt{ZF}} $ with $ (\mathbf{H}^{\dl})^H $ and $ (\mathbf{H}^{\ul})^H $, respectively.
\end{enumerate}

\subsection{Numerical Results for SE Performance}
\begin{figure}[t]
	\begin{minipage}{0.48\linewidth}
	\centering
	\includegraphics[width=\columnwidth,trim={0cm 0.0cm 0.0cm 0cm}]{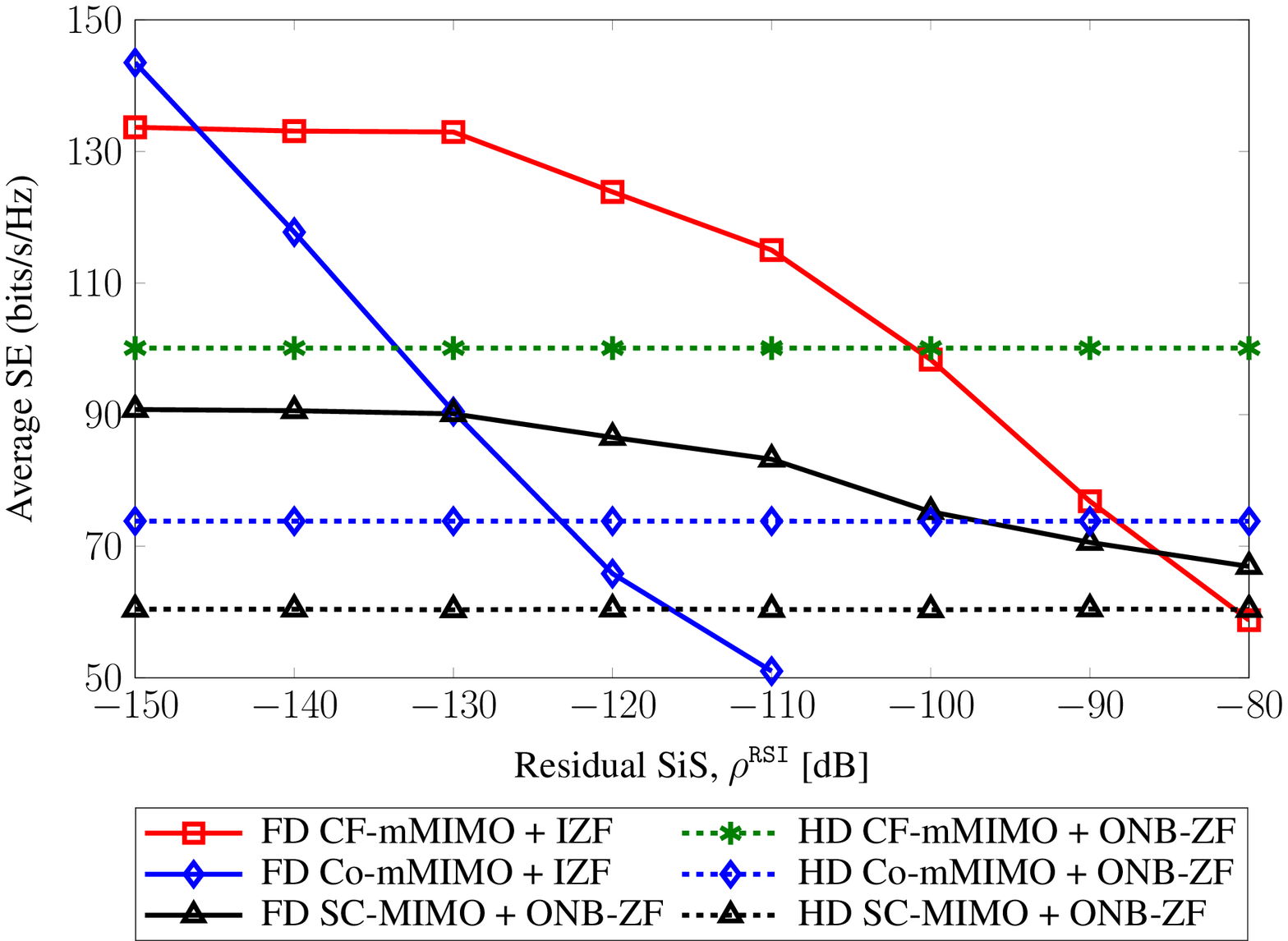}
	\vspace{-5pt}
	\caption{Average SE versus the residual SiS, $ \rho^{\RSI} $, for different schemes with both FD and HD operations.}
	\label{fig: SE vs RSI}
	\end{minipage}
	\hfill
	\begin{minipage}{0.48\linewidth}
		\centering
		\includegraphics[width=1\columnwidth,trim={0cm 0.0cm 0.0cm 0cm}]{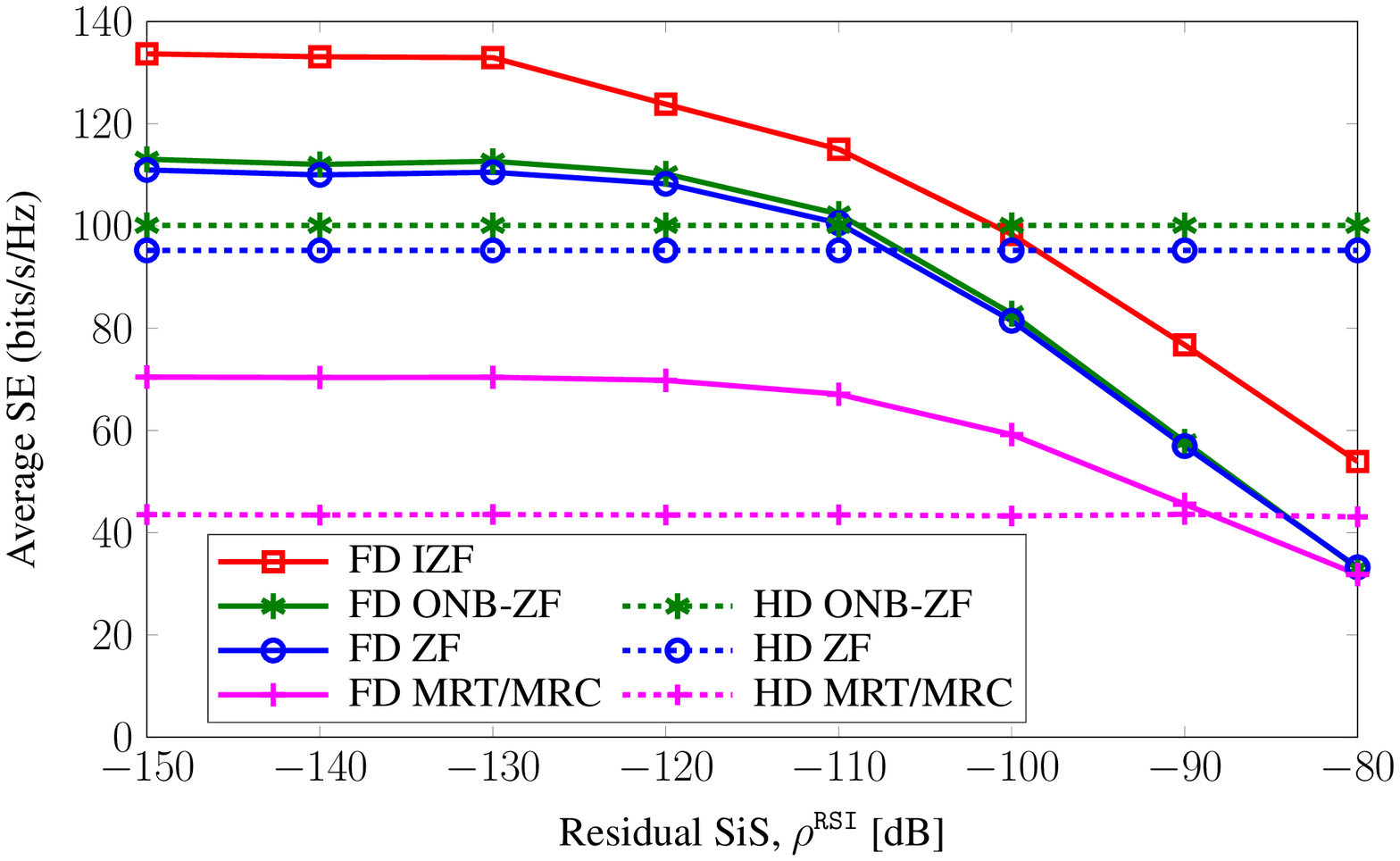}
		\vspace{-20pt}
		\caption{Average SE versus the residual SiS, $ \rho^{\RSI} $, for different transmission strategies in CF-mMIMO.}
		\label{fig: SE vs RSI DLUL Schemes}
	\end{minipage}
\end{figure}

Fig. \ref{fig: SE vs RSI} depicts the average SE performance as a function of the residual SiS $\rho^{\RSI}$ for  different schemes in both FD and HD operations. We recall that the SI has no effect on the performance of HD-based schemes. The first observation is that FD-based schemes outperform  HD counterparts at a sufficiently small level of $\rho^{\RSI}$. In particular, at $ \rho^{\RSI}=-130 $ dB, FD CF-mMIMO and FD SC-MIMO provide more than 40\% SE performance gain over those of HD ONB-ZF designs. The SE of FD Co-mMIMO scheme and its HD counterpart confirms the promising performance of the FD system. Specifically, when the residual SiS is very small (i.e., $\rho^{\RSI}$ = -150 dB), the performance of FD Co-mMIMO  nearly doubles that of HD  Co-mMIMO. However, its performance is dramatically degraded when $\rho^{\RSI}$ increases. The reason is that the FD Co-mMIMO system with one centered BS serving all UEs in a large area must allocate a high transmit power to far DL UEs, leading to  higher SI power. On the other hand, when $\rho^{\RSI}$ is more severe, the probabilities of infeasibility of the FD schemes are higher, resulting in the degraded performance. Another interesting observation is that the proposed FD CF-mMIMO scheme provides the best performance among FD-based schemes and significantly better performance than HD for a wide range of $\rho^{\RSI}$, including the practical value of $\rho^{\RSI}$ = -110 dB \cite{Korpi:IEEETAP:Feb2017}. These results further confirm  that FD operation is well suited for  CF-mMIMO systems.

{\majrev To evaluate the effectiveness of the proposed ZF and IZF transmission strategies in CF-mMIMO, we compare the average SE of our designs with two simple transmission designs: ONB-ZF and MRT/MRC, as shown in  Fig. \ref{fig: SE vs RSI DLUL Schemes}. With our proposed FD IZF design, the SE improves significantly. The ONB-ZF-PCA precoding in  IZF not only cancels MUI for DL transmission, but also depresses the effect of residual SI on UL reception.
In addition,  ONB-ZF  provides slightly better performance than ZF  in both FD and HD. However, the gap between FD ONB-ZF- and FD ZF-based designs is gradually small when $ \rho^{\RSI} $ increases. It reflects the fact that the effect of residual SI on the system performance of CF-mMIMO is more serious than that of MUI, which brings less benefit of using SIC in the MUI cancellation. 
The MRT/MRC design is always inferior in both FD and HD operations since it is unable to manage the network interference effectively. The difficulty in handling the IAI and residual SI of ONB-ZF and ZF designs causes their FD system to provide worse performance than the proposed FD IZF. }

\begin{figure}
	\begin{minipage}{0.48\linewidth}
		\centering
		\includegraphics[width=\columnwidth,trim={0cm 0cm 0.0cm 0cm}]{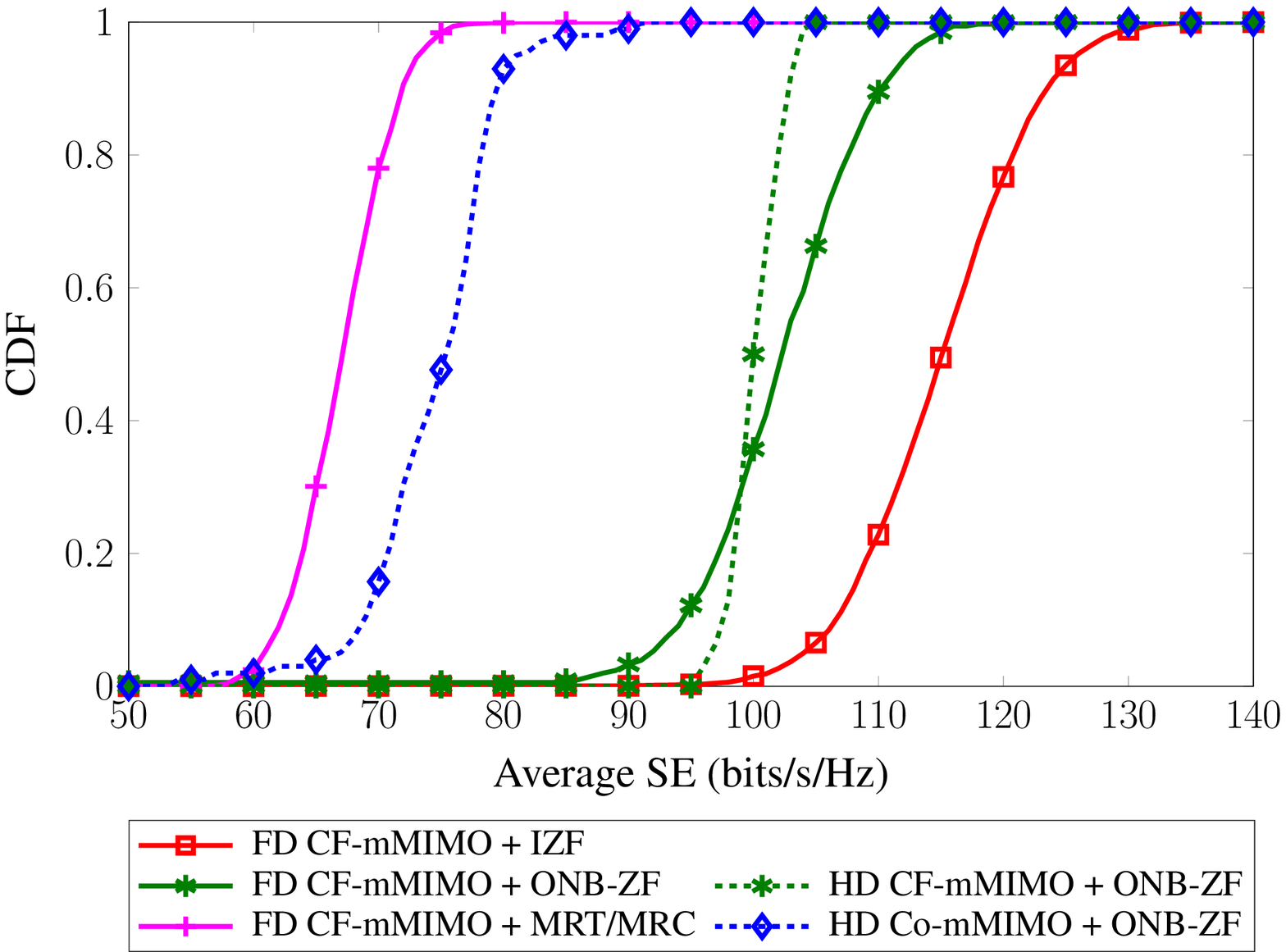}
		\vspace{-5pt}
		\caption{Cumulative distribution function versus the average SE.}
		\label{fig: CDF vs SE}
	\end{minipage}
	\hfill
	\begin{minipage}{0.48\linewidth}
		\centering
		\includegraphics[width=\columnwidth,trim={0cm 0cm 0cm 0cm}]{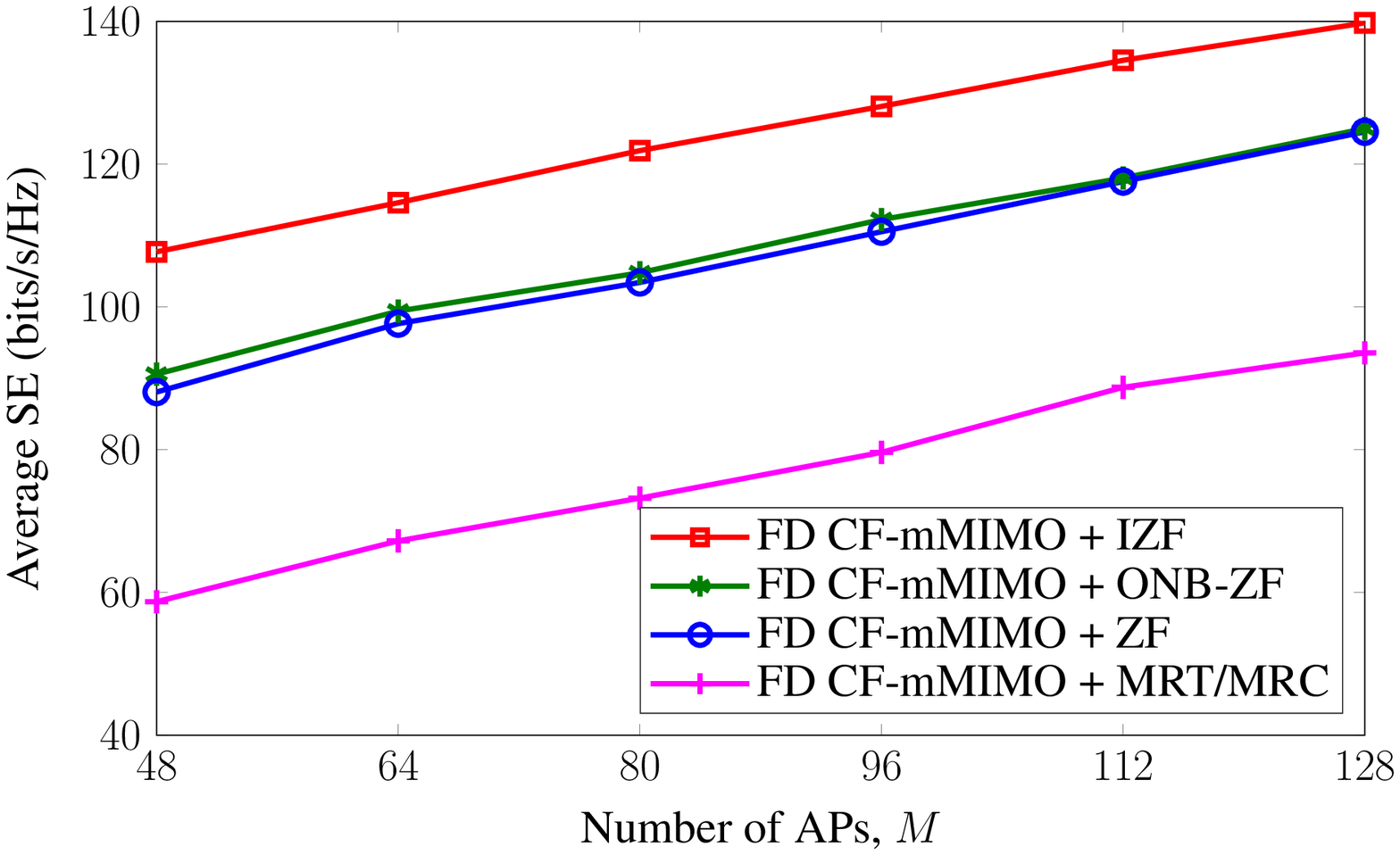}
		\vspace{-5pt}
		\caption{Average SE versus the number of APs with  different transmission strategies in FD CF-mMIMO.}
		\label{fig: SE vs NoAPs}
	\end{minipage}
\end{figure}

 Fig. \ref{fig: CDF vs SE} shows the cumulative distribution function (CDF) of different transmission strategies in FD CF-mMIMO with the references to HD CF-mMIMO and HD Co-mMIMO.  The figure clearly shows  that the feasibility probabilities of all the considered schemes are smaller when the SE is higher. As expected, FD CF-mMIMO with the proposed IZF and ONB-ZF designs outperforms others. In addition, FD CF-mMIMO with IZF offsets SE with about 12 bits/s/Hz more than FD CF-mMIMO with ONB-ZF. Although the ONB-ZF design is able to cancel MUI, it still suffers the large amount of IAI, which is even more severe in FD CF-mMIMO with the dense AP deployment. Its performance is therefore slightly better than that of HD CF-mMIMO at mid-point and 95-percentile point. FD CF-mMIMO with MRT/MRC provides worst performance, which can be explained as follows. MRT/MRC for DL/UL transmission using the channel conjugates (i.e., $ (\mathbf{H}^{\dl})^H $ and $ (\mathbf{H}^{\ul})^H $) is inapplicable to handle the residual SI and also totally passive with the IAI. Such interference is inherent to the MRT/MRC design in FD CF-mMIMO. The CDFs with respect to the SE again validate the advantage of IZF with ONB-ZF for MUI cancellation and PCA procedure for IAI and residual SI depression.

In Fig. \ref{fig: SE vs NoAPs}, we plot the average SE of different transmission strategies in FD CF-mMIMO versus the number of APs, $M\in[48, 128]$. It is straightforward to see that increasing the number of APs causes stronger IAI in CF-mMIMO networks. However, as can be seen from the figure, the SEs of all the considered transmission strategies are monotonically improved when $M$ increases. This results imply that it is enough for each AP to select a suitable number of UEs  around it, without causing much interference to other APs.

\subsection{Numerical Results for EE Performance}
\begin{figure}[t]
	\begin{minipage}{0.48\linewidth}
	\centering
	\includegraphics[width=\columnwidth,trim={0cm 0cm 0cm 0cm}]{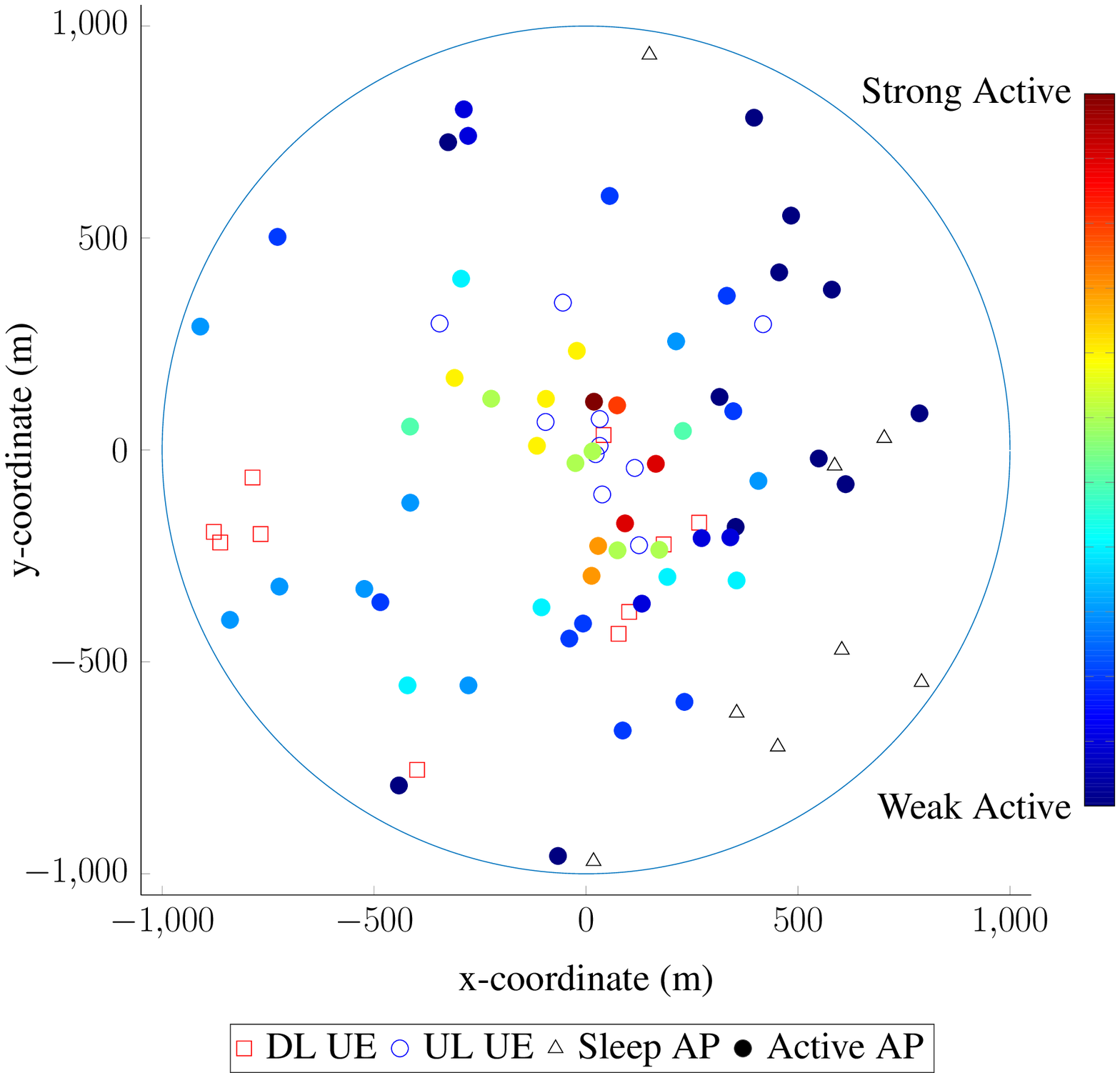}
	\vspace{-5pt}
	\caption{Active and sleep modes of APs using the system topology in Fig. \ref{fig: Layout}.}
	\label{fig: On Off APs}
	\end{minipage}
	\hfill
	\begin{minipage}{0.48\linewidth}
		\centering
		\includegraphics[width=\columnwidth,trim={0cm 0cm 0cm 0cm}]{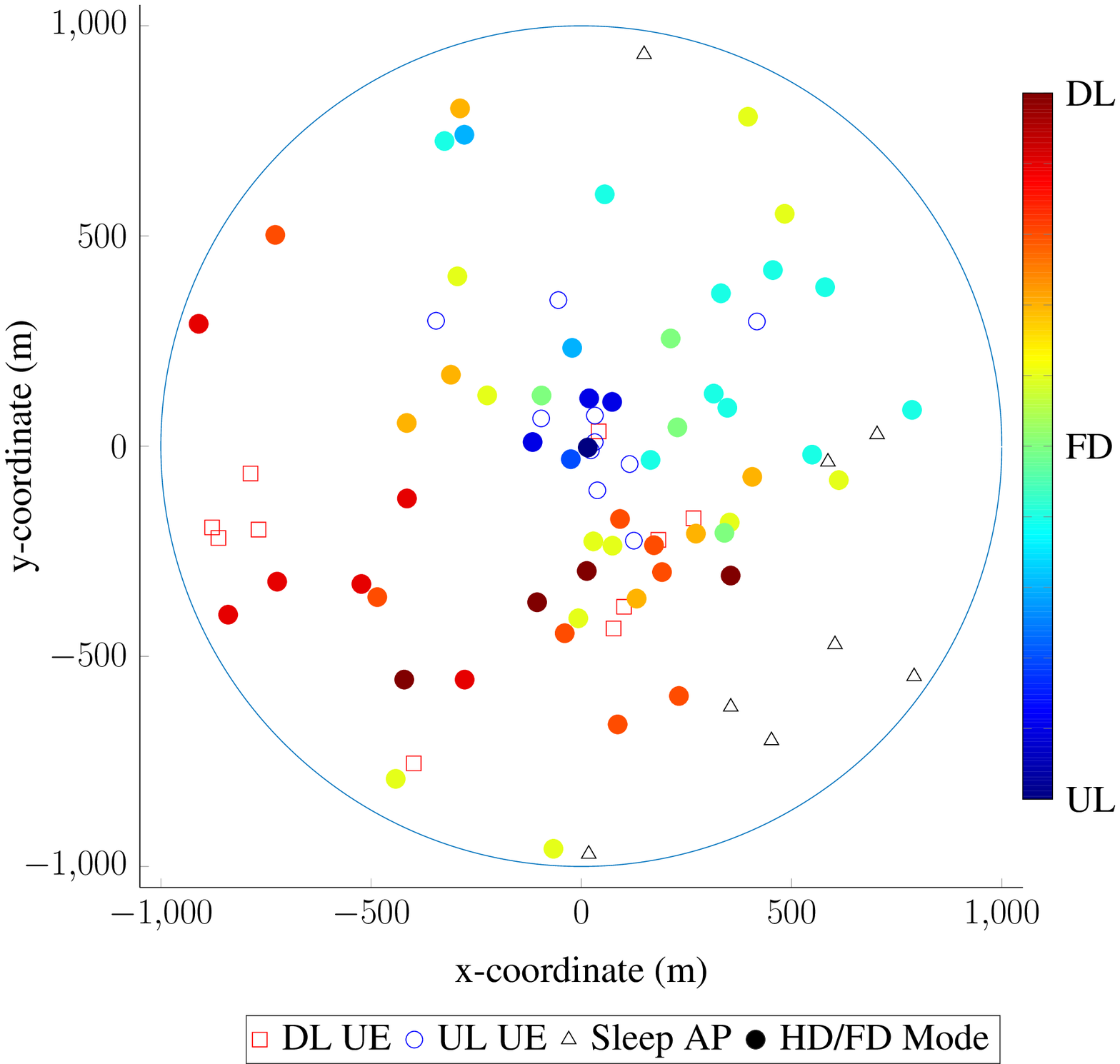}
		\vspace{-5pt}
		\caption{Operation behavior of APs using the system topology in Fig. \ref{fig: Layout}}
		\label{fig: DLUL Serving Map}
	\end{minipage}
\end{figure}

{\majrev Before providing the EE performance, we show the status of APs (i.e., active and sleep modes) in Fig. \ref{fig: On Off APs} and operation behavior of APs (i.e., DL/UL or FD) in Fig.  \ref{fig: DLUL Serving Map}, gaining more insights into the effect of AP selection. We consider  the system topology illustrated in Fig. \ref{fig: Layout}, and the IZF design to maximize the EE.
It can be seen in Fig. \ref{fig: On Off APs} that most APs located far way from UEs switch off to reduce  power consumption, since the large effect of path loss makes  power consumption for these APs inefficient. In contrast,  APs in the area of dense UEs become strongly active in order to enhance the SE  $ \tilde{F}_{\mathtt{SE}}\bigl(\boldsymbol{\Lambda}_{\dl},\boldsymbol{\Lambda}_{\ul}\bigr) $, which dominates the loss caused by $ \phi $ in \eqref{eq: prob. bi-obj. - Dinkelbach}. This result suggests an interesting observation that each UE should be served by a small subset of active APs to manage the network interference more effectively. 
Fig. \ref{fig: DLUL Serving Map} shows how the power budget at an active AP is allocated to UEs. Intuitively, the active APs dynamically connect to  near UEs for the EE improvement, and thus, HD (DL/UL) or FD mode is selected depending on either DL or UL transmission of near UEs, as long as their rate thresholds are satisfied. Specifically, APs are operated in DL and UL modes when they are located close to DL and UL UEs, respectively, and the FD mode, otherwise.  In other words, the proposed method allows  active APs to dynamically switch between DL/UL (in HD) and FD modes based on the channel conditions of all UEs to alleviate the IAI, residual SI and CCI. These observations reflect the importance  of the joint design of AP-UE association and AP selection to obtain the maximum EE performance. It is expected that  FD CF-mMIMO  with AP selection outperforms the case without AP selection, as discussed in the following part.}

We now examine the EE performance versus the number of UEs ($K=L$) in FD CF-mMIMO with different transmission strategies, as illustrated in Fig. \ref{fig: EE vs No UEs}. To ensure a high feasibility of the considered schemes even when the number of UEs is large, we set the number of APs to $ M=256 $. The EE of the  considered transmission strategies first increases, approaches the
optimal point, and then  decreases, as the number of UEs increases. This phenomenon   is attributed to the fact that, for small and medium numbers of UEs, the SE improvement dominates the total power consumption, leading to the significantly enhanced EE performance. However, when the number of UEs is very large (i.e., $K=L>50$), the SE is slightly improved, even reduced, due to stronger network interference, while the total power consumption increases quickly since more APs are required to be active to serve larger number of UEs. Nevertheless, our proposed IZF design outperforms the baseline ones.

{\majrev Fig. \ref{fig: EE vs No APs} plots the average EE as a function of the number of APs $M$ for $\bar{R}\in\{0,0.5\}$ bits/s/Hz, with and without AP selections.  For benchmarking purpose, we consider two baseline schemes: ($i$) Power consumption of sleep APs is completely set to zero, named as ``IZF with perfect AP selection (IZF w/ Perf. AP Sel.);'' ($ii$) AP-UE association and AP selection are used in company with the channel matrix transpose and an equal weight coefficient for all UEs, referred to as ``AP selection without weight beamforming design (AP Sel. w/o WBF).'' It can be seen that the IZF design with AP selection obtains much better EE performance than without AP selection, i.e., up to about 50\% EE gain. However,  the EE with  AP selection degrades when the number of APs becomes large, since the sleep APs still consume a fixed power in sleep modes, as given in \eqref{curcirpower}. This also explains why the perfect AP selection achieves the best performance among all the schemes and gradually increases along with the number of APs. In addition, increasing $M$ brings no benefit to the schemes without AP selection. We can also see that the performance gaps between $ \bar{R}=0 $ bits/s/Hz and $\bar{R} = 0.5 $ bits/s/Hz of all considered schemes are narrower when $M$ increases. The larger the number of APs, the  higher  the  probability for UEs to efficiently select a subset of APs with good channel conditions,  leading to better rate fairness among all UEs.}

\begin{figure}[t]
	\begin{minipage}{0.48\linewidth}
		\centering
		\includegraphics[width=\columnwidth,trim={0cm 0cm 0cm 0cm}]{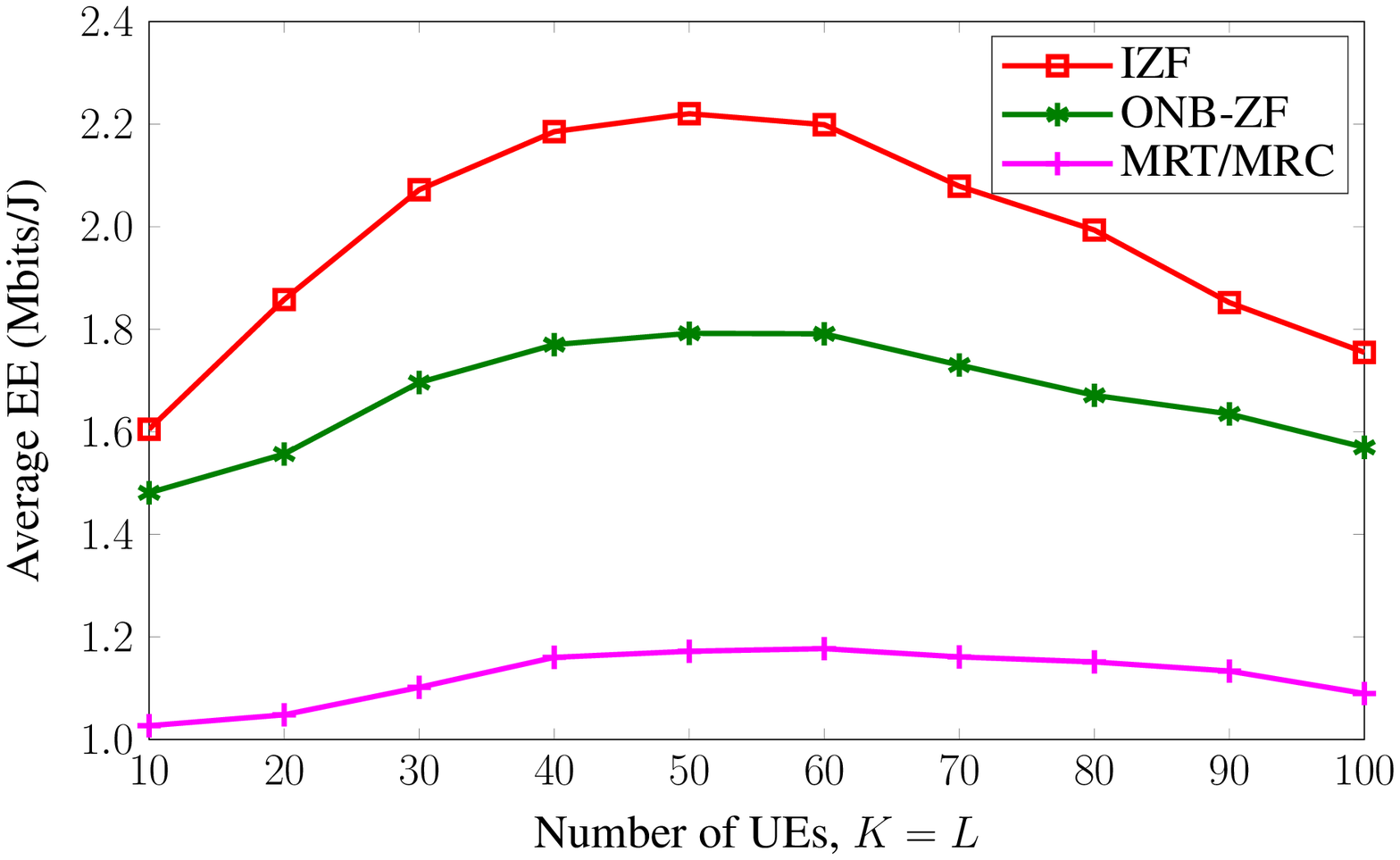}
		\vspace{-10pt}
		\caption{Average EE versus the number of UEs in FD CF-mMIMO with $ M=256 $ and $ \bar{R}=0.5 $ bits/s/Hz.}
		\label{fig: EE vs No UEs}
	\end{minipage}
	\hfill
	\begin{minipage}{0.48\linewidth}
		\centering
		\includegraphics[width=0.88\columnwidth,trim={0cm 0cm 0cm 0cm}]{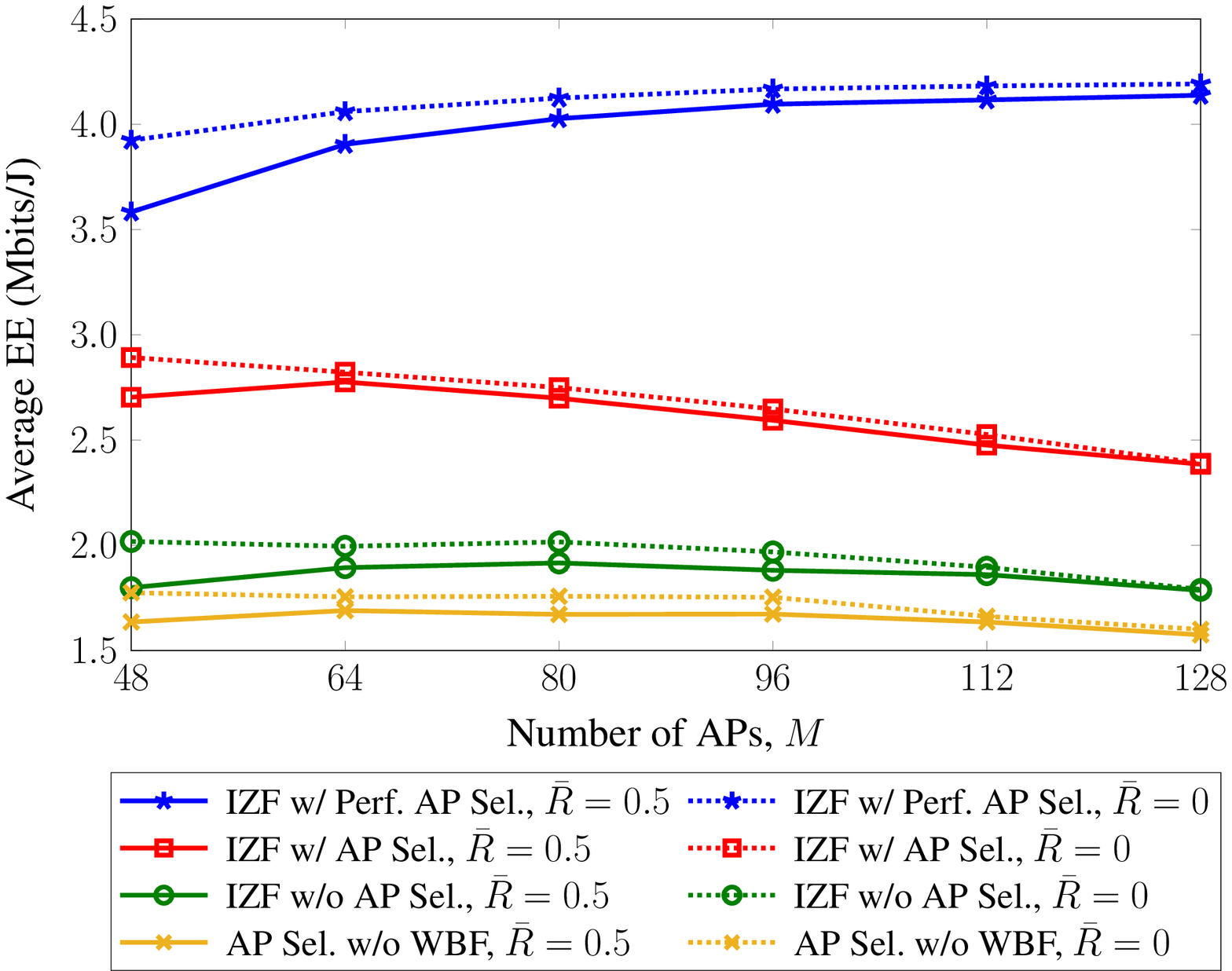}
		\vspace{-7pt}
		\caption{Average EE versus the number of APs in FD CF-mMIMO under different types of AP selections, with $ \bar{R}\in \{ 0, 0.5 \} $ bits/s/Hz.}
		\label{fig: EE vs No APs}
	\end{minipage}
\end{figure}

\subsection{Numerical Results for Heap-Based Pilot Assignment Algorithm \ref{alg: MSE problem}}

It can be easily foreseen that the quality of channel estimates mainly depends on the relationship between the number of UEs and  dimension of pilot set (or pilot length, $ \tau $). To evaluate the performance of the proposed FD training strategy, we first investigate the normalized MSE (NMSE) as a function of the number of UEs. As depicted in Fig. \ref{fig: NMSE vs No UEs}, we consider four strategies: two heap structures for pilot assignment (Heap-FD and Heap-HD), and two random pilot assignments (Rand-FD and Rand-HD). As expected, the proposed heap training schemes outperform the random ones. It can also be observed that  FD training strategies offer better performance in terms of NMSE compared to HD ones, by exploiting larger dimension of pilot sequences more efficiently. In particular, when $ K=L>\tau $, NMSE of the proposed Heap-FD  is around 5  dB and 7 dB less than Heap-HD, corresponding to $ \tau=2 $ and $ \tau= 8 $, respectively. However, we note that the FD training strategy requires a double training time over its HD counterpart, leading to the difference of the effective time for data transmission. The SE under imperfect CSI can be expressed as
\begin{align}
	\hat{F}_{\mathtt{SE}}\bigl(\mathbf{w}, \mathbf{p},\boldsymbol{\alpha}\bigr) = \frac{\tau_{\mathtt{c}}-\tau_{\mathtt{t}}}{\tau_{\mathtt{c}}}F_{\mathtt{SE}}\bigl(\mathbf{w}, \mathbf{p},\boldsymbol{\alpha}\bigr),
\end{align}
where $\tau_{\mathtt{c}}$ and $\tau_{\mathtt{t}}$ are the coherent time and training time, respectively. We now plot the SE performance for the worst-case robust design by taking into account the channel estimation. In Fig \ref{fig: SE vs No UEs}, we set $ \tau_{\mathtt{c}}=200 $,  $ \tau_{\mathtt{t}}=2\tau $ for  FD and $ \tau_{\mathtt{t}}=\tau $ for HD. Unsurprisingly,  Heap-FD schemes outperform  HD ones, and their performance gaps are even more remarkable when the number of UEs increases. This again demonstrates the effectiveness of the proposed Heap-based pilot assignment algorithm for FD CF-mMIMO by reaping both the advantages of  higher dimension of pilot sequences for training and FD for data transmission.

\begin{figure}
	\begin{minipage}{0.48\linewidth}
		\centering
		\includegraphics[width=\columnwidth,trim={0cm 0cm 0cm 0cm}]{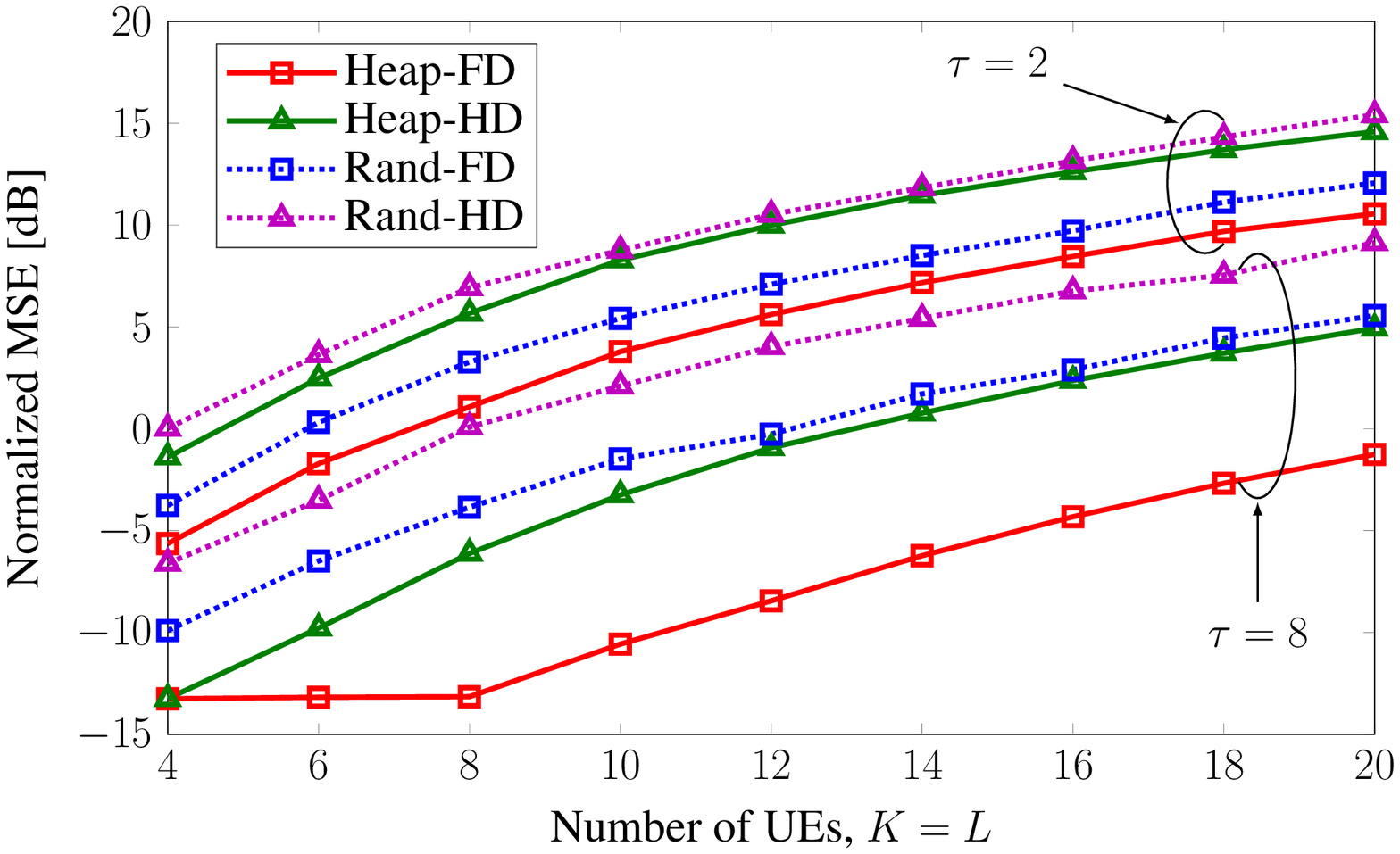}
		\vspace{-10pt}
		\caption{Normalized MSE versus the number of UEs in CF-mMIMO with  $ \tau\in\{2, 8 \}$.}
		\label{fig: NMSE vs No UEs}
	\end{minipage}
	\hfill
	\begin{minipage}{0.48\linewidth}
		\centering
		\includegraphics[width=\columnwidth,trim={0cm 0cm 0cm 0cm}]{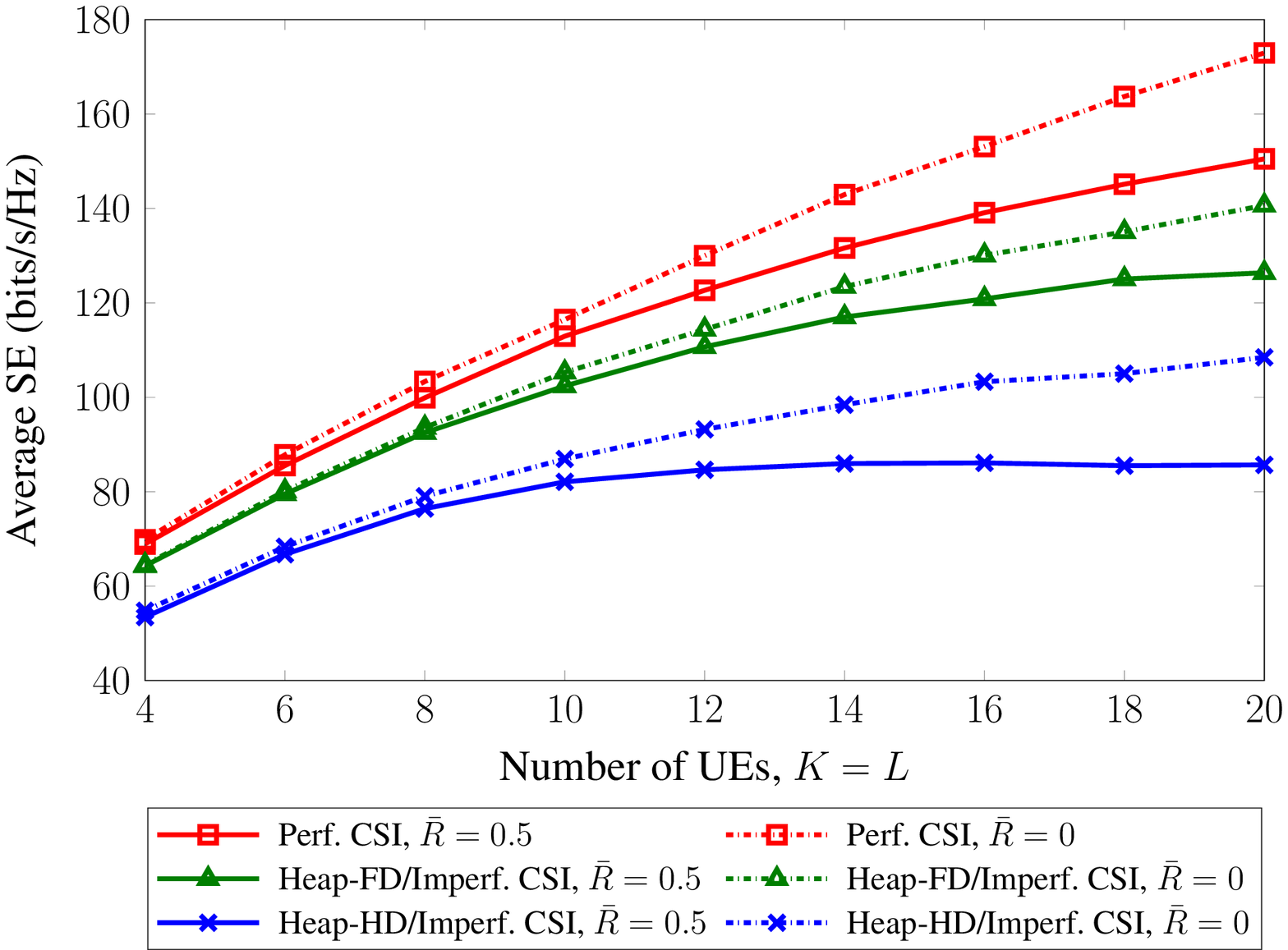}
		\vspace{-10pt}
		\caption{Average SE versus the number of UEs in CF-mMIMO with $ \tau=8 $ and $ \bar{R}\in\{0,0.5\} $ bits/s/Hz.}
		\label{fig: SE vs No UEs}
	\end{minipage}
\end{figure}

\subsection{Convergence Behavior and Computational Capability of Algorithm \ref{alg: ZFD problem}}

\begin{figure}[t]
	\begin{minipage}{\linewidth}
	\centering
	\begin{subfigure}[EE convergence behavior of Algorithm \ref{alg: ZFD problem} in FD CF-mMIMO, with the per-AP power signal ratio $ \varpi=0.1\% $ of $ 1/M $.]
		{
			\includegraphics[width=0.45\columnwidth,trim={0cm 0cm 0cm 0cm}]{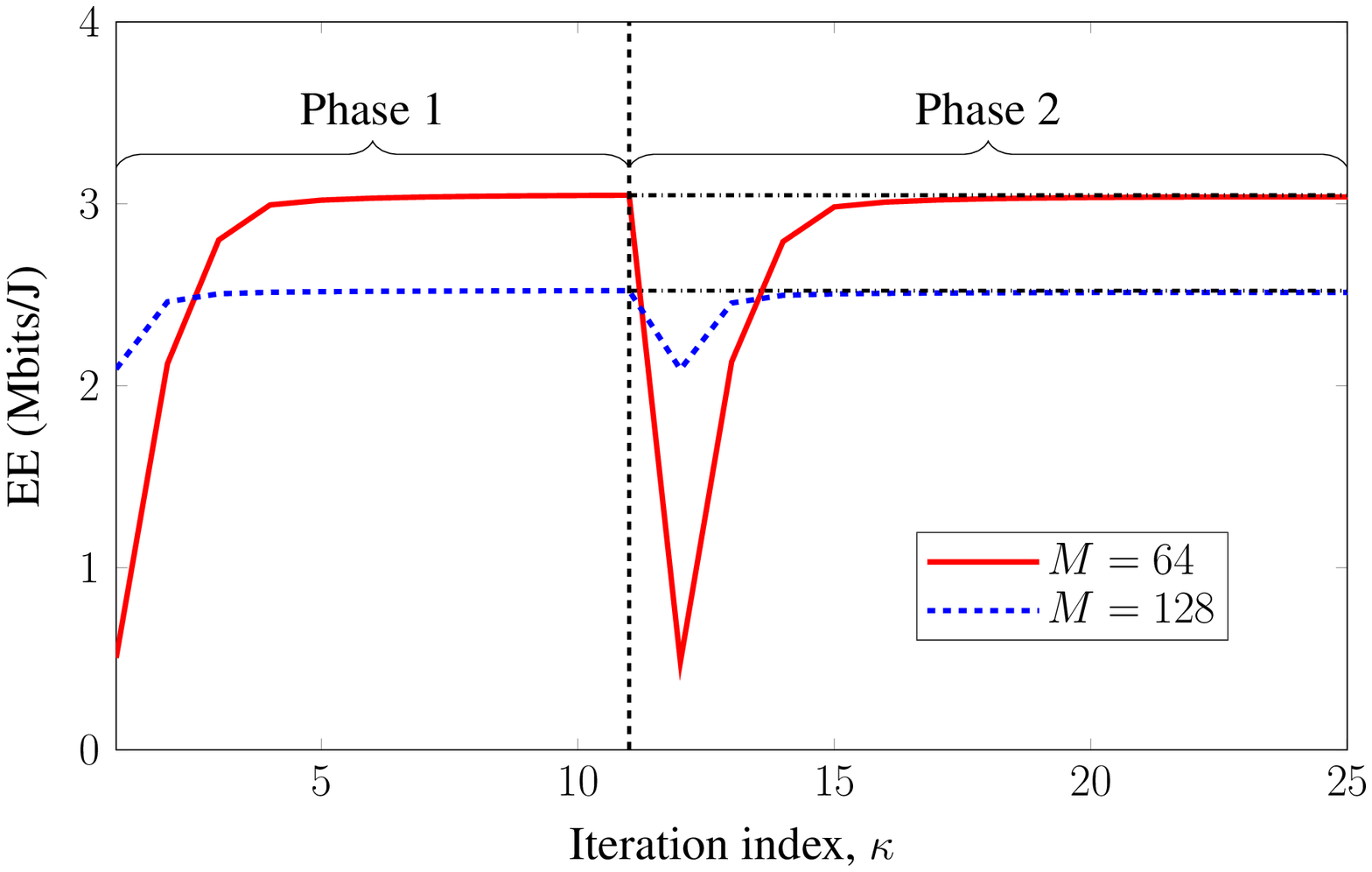}
			\vspace{-5pt}
			\label{fig: EE Convergence 0.1}
		}
	\end{subfigure}
	\hfill
	\begin{subfigure}[EE convergence behavior of Algorithm \ref{alg: ZFD problem} in FD CF-mMIMO, with the per-AP power signal ratio $ \varpi=1\% $ of $ 1/M $.]
		{
			\includegraphics[width=0.45\columnwidth,trim={0cm 0cm 0cm 0cm}]{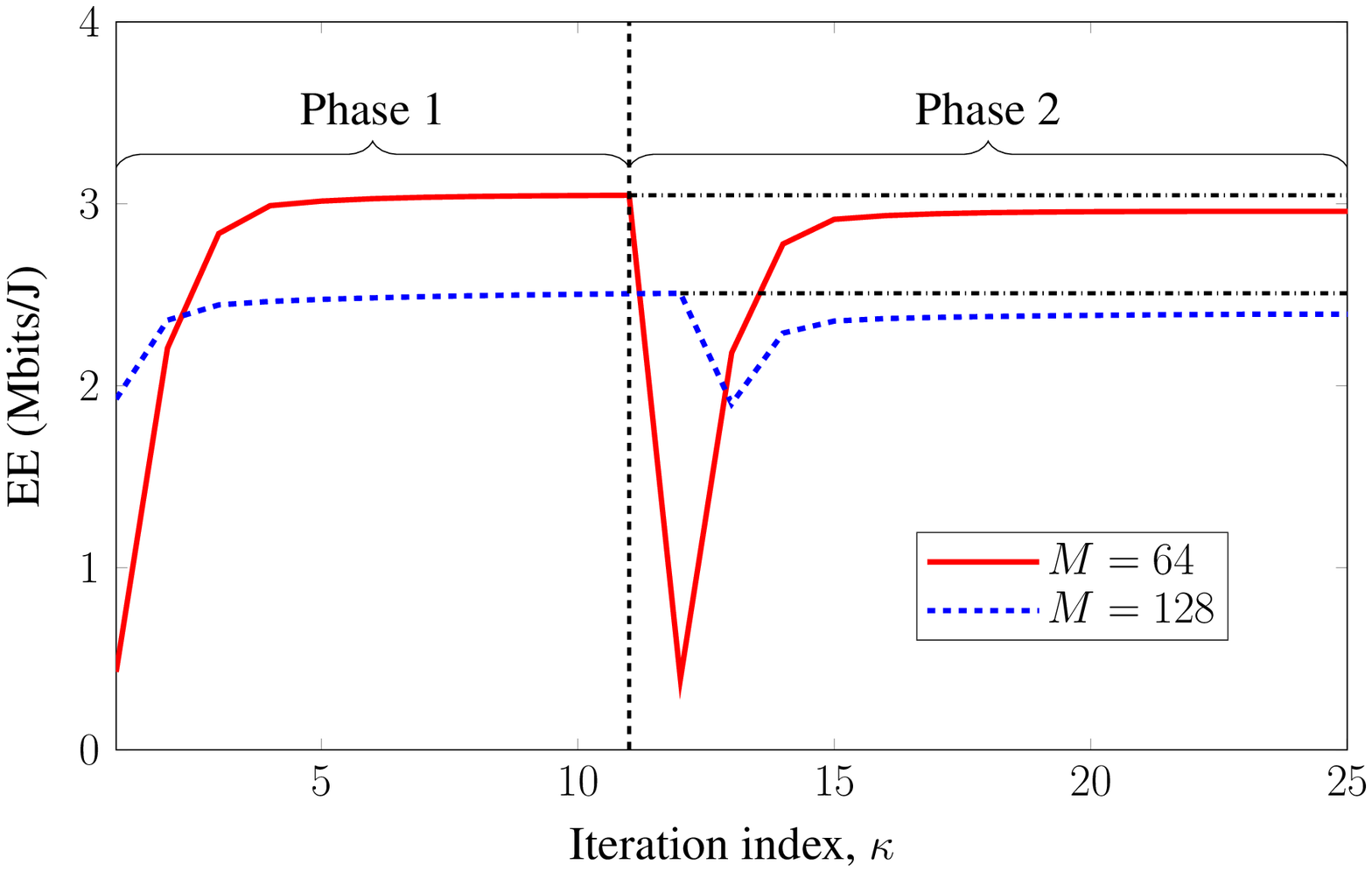}
			\vspace{-5pt}
			\label{fig: EE Convergence 1.0}
		}
	\end{subfigure}
	\vspace{-10pt}
	\caption{Typical EE convergence behaviors of Algorithm \ref{alg: ZFD problem} over a random channel realization.}
	\label{fig: EE Convergence}
	\end{minipage}
\end{figure}

We have numerically observed that the convergence behavior of the SE optimization in Algorithm \ref{alg: ZFD problem} is similar to that of the EE. For the sake of brevity, only the EE convergence performance is plotted in Fig. \ref{fig: EE Convergence} by setting $\eta=0$, with $M\in\{64, 128\}$. Since each AP has an average percentage of rate contributed to UEs $ \hat{\varpi}=1/M $, we consider $ \varpi $ given in \eqref{eq:Bfunction} as 0.1\% and 1\% of $1/M$, as illustrated in Fig. \ref{fig: EE Convergence 0.1} and Fig. \ref{fig: EE Convergence 1.0}, respectively. As seen, Algorithm \ref{alg: ZFD problem} converges very fast, and attains 99\% EE performance within about 10 iterations for both the first phase (Steps 1-9 in Algorithm \ref{alg: ZFD problem}) and second phase (Step 11 in Algorithm \ref{alg: ZFD problem}). The figure also clearly demonstrates the effect of the selection of  per-AP power signal ratio $ \varpi $ on the EE performance.  For $ \varpi = 0.1\%$ of $1/M$, the gap between two phases is very small, i.e., lower than 0.05\% of the EE achieved in phase 1. For $ \varpi = 1\%$ of $1/M$,  the EE loss in phase 2 increases up to 3\% and 5\% , corresponding to $ M $ = 64 and  $ M $ =128, respectively. It simply implies that the value of $ \varpi $ should be properly chosen to not only achieve a good performance, but also recover an exact binary value of $\boldsymbol{\alpha}$ and $\boldsymbol{\mu}$.

Finally, we  provide the average execution time of the proposed designs, which mainly aims at showing how their computational complexities scales with the network size. The codes are implemented in MATLAB with the  modeling toolbox YALMIP running on a computer of Intel(R) Core(TM) i7-6700 CPU @ 3.4 GHz, RAM 16 GB and Windows 10. Fig. \ref{fig: Execution time vs No APs} shows that the execution time of all the proposed designs slightly increases even when $M$  increases rapidly, since the problem is less dependent on $M$. However, the execution time scales exponentially with respect to  the number of UEs, as shown in Fig. \ref{fig: Execution time vs No UEs}.  Although the computational complexities of IZF, ONB-ZF, and MRT/MRC design are similar, the distinct feasible regions would lead to  different processing times. The higher execution time of MRT/MRC is due to its smaller feasible region. To be more comprehensive, Fig. \ref{fig: IAI SI pow and time vs delta} plots the normalized effective sum power of IAI and RSI (i.e., $\|\mathbf{\tilde{G}}^{\mathtt{AA}}\mathbf{W}\|^2/\sigma^2$), and the execution time versus the percentage of $ \bar{N} $ eigenvalues in \eqref{eq: N_bar top eigenvalues} over the total eigenvalues, $ \delta \in (0.6, 0.99] $. Clearly,  the normalized sum power of IAI and RSI for ONB-ZF and MRT/MRC are unchanged, regardless of the value of $ \delta $.  The reason is that  ONB-ZF  has to preserve the structure of the ZF matrix, while MRT/MRC is based on the transpose of channel responses to compute the precoder/receiver.  More importantly, the IZF transmission design is capable of providing lower \textit{leakage} IAI/RSI power and execution time at the large value of $ \delta $.

\begin{figure}[t]
	\centering
	\begin{subfigure}[Average execution time versus the  number of APs, $ M $.]
		{
			\includegraphics[width=0.3\columnwidth,trim={0cm 0cm 0cm 0cm}]{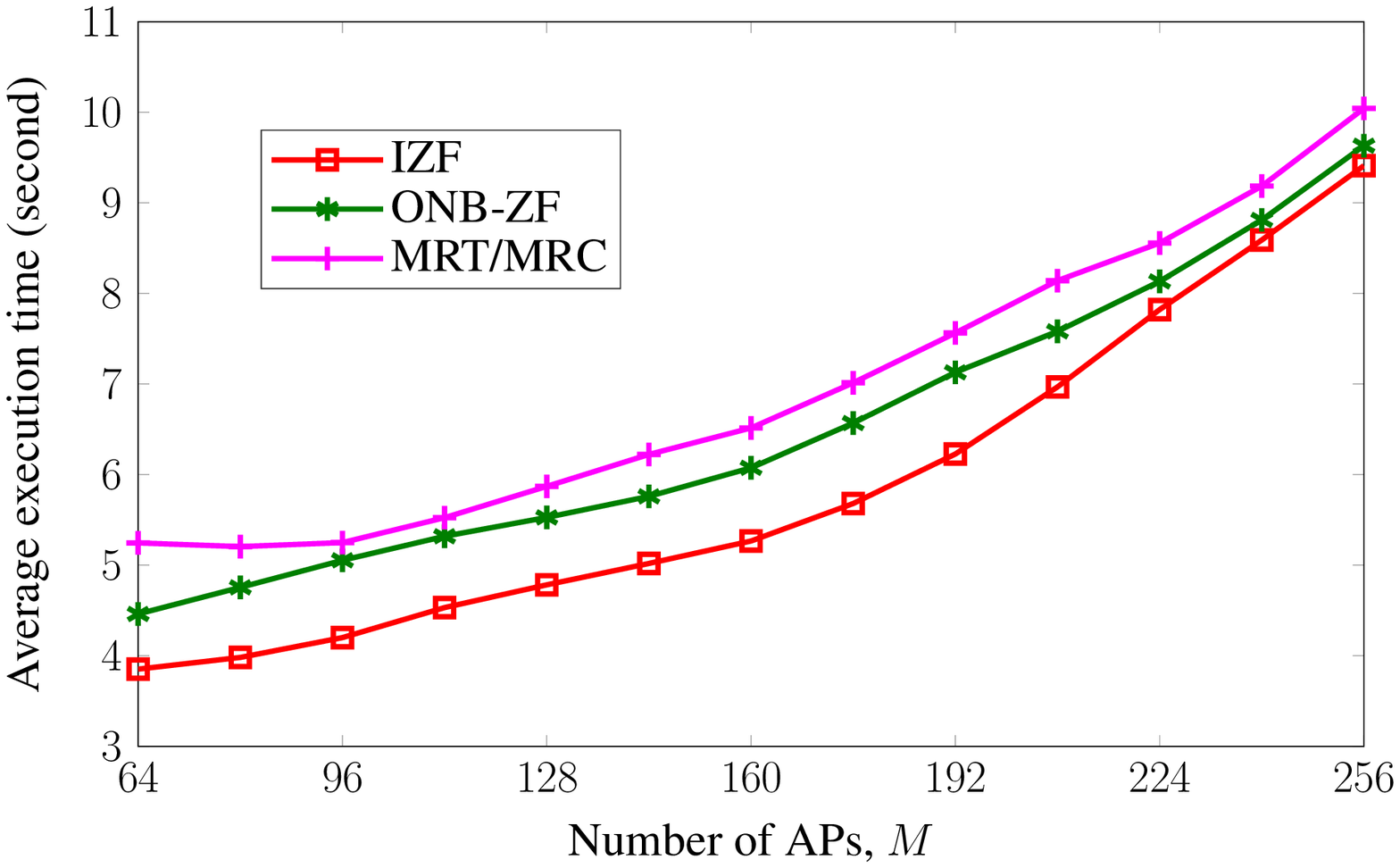}
			\vspace{-5pt}
			\label{fig: Execution time vs No APs}
		}
	\end{subfigure}
	\hfill
	\begin{subfigure}[Average execution time versus the number of UEs, with $ M=256 $.]
		{
			\includegraphics[width=0.3\columnwidth,trim={0cm 0cm 0cm 0cm}]{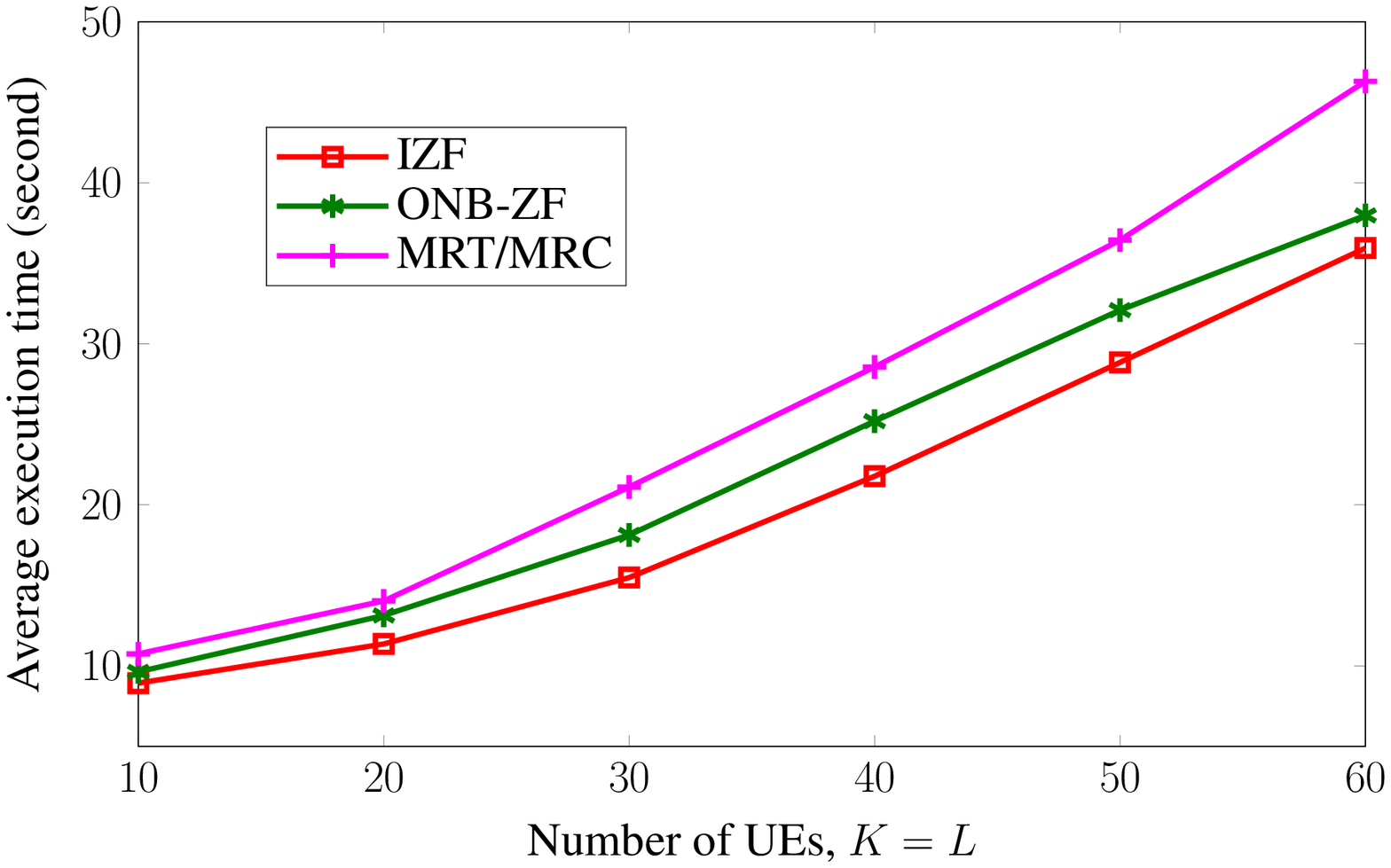}
			\vspace{-25pt}
			\label{fig: Execution time vs No UEs}
		}
	\end{subfigure}
	\hfill
	\begin{subfigure}[Normalized sum power of IAI and RSI and execution time versus $ \delta $, with $ M=256 $, $ K=L=60 $, and $ \mathbf{W}\in\{\mathbf{W}^{\mathtt{IZF}},\mathbf{W}^{\mathtt{ONB-ZF}}, (\mathbf{H}^{\dl})^H\} $.]
		{
			\includegraphics[width=0.3\columnwidth,trim={0cm 0cm 0cm 0cm}]{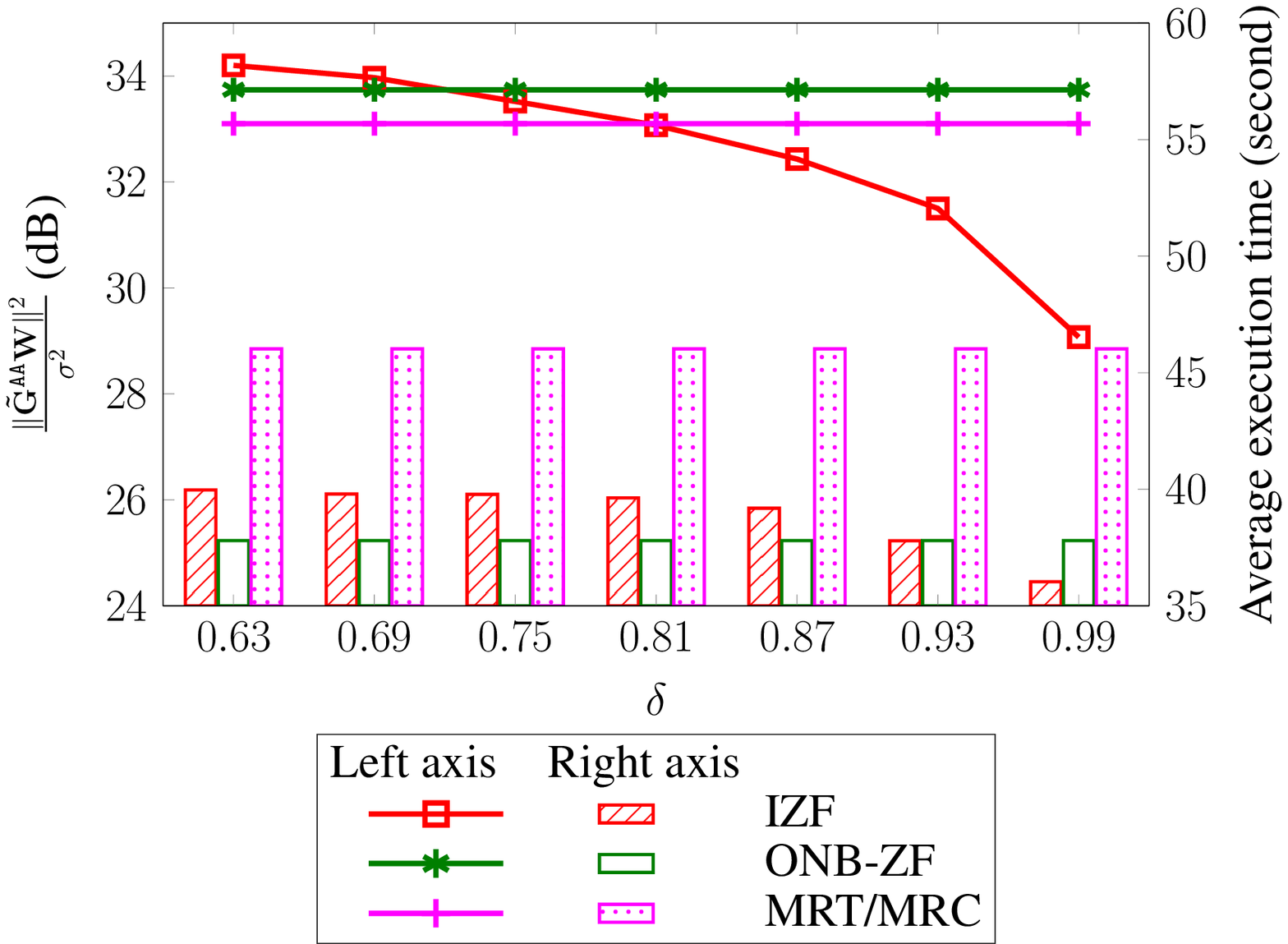}
			\vspace{-25pt}
			\label{fig: IAI SI pow and time vs delta}
		}
	\end{subfigure}
	\caption{Average execution time of Algorithm \ref{alg: ZFD problem} for different transmission strategies in FD CF-mMIMO.}
	\label{fig: Computational time}
\end{figure}

\section{Conclusion}\label{Conclusion}
We have investigated the SE and EE of an FD CF-mMIMO network by jointly optimizing power control, AP-UE association and AP selection. The realistic  power consumption model, which accounts for data transmission, baseband processing and  circuit operation, has been taken into consideration in characterizing the EE performance. Also, the special relationship between binary and continuous variables has been efficiently exploited to reduce  the number of optimization variables. First, we have derived  the iterative procedure based on the ICA  framework and Dinkelbach method to solve the ZF-based problem, where each iteration only solves a simple convex program. Aiming at efficient network interference management, we have then proposed an improved ZF-based transmission by incorporating ONB-and-PCA in the DL, and SIC in the UL.  In addition, a novel and low-complexity pilot assignment algorithm based on the  heap structure has been  developed to improve the quality of channel estimates.  

The proposed algorithm admitted  fast convergence rate, and 
showed to significantly  outperform  SC-MIMO and Co-mMIMO in terms of  SE and EE by jointly optimizing the parameters
involved. Via presented results, it can be concluded that FD CF-mMIMO with IZF transmission design is much more robust against the effects of residual SiS and IAI and requires lower execution time than ZF, ONB-ZF and MRT/MRC. Numerical results also showed that much better EE performance can be yielded by our joint design together with the AP selection. 

\appendices
\renewcommand{\thesectiondis}[2]{\Alph{section}:}

\section{Proof of Lemma \ref{thm: relationship alpha and beamforming vector}} \label{app: relationship alpha and beamforming vector}
\renewcommand{\theequation}{\ref{app: relationship alpha and beamforming vector}.\arabic{equation}}\setcounter{equation}{0}
	Suppose that the optimal solution for \eqref{eq: prob. general form bi-obj. trade-off} is found as a 2-tuple $\mathbf{u}^*\triangleq(\mathbf{s}^{*},\boldsymbol{\mu}^*) $, where $\mathbf{s}^{*}\triangleq(\mathbf{w}^*,\mathbf{p}^*,\boldsymbol{\alpha}^*|$ $\alpha_{km}^*=0\;\&\; \mathbf{w}_{km}^*\neq\mathbf{0})$. Let  $\mathbf{\tilde{s}}\triangleq(\mathbf{\tilde{w}},\mathbf{p}^*,\boldsymbol{\alpha}^*|\alpha_{km}^*=0\;\&\; \mathbf{\tilde{w}}_{km}=\mathbf{0})$, yielding $\mathbf{\tilde{u}}=(\mathbf{\tilde{s}},\boldsymbol{\mu}^*) $. We are now in a position to prove $ F(\mathbf{u}^*)\leq F(\mathbf{\tilde{u}}) $ as $ \mathbf{\tilde{u}} $ is an optimal point.
		Inspired from \cite{Kim:IEEEAccess:2019}, it can be realized that the numerator and denominator of $ \gamma_{k}^{\dl}(\mathbf{s}^*) $ in \eqref{eq: DL SINR - general}  remain unchanged for any $ \mathbf{w}_{km}^* $. From \eqref{eq: F-SE}, $ F_{\mathtt{SE}}(\mathbf{s}^*) $ is the same with respect to $\{\mathbf{w}_{km}^*\}$. Moreover, $\alpha_{km}$ coupled with $\mathbf{w}_{km}$ in \eqref{eq: UL SINR - general} gives $\gamma_{\ell}^{\ul}(\mathbf{s}^*)=\gamma_{\ell}^{\ul}(\mathbf{\tilde{s}}) $, and thus, $ F_{\mathtt{SE}}(\mathbf{s}^*)=F_{\mathtt{SE}}(\mathbf{\tilde{s}}) $. On the other hand, if $ \|\mathbf{w}_{km}^*\|^2>0 $, then $ P_{\mathtt{D}}(\mathbf{u}^*)\geq  P_{\mathtt{D}}(\mathbf{\tilde{u}}) $ given by the first term in \eqref{eq: power consump. for data}. The equality between $ P_{\mathtt{D}}(\mathbf{u}^*) $ and $ P_{\mathtt{D}}(\mathbf{\tilde{u}}) $ holds if $ \mu_m^*=0 $. As a result, $ P_{\mathtt{T}}(\mathbf{u}^*)\geq  P_{\mathtt{T}}(\mathbf{\tilde{u}}) $ leads to $ F_{\mathtt{EE}}(\mathbf{u}^*)\leq F_{\mathtt{EE}}(\mathbf{\tilde{u}}) $ as well as $ F(\mathbf{u}^*)\leq F(\mathbf{\tilde{u}}) $. Moreover, the indexes $ k $ and $ m $ in $\mathbf{s}^*$ are  arbitrary in the sets  $\mathcal{K}$ and $\mathcal{M}$, respectively. It is concluded that if $\alpha_{km}^*=0$, $\mathbf{u}^*$ admits $ \mathbf{\tilde{w}}_{km}=\mathbf{0} $ to generate an optimal solution, which completes the proof.

\section{Proof of Theorem \ref{thr: theorem 1}} \label{app: theorem 1}
\renewcommand{\theequation}{\ref{app: theorem 1}.\arabic{equation}}\setcounter{equation}{0}
Let us define $\mathbf{u}=(\mathbf{s},\boldsymbol{\mu})\in\mathcal{U}\triangleq\bigl\{(\mathbf{w}, \mathbf{p},\boldsymbol{\alpha},\boldsymbol{\mu})|\mathbf{w}_{km}\in\ker(\mathbf{h}_{km}^\dl)\bigr\}\subseteq \mathcal{F}$, $\forall k\in\mathcal{K}$, $m\in\mathcal{M}$, where  $ \mathbf{s} $ represents the triple $ (\mathbf{w}, \mathbf{p},\boldsymbol{\alpha}) $ as part of quadruple $ \mathbf{u} $. To prove Theorem \ref{thr: theorem 1}, we need to show two states: $(i)$ $ F(\mathbf{u}|\mathbf{w}_{km}=\mathbf{0})\geq F(\mathbf{u}|\mathbf{w}_{km}\neq\mathbf{0}) $,  $ \forall \alpha_{km}\in\{0,1\} $; and $(ii)$ $ F(\mathbf{u}|\mathbf{w}_{km}=\mathbf{0}\;\&\;\alpha_{km}=0)\geq F(\mathbf{u}|\mathbf{w}_{km}\neq\mathbf{0}\;\&\;\alpha_{km}=1) $. 	For the first state, we  denote $\mathbf{u}_{0}\in\bigl\{(\mathbf{s}_0, \boldsymbol{\mu})\in\mathcal{U}|\mathbf{w}_{km}= \mathbf{0}\bigr\}$, and $\mathbf{u}_{1}\in\bigl\{(\mathbf{s}_1, \boldsymbol{\mu})\in\mathcal{U}|\mathbf{w}_{km}\neq \mathbf{0}\bigr\}$ and consider  DL SINRs in \eqref{eq: DL SINR - general} with respect to $ \mathbf{s}_0 $ and $ \mathbf{s}_1 $. For $ \mathbf{w}_{km}\in\ker(\mathbf{h}_{km}^\dl) $, it follows that $ |\mathbf{h}_{km}^\dl\mathbf{w}_{km}|^2=0 $, and thus, $\gamma_{k}^{\dl}(\mathbf{s}_0)=\gamma_{k}^{\dl}(\mathbf{s}_1)$. On the other hand, $\gamma_{k'}^{\dl}(\mathbf{s}_0)\geq\gamma_{k'}^{\dl}(\mathbf{s}_1),\;\forall k'\in\mathcal{K}\setminus\{k\}$ implies that the component $ |\mathbf{h}_{k'm}^\dl\mathbf{w}_{km}|^2 $ in the denominator of SINR  for $ \DLUi{k'} $ is equal to or greater than zero, where the equality holds if $\mathbf{w}_{km}=\mathbf{0}$. In addition,  we have $ F_{\mathtt{SE}}(\mathbf{s}_0)\geq F_{\mathtt{SE}}(\mathbf{s}_1) $ due to $\gamma_{\ell}^{\ul}(\mathbf{s}_0)\geq\gamma_{\ell}^{\ul}(\mathbf{s}_1), \forall \ell\in\mathcal{L}$. Meanwhile, it is true that $\|\mathbf{w}_{km}\|^2>0$ for any $\mathbf{w}_{km}\neq\mathbf{0}$, yielding $ P_{\mathtt{D}}(\mathbf{u}_{0})\leq  P_{\mathtt{D}}(\mathbf{u}_{1}) $ and $ P_{\mathtt{T}}(\mathbf{u}_{0})\leq  P_{\mathtt{T}}(\mathbf{u}_{1}) $. That is to say $ F_{\mathtt{EE}}(\mathbf{u}_0)\geq F_{\mathtt{EE}}(\mathbf{u}_1) $ as well as $ F(\mathbf{u}_0)\geq F(\mathbf{u}_1) $, concluding the first state. The second state is easily proved by following the same steps in Appendix \ref{app: relationship alpha and beamforming vector}.

\section{Proof of Theorem \ref{thr: theorem 2}} \label{app: theorem 2}
\renewcommand{\theequation}{\ref{app: theorem 2}.\arabic{equation}}\setcounter{equation}{0}
The power consumption for data transmission and baseband processing $ P_{\mathtt{D}}(\mathcal{V},f_{\mathtt{spr}}(\mathbf{w}),\boldsymbol{\mu}) $ is rewritten as
\begingroup\allowdisplaybreaks\begin{IEEEeqnarray}{rCl} \label{eq: power consump. for data without alpha}
P_{\mathtt{D}}(\mathcal{V},f_{\mathtt{spr}}(\mathbf{w}),\boldsymbol{\mu})   = \sum\nolimits_{m\in\mathcal{M}}\mu_{m}\Biggl(\sum\nolimits_{k\in\mathcal{K}}  \Bigl(\frac{\|\mathbf{w}_{km}\|^2}{\nu_{m}^{\mathtt{AP}} }+ LP_{m}^{\ul} 
+ r_{\mathtt{sp}}\bigl(\mathbf{w}_{km},\mathbf{h}_{km}^{\dl}|\mathbf{w}_{k}^{(\kappa)},\mathbf{h}_{k}^{\dl}\bigr)P_{km}^{\dl}\Bigr) \Biggr) \nonumber \\
+B\cdot F_{\mathtt{SE}}(\mathbf{w}, \mathbf{p},\boldsymbol{\alpha})\cdot P^{\mathtt{bh}}+ \sum\nolimits_{\ell\in\mathcal{L}}\frac{p_{\ell}}{\nu_{\ell}^{\mathtt{u}}}.\qquad
\end{IEEEeqnarray}\endgroup
It can be foreseen that if $ \mu_m=0$, the signal power of all UEs served by $ \AP $ becomes zero. In other words, $ \mu_{m} $ is also coupled with $ \mathbf{w}_{km},\forall k\in\mathcal{K} $, and $ \mathbf{a}_{m\ell},\forall \ell\in\mathcal{L} $. The first term in \eqref{eq: power consump. for data without alpha} is associated with the DL and UL power allocation, and thus, $ \mu_{m} $ must be strictly updated  with respect to the DL and UL signal power, showing \eqref{eq: compute mu}.

\section{Proof of Theorem \ref{thr: theorem 3}} \label{app: theorem 3}
\renewcommand{\theequation}{\ref{app: theorem 3}.\arabic{equation}}\setcounter{equation}{0}
The proof is done by showing the fact that problems \eqref{eq: prob. bi-obj. - ZF} and \eqref{eq: prob. bi-obj. - equiZF} share the same  optimal objective and solution set.  From the introduction of soft SINRs $\boldsymbol{\lambda}$, it is straightforward to prove that constraints \eqref{eq: prob. bi-obj. - equiZF :: c} and \eqref{eq: prob. bi-obj. - equiZF :: d} must be active (i.e., holding with equalities) at optimum. As a result, constraints \eqref{eq: prob. bi-obj. - ZF :: c} and \eqref{eq: prob. bi-obj. - ZF :: d} can be converted to linear constraints \eqref{eq: prob. bi-obj. - equiZF :: e} and \eqref{eq: prob. bi-obj. - equiZF :: f}, respectively. In addition,  we can decompose $\bar{F}_{\mathtt{SE}}\bigl(\boldsymbol{\Gamma}_{\dl}^{\mathtt{ZF}},\boldsymbol{\Gamma}_{\ul}^{\mathtt{ZF}}\bigr)$ as $\bar{F}_{\mathtt{SE}}\bigl(\boldsymbol{\Gamma}_{\dl}^{\mathtt{ZF}},\boldsymbol{\Gamma}_{\ul}^{\mathtt{ZF}}\bigr) =  R_{\Sigma}(\boldsymbol{\Gamma}_{\dl}^{\mathtt{ZF}}) + R_{\Sigma}(\boldsymbol{\Gamma}_{\ul}^{\mathtt{ZF}})$, where
\begingroup\allowdisplaybreaks\begin{subequations} \label{eq: log det DL UL}
	\begin{align}
	R_{\Sigma}(\boldsymbol{\Gamma}_{\dl}^{\mathtt{ZF}}) & \geq R_{\Sigma}(\boldsymbol{\lambda}_{\dl})  = \sum\nolimits_{k\in\mathcal{K}} \ln (1+\lambda_{k}^{\dl}) 
	=\ln \Bigl( \prod\nolimits_{k\in\mathcal{K}} (1+\lambda_{k}^{\dl}) \Bigr) = \ln  |\mathbf{I} + \boldsymbol{\Lambda}_{\dl} |, \\
	R_{\Sigma}(\boldsymbol{\Gamma}_{\ul}^{\mathtt{ZF}}) & \geq R_{\Sigma}(\boldsymbol{\lambda}_{\ul})  = \sum\nolimits_{\ell\in\mathcal{L}} \ln (1+\lambda_{\ell}^{\ul}) 
	=\ln \Bigl( \prod\nolimits_{\ell\in\mathcal{L}} (1+\lambda_{\ell}^{\ul}) \Bigr) = \ln  |\mathbf{I} + \boldsymbol{\Lambda}_{\ul} |.
	\end{align}
\end{subequations}\endgroup
 Then, we have
\begin{align} \label{eq: F-SE log det}
\bar{F}_{\mathtt{SE}}\bigl(\boldsymbol{\Gamma}_{\dl}^{\mathtt{ZF}},\boldsymbol{\Gamma}_{\ul}^{\mathtt{ZF}}\bigr) &\geq \ln  |\mathbf{I} + \boldsymbol{\Lambda}_{\dl} |+\ln  |\mathbf{I} + \boldsymbol{\Lambda}_{\ul} | 
:= \tilde{F}_{\mathtt{SE}}\bigl(\boldsymbol{\Lambda}_{\dl},\boldsymbol{\Lambda}_{\ul}\bigr). 
\end{align}
Equalities in \eqref{eq: prob. bi-obj. - equiZF :: c} and \eqref{eq: prob. bi-obj. - equiZF :: d} also lead to an equality of \eqref{eq: F-SE log det}. In the same spirit, we can further show that constraint \eqref{eq: prob. bi-obj. - equiZF :: b} is also active at optimum, which completes the proof.

\section{Proof of Procedure \ref{procedure1}} \label{app: procedure1}
\renewcommand{\theequation}{\ref{app: procedure1}.\arabic{equation}}\setcounter{equation}{0}
Suppose that a projection matrix $ \mathbf{P}=\mathbf{I} - (\mathbf{\tilde{G}}^{\AtoA})^H\bigl(\mathbf{\tilde{G}}^{\AtoA}$ $(\mathbf{\tilde{G}}^{\AtoA})^H\bigr)^{-1}\mathbf{\tilde{G}}^{\AtoA} $ is applied to cancel the IAI and RSI. Accordingly, the effects of MUI, IAI and RSI can be ignored by considering $ \mathbf{\bar{W}} = \mathbf{P}\mathbf{Q}^H\mathbf{\tilde{T}}(\mathbf{D}^{\dl})^{\frac{1}{2}} $ as  a precoder matrix.
	\begin{itemize}
		\item \textbf{MUI cancellation}: it follows that
		\begin{align}
		\mathbf{H}^{\dl}\mathbf{\bar{W}} 
		 \overset{[a]}{=} \mathbf{T}\mathbf{Q}\mathbf{Q}^H\mathbf{\tilde{T}}(\mathbf{D}^{\dl})^{\frac{1}{2}}
		 \overset{[b]}{=} \mathbf{\bar{T}}(\mathbf{D}^{\dl})^{\frac{1}{2}}, \nonumber
		\end{align}
		where $ [a] $ is obtained via Step 4, while $ [b] $ comes from the structure of $ \mathbf{\tilde{T}} $ in Step 5. Clearly, MUI is completely removed, since both $ \mathbf{\bar{T}} $ and $ \mathbf{D}^{\dl} $ are  diagonal matrices. 
		\item \textbf{IAI and RSI cancellation}: the effective IAI and RSI are generally expressed by $\mathbf{\tilde{G}}^{\AtoA}$. We have
		\begin{align}
		\mathbf{\tilde{G}}^{\AtoA}\mathbf{\bar{W}} = \mathbf{\tilde{G}}^{\AtoA}\mathbf{P} \mathbf{Q}^H\mathbf{\tilde{T}}(\mathbf{D}^{\dl})^{\frac{1}{2}} = \mathbf{0}. \nonumber
		\end{align}
		due to $\mathbf{\tilde{G}}^{\AtoA}\mathbf{P} = \mathbf{0}$. However, $ \mathbf{\tilde{G}}^{\AtoA} $ is a concatenation of IAI and RSI matrices, leading to a full-rank matrix in most cases. In other words, $\mathbf{P}$ must be forced to $\mathbf{0} $, and thus,  should not be joined into the precoder matrix $\mathbf{\bar{W}}$. To overcome this issue, we exploit the PCA-based method to depress the IAI and RSI in the rest of this proof.
	\end{itemize}
To derive matrix $ \mathbf{P} $, we consider a low-rank approximation via the PCA method as in Steps 1-3. From \eqref{eq: SVD}  and \eqref{eq: N_bar top eigenvalues}, the low-rank approximation of $ \mathbf{\tilde{G}}^{\AtoA} $ can be derived from the $ \bar{N} $-top eigenvalues as
	\begin{align}
	(\mathbf{\tilde{G}}_{\bar{N}}^{\AtoA})^H\mathbf{\tilde{G}}_{\bar{N}}^{\AtoA} = \mathbf{\bar{U}}\mathbf{\bar{E}}^{\frac{1}{2}}\mathbf{\bar{E}}^{\frac{1}{2}}\mathbf{\bar{U}}^H,
	\end{align}
	where  $ \mathbf{\bar{U}}\in\mathbb{C}^{N\times \bar{N}} $ involves the first $ \bar{N} $ columns of $ \mathbf{U} $, and the diagonal matrix $ \mathbf{\bar{E}}\in\mathbb{C}^{\bar{N}\times\bar{N}} $ has the main diagonal with the $ \bar{N} $-top eigenvalues of $ \mathbf{E} $ in \eqref{eq: SVD}. We note that $ \mathbf{\bar{U}}^H\mathbf{\bar{U}}=\mathbf{I} $, but $ \mathbf{\bar{U}}\mathbf{\bar{U}}^H\neq\mathbf{I} $. By treating $ \mathbf{\tilde{G}}^{\AtoA} $ as $ \mathbf{\tilde{G}}_{\bar{N}}^{\AtoA} = \mathbf{\bar{E}}^{\frac{1}{2}}\mathbf{\bar{U}}^H $, the projection matrix $ \mathbf{P} $ can be calculated as
	\begin{align}
	\mathbf{P} & =\mathbf{I} - (\mathbf{\tilde{G}}_{\bar{N}}^{\AtoA})^H\bigl(\mathbf{\tilde{G}}_{\bar{N}}^{\AtoA}(\mathbf{\tilde{G}}_{\bar{N}}^{\AtoA})^H\bigr)^{-1}\mathbf{\tilde{G}}_{\bar{N}}^{\AtoA} 
	 = \mathbf{I} - \mathbf{\bar{U}}\mathbf{\bar{E}}^{\frac{1}{2}}(\mathbf{\bar{E}}^{\frac{1}{2}}\mathbf{\bar{U}}^H\mathbf{\bar{U}}\mathbf{\bar{E}}^{\frac{1}{2}})^{-1}\mathbf{\bar{E}}^{\frac{1}{2}}\mathbf{\bar{U}}^H 
	 = \mathbf{I} - \mathbf{\bar{U}}\mathbf{\bar{U}}^H,
	\end{align}
	showing  Step 3.

\section{Proof of Theorem \ref{thm: MSE prob.}} \label{app: MSE prob.}
\renewcommand{\theequation}{\ref{app: MSE prob.}.\arabic{equation}}\setcounter{equation}{0}
 From $\boldsymbol{\bar{\Xi}} = \boldsymbol{\Xi}\boldsymbol{\Upsilon}$, it is clear that $ \boldsymbol{\bar{\varphi}}_{j}=\boldsymbol{\Xi}\boldsymbol{\upsilon}_j $ with $ \boldsymbol{\upsilon}_j $ being the $ j $-th column of $ \boldsymbol{\Upsilon} $. Therefore, $ \varepsilon_{mj} $ can be expressed as a function of assignment variables, i.e.,
	\begingroup\allowdisplaybreaks{\small\begin{align} \label{eq: MSE with assign. var.}
	\frac{N_{m}\varepsilon_{mj}}{\beta_{mj}} &= N_{m} \Bigl(1 - \frac{\tau p_{j}^{\mathtt{tr}} \beta_{mj}}{\sum_{j'\in\mathcal{T}_{\ul}} \tau p_{j'}^{\mathtt{tr}} \beta_{mj'} |\boldsymbol{\upsilon}_{j}^H\boldsymbol{\Xi}^H\boldsymbol{\Xi}\boldsymbol{\upsilon}_{j'}|^2 + \sigma^2_{\mathtt{AP}}} \Bigr) 
	= N_{m}\Bigl(1 - \frac{\tau p_{j}^{\mathtt{tr}} \beta_{mj}}{\sum_{j'\in\mathcal{T}_{\ul}} \tau p_{j'}^{\mathtt{tr}} \beta_{mj'} |\boldsymbol{\upsilon}_{j}^H\boldsymbol{\upsilon}_{j'}|^2 + \sigma^2_{\mathtt{AP}}} \Bigr) \nonumber \\
	& \overset{[a]}{=} N_{m}\frac{\sum_{j'\in\mathcal{T}_{\ul}\backslash \{j\} } \tau p_{j'}^{\mathtt{tr}} \beta_{mj'} \boldsymbol{\upsilon}_{j}^H\boldsymbol{\upsilon}_{j'} + \sigma^2_{\mathtt{AP}}}{\tau p_{j}^{\mathtt{tr}} \beta_{mj}+\sum_{j'\in\mathcal{T}_{\ul}\backslash \{j\}} \tau p_{j'}^{\mathtt{tr}} \beta_{mj'} \boldsymbol{\upsilon}_{j}^H\boldsymbol{\upsilon}_{j'} + \sigma^2_{\mathtt{AP}}},\quad
	\end{align}}\endgroup
	where $ [a] $ comes from the fact that $ \boldsymbol{\upsilon}_{j}^H\boldsymbol{\upsilon}_{j'}\in\{0,1\},\;\forall j, j' \in\mathcal{T}_{\ul} $, and  $ \boldsymbol{\upsilon}_{j}^H\boldsymbol{\upsilon}_{j'}=1 $ when $ j=j' $. In addition, for arbitrary values of $ x, y \text{ and } z $, such that $ 0\leq z<x<y $, it is true that
$\frac{x-z}{y-z}\leq\frac{x}{y}\leq\frac{x+z}{y+z}.$
	Upon setting
 $ x=\sum_{j'\in\mathcal{T}_{\ul}\backslash \{j\} } \tau p_{j'}^{\mathtt{tr}} \beta_{mj'} \boldsymbol{\upsilon}_{j}^H\boldsymbol{\upsilon}_{j'} + \sigma^2_{\mathtt{AP}} $ and $ y=x + \tau p_{j}^{\mathtt{tr}} \beta_{mj} $,  $ z $ is the amount of disparity in $ x $ when $ \boldsymbol{\upsilon}_{j} $ changes.
	It implies that when $ \boldsymbol{\upsilon}_{j} $ changes, $ x $ and $ y $ vary by the same amount of $ z $. Consequently,  we replace $x/y$ with $ y $ for the ease of solution derivation, and thus the objective \eqref{eq: prob. MSE :: a} can be rewritten as
	\begingroup\allowdisplaybreaks\begin{align}\label{eq:F.2}
	&\underset{j\in\mathcal{T}_{\ul}}{\max} \sum\nolimits_{m\in\mathcal{M}} N_m  \sum\nolimits_{j'\in\mathcal{T}_{\ul} } \tau p_{j'}^{\mathtt{tr}} \beta_{mj'} \boldsymbol{\upsilon}_{j}^H\boldsymbol{\upsilon}_{j'} + \sigma^2_{\mathtt{AP}} 
	= \underset{j\in\mathcal{T}_{\ul}}{\max} \sum\nolimits_{j'\in\mathcal{T}_{\ul} }\boldsymbol{\upsilon}_{j}^H\boldsymbol{\upsilon}_{j'}\sum\nolimits_{m\in\mathcal{M}} N_m  \tau p_{j'}^{\mathtt{tr}} \beta_{mj'}  + \sigma^2_{\mathtt{AP}}. 
	\end{align}\endgroup
{\hili Since $ \sigma^2_{\mathtt{AP}} $ in \eqref{eq:F.2} is the constant, we can arrive at a tractable optimization problem \eqref{eq: prob. MSE quad.}.}

\begingroup
\setstretch{1.0}
\bibliographystyle{IEEEtran}
\bibliography{IEEEfull}

\begin{thebibliography}{10}
\providecommand{\url}[1]{#1}
\csname url@samestyle\endcsname
\providecommand{\newblock}{\relax}
\providecommand{\bibinfo}[2]{#2}
\providecommand{\BIBentrySTDinterwordspacing}{\spaceskip=0pt\relax}
\providecommand{\BIBentryALTinterwordstretchfactor}{4}
\providecommand{\BIBentryALTinterwordspacing}{\spaceskip=\fontdimen2\font plus
\BIBentryALTinterwordstretchfactor\fontdimen3\font minus
  \fontdimen4\font\relax}
\providecommand{\BIBforeignlanguage}[2]{{%
\expandafter\ifx\csname l@#1\endcsname\relax
\typeout{** WARNING: IEEEtran.bst: No hyphenation pattern has been}%
\typeout{** loaded for the language `#1'. Using the pattern for}%
\typeout{** the default language instead.}%
\else
\language=\csname l@#1\endcsname
\fi
#2}}
\providecommand{\BIBdecl}{\relax}
\BIBdecl

\bibitem{Cisco2017_CVNI}
\emph{Cisco Visual Networking Index: Global Mobile Data Traffic Forecast
  Update, 2016-2021}, Mar. 2017.

\bibitem{Osseiran:Cambridge:2016}
A.~Osseiran, J.~F. Monserrat, and P.~Marsch, \emph{5G Mobile and Wireless
  Communications Technology}.\hskip 1em plus 0.5em minus 0.4em\relax Cambridge
  University Press, 2016.

\bibitem{Yadav:IEEEWirelessComm:Aug2018}
A.~Yadav and O.~A. Dobre, ``All technologies work together for good: A glance
  at future mobile networks,'' \emph{IEEE Wireless Commun.}, vol.~25, no.~4,
  pp. 10--16, Aug. 2018.

\bibitem{ZhangMAT5G2019}
\BIBentryALTinterwordspacing
J.~Zhang, E.~Bj{\"{o}}rnson, M.~Matthaiou, D.~W.~K. Ng, H.~Yang, and D.~J.
  Love, ``Multiple antenna technologies for beyond {5G},'' Sept. 2019.
  [Online]. Available: \url{https://arxiv.org/abs/1910.00092}
\BIBentrySTDinterwordspacing

\bibitem{Chatzinotas:IEEETWC:July2009}
S.~{Chatzinotas}, M.~A. {Imran}, and R.~{Hoshyar}, ``On the multicell
  processing capacity of the cellular {MIMO} uplink channel in correlated
  rayleigh fading environment,'' \emph{IEEE Trans. Wireless Commun.}, vol.~8,
  no.~7, pp. 3704--3715, July 2009.

\bibitem{BuzziJSAC16}
S.~{Buzzi} \emph{et~al.}, ``A survey of energy-efficient techniques for {5G}
  networks and challenges ahead,'' \emph{IEEE J. Select. Areas Commun.},
  vol.~34, no.~4, pp. 697--709, Apr. 2016.

\bibitem{Yadav:IEEEVTM:June2018}
A.~{Yadav}, G.~I. {Tsiropoulos}, and O.~A. {Dobre}, ``Full-duplex
  communications: Performance in ultradense mm-wave small-cell wireless
  networks,'' \emph{IEEE Veh. Technol. Mag.}, vol.~13, no.~2, pp. 40--47, June
  2018.

\bibitem{Sharma:IEEECST:2018}
S.~K. {Sharma} \emph{et~al.}, ``Dynamic spectrum sharing in {5G} wireless
  networks with full-duplex technology: Recent advances and research
  challenges,'' \emph{IEEE Commun. Surveys Tutor.}, vol.~20, no.~1, pp.
  674--707, Firstquarter 2018.

\bibitem{Sabharwal:JSAC:Feb2014}
A.~Sabharwal \emph{et~al.}, ``In-band full-duplex wireless: Challenges and
  opportunities,'' \emph{IEEE J. Select. Areas Commun.}, vol.~32, no.~9, pp.
  1637--1652, Feb. 2014.

\bibitem{GoyalCMag15}
S.~{Goyal}, P.~{Liu}, S.~S. {Panwar}, R.~A. {Difazio}, R.~{Yang}, and
  E.~{Bala}, ``Full duplex cellular systems: Will doubling interference prevent
  doubling capacity?'' \emph{IEEE Commun. Mag.}, vol.~53, no.~5, pp. 121--127,
  May 2015.

\bibitem{Bharadia13}
D.~Bharadia, E.~McMilin, and S.~Katti, ``Full duplex radios,'' in \emph{Proc.
  ACM SIGCOMM Computer Commun. Review}, 2013, pp. 375--386.

\bibitem{Bharadia14}
D.~Bharadia and S.~Katti, ``Full duplex {MIMO} radios,'' in \emph{Proc. 11th
  USENIX Symp. Netw. Syst. Design Implement. (NSDI)}, Seattle, WA, USA, 2014,
  pp. 369--372.

\bibitem{Yadav:Access}
A.~Yadav, O.~A. Dobre, and N.~Ansari, ``Energy and traffic aware full duplex
  communications for {5G} systems,'' \emph{IEEE Access}, vol.~5, pp.
  11\,278--11\,290, May 2017.

\bibitem{Dan:TWC:14}
D.~Nguyen, L.-N. Tran, P.~Pirinen, and M.~Latva-aho, ``On the spectral
  efficiency of full-duplex small cell wireless systems,'' \emph{IEEE Trans.
  Wireless Commun.}, vol.~13, no.~9, pp. 4896--4910, Sept. 2014.

\bibitem{Dinh:Access}
V.-D. Nguyen \emph{et~al.}, ``Spectral efficiency of full-duplex multiuser
  system: Beamforming design, user grouping, and time allocation,'' \emph{IEEE
  Access}, vol.~5, pp. 5785--5797, Mar. 2017.

\bibitem{Hieu:IEEETWC:June2019}
H.~V. {Nguyen} \emph{et~al.}, ``Joint antenna array mode selection and user
  assignment for full-duplex {MU-MISO} systems,'' \emph{IEEE Trans. Wireless
  Commun.}, vol.~18, no.~6, pp. 2946--2963, June 2019.

\bibitem{Hieu:IEEETCOM:June2019}
H.~V. {Nguyen} \emph{et~al.}, ``Joint power control and user association for
  {NOMA}-based full-duplex systems,'' \emph{IEEE Trans. Commun.}, vol.~67,
  no.~11, pp. 8037--8055, Nov. 2019, accepted for publication.

\bibitem{Dinh:JSAC:18}
V.-D. Nguyen, H.~V. Nguyen, O.~A. Dobre, and O.-S. Shin, ``A new design
  paradigm for secure full-duplex multiuser systems,'' \emph{IEEE J. Select.
  Areas Commun.}, vol.~36, no.~7, pp. 1480--1498, July 2018.

\bibitem{Dinh:TCOMM:2017}
V.-D. Nguyen \emph{et~al.}, ``Spectral and energy efficiencies in full-duplex
  wireless information and power transfer,'' \emph{IEEE Trans. Commun.},
  vol.~65, no.~5, pp. 2220--2233, May 2017.

\bibitem{Tam:TCOM:16}
H.~H.~M. Tam, H.~D. Tuan, and D.~T. Ngo, ``Successive convex quadratic
  programming for quality-of-service management in full-duplex {MU-MIMO}
  multicell networks,'' \emph{IEEE Trans. Commun.}, vol.~64, no.~6, pp.
  2340--2353, June 2016.

\bibitem{Aquilina:TCOMM:2017}
P.~Aquilina, A.~C. Cirik, and T.~Ratnarajah, ``Weighted sum rate maximization
  in full-duplex multi-user multi-cell {MIMO} networks,'' \emph{IEEE Trans.
  Commun.}, vol.~65, no.~4, pp. 1590--1608, Apr. 2017.

\bibitem{Ngo:TWC:Mar2017}
H.~Q. {Ngo}, A.~{Ashikhmin}, H.~{Yang}, E.~G. {Larsson}, and T.~L. {Marzetta},
  ``Cell-free massive {MIMO} versus small cells,'' \emph{IEEE Trans. Wireless
  Commun.}, vol.~16, no.~3, pp. 1834--1850, Mar. 2017.

\bibitem{Nayebi:IEEETWC:Jul2017}
E.~{Nayebi} \emph{et~al.}, ``Precoding and power optimization in cell-free
  massive {MIMO} systems,'' \emph{IEEE Trans. Wireless Commun.}, vol.~16,
  no.~7, pp. 4445--4459, July 2017.

\bibitem{InterdonatoGLOBECOM16}
G.~{Interdonato}, H.~Q. {Ngo}, E.~G. {Larsson}, and P.~{Frenger}, ``How much do
  downlink pilots improve cell-free massive {MIMO}?'' in \emph{Proc. IEEE
  Global Commun. Conf. (GLOBECOM)}, Dec. 2016, pp. 1--7.

\bibitem{Bashar:IEEETWC:Apr2019}
M.~{Bashar}, K.~{Cumanan}, A.~G. {Burr}, M.~{Debbah}, and H.~Q. {Ngo}, ``On the
  uplink max–min {SINR} of cell-free massive {MIMO} systems,'' \emph{IEEE
  Trans. Wireless Commun.}, vol.~18, no.~4, pp. 2021--2036, Apr. 2019.

\bibitem{Nguyen:IEEELCOM:Aug2017}
L.~D. {Nguyen}, T.~Q. {Duong}, H.~Q. {Ngo}, and K.~{Tourki}, ``Energy
  efficiency in cell-free massive {MIMO} with zero-forcing precoding design,''
  \emph{IEEE Commun. Lett.}, vol.~21, no.~8, pp. 1871--1874, Aug. 2017.

\bibitem{Ngo:IEEETGCN:Mar2018}
H.~Q. {Ngo}, L.~{Tran}, T.~Q. {Duong}, M.~{Matthaiou}, and E.~G. {Larsson},
  ``On the total energy efficiency of cell-free massive {MIMO},'' \emph{IEEE
  Trans. Green Commun. Netw.}, vol.~2, no.~1, pp. 25--39, Mar. 2018.

\bibitem{Marzetta:IEEETWC:Nov2010}
T.~L. {Marzetta}, ``Noncooperative cellular wireless with unlimited numbers of
  base station antennas,'' \emph{IEEE Trans. Wireless Commun.}, vol.~9, no.~11,
  pp. 3590--3600, Nov. 2010.

\bibitem{Marzetta:Cambridge:2016}
T.~L. Marzetta, E.~G. Larsson, H.~Yang, and H.~Q. Ngo, \emph{Fundamentals of
  Massive {MIMO}}.\hskip 1em plus 0.5em minus 0.4em\relax Cambridge University
  Press, 2016.

\bibitem{AuerWC11}
G.~{Auer} \emph{et~al.}, ``How much energy is needed to run a wireless
  network?'' \emph{IEEE Wireless Commun.}, vol.~18, no.~5, pp. 40--49, Oct.
  2011.

\bibitem{ChaliseTCOM17}
B.~K. Chalise \emph{et~al.}, ``Beamforming optimization for full-duplex
  wireless-powered {MIMO} systems,'' \emph{IEEE Trans. Commun.}, vol.~65,
  no.~9, pp. 3750--3764, Sept. 2017.

\bibitem{Vu:ICC:May2019}
T.~T. {Vu} \emph{et~al.}, ``Full-duplex cell-free massive {MIMO},'' in
  \emph{Proc. IEEE Inter. Conf. Commun. (ICC)}, May 2019, pp. 1--6.

\bibitem{Wang:arxiv:2019}
\BIBentryALTinterwordspacing
D.~Wang, M.~Wang, P.~Zhu, J.~Li, J.~Wang, and X.~You, ``Performance of
  network-assisted full-duplex for cell-free massive {MIMO},'' May 2019.
  [Online]. Available: \url{http://arxiv.org/abs/1905.11107}
\BIBentrySTDinterwordspacing

\bibitem{SpencerTSP04}
Q.~H. {Spencer}, A.~L. {Swindlehurst}, and M.~{Haardt}, ``Zero-forcing methods
  for downlink spatial multiplexing in multiuser {MIMO} channels,'' \emph{IEEE
  Trans. Signal Process.}, vol.~52, no.~2, pp. 461--471, Feb. 2004.

\bibitem{Marks:78}
B.~R. Marks and G.~P. Wright, ``A general inner approximation algorithm for
  nonconvex mathematical programs,'' \emph{Operations Research}, vol.~26,
  no.~4, pp. 681--683, July-Aug. 1978.

\bibitem{Beck:JGO:10}
A.~Beck, A.~Ben-Tal, and L.~Tetruashvili, ``A sequential parametric convex
  approximation method with applications to nonconvex truss topology design
  problems,'' \emph{J. Global Optim.}, vol.~47, no.~1, pp. 29--51, May 2010.

\bibitem{Dinkelbach67}
W.~Dinkelbach, ``On nonlinear fractional programming,'' \emph{Manage. Sci.},
  vol.~13, no.~7, pp. 492--498, Mar. 1967.

\bibitem{Duarte:TWC:12}
M.~Duarte, C.~Dick, and A.~Sabharwal, ``Experiment-driven characterization of
  full-duplex wireless systems,'' \emph{IEEE Trans. Wireless Commun.}, vol.~11,
  no.~12, pp. 4296--4307, Dec. 2012.

\bibitem{Tombaz:IEEEWC:Oct2011}
S.~{Tombaz} \emph{et~al.}, ``Energy- and cost-efficient ultra-high-capacity
  wireless access,'' \emph{IEEE Wireless Commun.}, vol.~18, no.~5, pp. 18--24,
  Oct. 2011.

\bibitem{Bjornson:IEETWC:June2015}
E.~{Björnson} \emph{et~al.}, ``Optimal design of energy-efficient multi-user
  {MIMO} systems: Is massive {MIMO} the answer?'' \emph{IEEE Trans. Wireless
  Commun.}, vol.~14, no.~6, pp. 3059--3075, June 2015.

\bibitem{Ben:2001}
A.~Ben-Tal and A.~Nemirovski, \emph{Lectures on Modern Convex
  Optimization.}\hskip 1em plus 0.5em minus 0.4em\relax Philadelphia: MPS-SIAM
  Series on Optimi., SIAM, 2001.

\bibitem{Tse:book:05}
D.~Tse and P.~Viswanath, \emph{Fundamentals of Wireless Communication.}\hskip
  1em plus 0.5em minus 0.4em\relax Cambridge Univ. Press, UK, 2005.

\bibitem{KayBook93}
S.~M. Kay, \emph{Fundamentals of Statistical Signal Processing: Estimation
  Theory}.\hskip 1em plus 0.5em minus 0.4em\relax Upper Saddle River, NJ, USA:
  Prentice-Hall, Inc., 1993.

\bibitem{Cormen:2001:IA:500824}
T.~H. Cormen \emph{et~al.}, \emph{Introduction to Algorithms}, 2nd~ed.\hskip
  1em plus 0.5em minus 0.4em\relax Cambridge, MA, USA: MIT Press, 2001.

\bibitem{Korpi:IEEETAP:Feb2017}
D.~{Korpi}, M.~{Heino}, C.~{Icheln}, K.~{Haneda}, and M.~{Valkama}, ``Compact
  inband full-duplex relays with beyond 100 {dB} self-interference suppression:
  Enabling techniques and field measurements,'' \emph{IEEE Trans. Antennas
  Propagation}, vol.~65, no.~2, pp. 960--965, Feb. 2017.

\bibitem{TangVTC01}
{Ao Tang}, {JiXian Sun}, and {Ke Gong}, ``Mobile propagation loss with a low
  base station antenna for {NLOS} street microcells in urban area,'' in
  \emph{Proc. IEEE Veh. Tech. Conf. (VTC Spring).}, May 2001, pp. 333--336.

\bibitem{Kim:IEEEAccess:2019}
H.~M. {Kim}, H.~V. {Nguyen}, G.-M. {Kang}, Y.~{Shin}, and O.-S. {Shin},
  ``Device-to-device communications underlaying an uplink {SCMA} system,''
  \emph{IEEE Access}, vol.~7, pp. 21\,756--21\,768, Feb. 2019.

\end{thebibliography}
\endgroup

\end{document}